\documentclass[11pt]{article}
\usepackage{tikz,graphicx,xcolor}
\usetikzlibrary{arrows.meta,positioning,calc,shapes.geometric}
\usepackage{placeins}
\usepackage[T1]{fontenc}
\usepackage{lmodern}
\usepackage{authblk}
\usepackage[margin=1in]{geometry}
\usepackage{amsmath,amssymb,amsthm,mathtools,mathrsfs}
\usepackage{booktabs,enumitem,etoolbox,needspace,microtype}
\usepackage{cite}
\usepackage[colorlinks=true,linkcolor=blue,citecolor=blue,urlcolor=blue]{hyperref}
\newtheorem{theorem}{Theorem}[section]
\newtheorem{lemma}[theorem]{Lemma}
\newtheorem{proposition}[theorem]{Proposition}
\newtheorem{corollary}[theorem]{Corollary}
\theoremstyle{definition}
\newtheorem{definition}[theorem]{Definition}
\theoremstyle{remark}

\newtheorem*{remark*}{Remark}
\BeforeBeginEnvironment{theorem}{\Needspace{6\baselineskip}}
\BeforeBeginEnvironment{lemma}{\Needspace{6\baselineskip}}
\BeforeBeginEnvironment{proposition}{\Needspace{6\baselineskip}}
\BeforeBeginEnvironment{corollary}{\Needspace{5\baselineskip}}
\newcommand{\PM}{\operatorname{pm}}
\newcommand{\per}{\operatorname{per}}
\newcommand{\diag}{\operatorname{diag}}
\newcommand{\rank}{\operatorname{rank}}
\newcommand{\tr}{\operatorname{tr}}
\newcommand{\poly}{\operatorname{poly}}
\newcommand{\sharpP}{\mathord{\#}\mathsf P}
\newcommand{\IS}{\mathord{\#}\mathrm{IS}}
\newcommand{\UIPM}{\mathord{\#}\mathrm{UnitIntervalPM}}
\newcommand{\Word}{\mathrm{WordEval}}
\newcommand{\cS}{\mathcal S}
\newcommand{\R}{\mathsf R}
\newcommand{\D}{\mathsf D}
\newcommand{\B}{\mathsf B}
\newcommand{\E}{\mathsf E}
\newcommand{\cA}{\mathcal A}
\newcommand{\cF}{\mathcal F}
\newcommand{\cE}{\mathcal E}
\newcommand{\cL}{\mathcal L}
\newcommand{\ind}[1]{\mathbf 1_{\{#1\}}}
\newcommand{\enc}[1]{\widehat{#1}}
\DeclareMathOperator{\wt}{wt}
\allowdisplaybreaks[1]
\setlist[enumerate]{itemsep=3pt,topsep=5pt}

\newcommand{\historycell}[2]{\raggedright\strut#1\newline\strut#2\strut}
\newenvironment{historytable}{%
  \small
  \begin{tabular}{@{}p{.34\linewidth}@{\hspace{.04\linewidth}}p{.62\linewidth}@{}}
  \toprule
  \textbf{Work} & \textbf{Progress} \tabularnewline
  \midrule
}{%
  \bottomrule
  \end{tabular}
}

\title{Hidden Circuits and Exact Counting in Ordered Graphs}
\author[1]{Chenghua Liu}
\author[2]{Boning Meng}
\affil[1]{Institute of Software, Chinese Academy of Sciences, Beijing, China}
\affil[2]{University of Regensburg, Regensburg, Germany}
\affil[ ]{\texttt{liuch.russell@gmail.com},
  \texttt{mengboning2013@gmail.com}}
\date{}
\begin{document}
\maketitle
\begin{abstract}
%We prove that exact perfect-matching counting is $\sharpP$-complete
%under polynomial-time Turing reductions on simple, unweighted monotone,
%unit interval, and chordal permutation graphs.
%For distance-hereditary graphs on $n$ vertices, we give an exact
%perfect-matching counting algorithm using $O(n^2)$ arithmetic operations,
%improving the $O(n^4)$ bound obtained from the algorithm of
%Curticapean and Marx (SODA, 2016).
%Together with prior results and our inclusion of QChains in the
%distance-hereditary graphs, these results complete the exact-counting
%classification of all classes in the graph-class diagram of
%Dyer and M\"uller (SIAM Journal on Discrete Mathematics, 2019).
We show that exact perfect-matching counting remains hard under strong ordering restrictions: it is $\sharpP$-complete on each of three classes of simple, unweighted graphs defined by such restrictions---monotone graphs, unit interval graphs, and chordal permutation graphs. The monotone result settles the exact-counting complexity left open by Dyer, Jerrum, and M\"uller (JACM 2017), complementing their rapid-mixing theorem. Inspired by quantum circuits, our reductions implement a circuit simulation using globally coupled matching-transfer operators. The key construction is an exact projection, implemented by a polynomial-length sequence of normalized transfers, that restores tensor-product locality and makes encoded gates composable. Interpolation-based cancellation then reduces circuit evaluation to unweighted perfect-matching counts in all three classes. We also place Dyer and M\"uller's class QChains within the distance-hereditary graphs and give an $O(n^2)$-arithmetic-operation counting algorithm for the latter, improving the $O(n^4)$ bound obtainable from Curticapean and Marx (SODA 2016). Together with prior results, these advances complete the exact-counting classification of the graph classes in Dyer and M\"uller's diagram (SIDMA 2019).
\end{abstract}

\section*{Acknowledgments}

Boning Meng is deeply grateful to Yixin Cao for their early discussions
of this problem. These exchanges greatly deepened his understanding of
the problem and provided invaluable inspiration for its solution.
We invited him to join us as a coauthor, but he modestly declined.

The authors used OpenAI's ChatGPT in preparing
this manuscript, including for language editing and \LaTeX{} preparation.
The authors also used ChatGPT to explore proof ideas and develop arguments.
All claims and proofs were
independently checked and finalized by the authors, who take full
responsibility for the content.

\section{Introduction}\label{sec:intro}

A perfect matching partitions a set of objects into compatible pairs.
Finding one such partition is a classical polynomial-time problem.
To study all such partitions, we may count them exactly, approximate
their number, or sample one from a nearly uniform distribution.
Approximate counting and nearly uniform sampling are closely related
through self-reduction, provided the required smaller instances remain
accessible to the algorithms~\cite{JVV86,SJ89}.
Samples estimate the fraction of perfect matchings containing a chosen
edge, and suitable estimates on successively smaller graphs can be
combined to approximate the total count.
Conversely, approximate counts of the possible completions guide the
probabilities for choosing edges when generating a random matching.

For unrestricted bipartite graphs, exact counting is
$\sharpP$-complete~\cite{Valiant79}, whereas the permanent algorithm of
Jerrum, Sinclair, and Vigoda gives a fully polynomial randomized
approximation scheme (FPRAS)~\cite{JSV04}.
For general nonbipartite graphs, the existence of an FPRAS remains
open~\cite{SVW18}, although positive results cover several restricted
families. These include graphs of even order $n$ and minimum degree
at least $n/2$~\cite{JS89}, certain classes with restricted odd-cycle
structure~\cite{SVW18}, and regular expander families satisfying suitable
degree and spectral conditions~\cite{ENO22}.
Many such algorithms use sampling, but approximation need not proceed
this way. For example, Barvinok gives deterministic quasipolynomial-time
relative approximations for weighted perfect-matching counts when all
edge weights of a complete graph lie between two fixed positive
constants~\cite{Barvinok17}.

Our focus is on which structural restrictions make \emph{exact} counting
tractable. Planarity permits exact counting through the FKT
method~\cite{Kasteleyn61,TemperleyFisher61}, and nested neighborhoods give
exact formulas on chain graphs~\cite{OUU10}.
Thilikos and Wiederrecht~\cite{ThilikosWiederrecht24} recently obtained a
complete exact-counting classification for minor-closed graph classes.
Here we investigate structure arising from the order of vertices.

Such \emph{ordering restrictions} require that vertices admit one or
more orderings in which adjacency follows specified rules.
For example, when assigning objects to positions, the allowed positions
for each object may form a consecutive interval, with both interval
boundaries moving only to the right as the objects are considered in
order. The resulting compatibility graphs are monotone graphs, and
their perfect matchings are restricted permutations, a model used in
statistical permutation tests~\cite{DGH01}.
A nonbipartite example comes from pairing equal-length intervals on a
line, with overlap determining compatibility. Ordering the intervals by
their left endpoints constrains which pairs can overlap, but does not
require pairs to join two prescribed sides of a bipartition.
These rules organize adjacency while still allowing dense graphs.

For these ordered classes, existing sampling results also yield
approximate counting algorithms.
Building on the restricted-permutation work of Diaconis, Graham, and
Holmes~\cite{DGH01}, Dyer, Jerrum, and M\"uller~\cite{DJM17} proved rapid
mixing of the switch chain on monotone graphs. This chain samples
perfect matchings by repeatedly exchanging two matched pairs.
Dyer and M\"uller~\cite{DM19} extended the approach to quasimonotone
graphs, a nonbipartite class containing all unit interval graphs.
These classes are closed under vertex deletion, so the sampling
algorithms remain available on the smaller graphs needed for
self-reduction. Despite these sampling and approximation guarantees,
the exact-counting questions raised in~\cite{OUU10,DJM17} remained
unresolved.

We prove that exact perfect-matching counting is $\sharpP$-complete on
monotone graphs and unit interval graphs, and establish hardness even
on chordal permutation graphs, which admit both interval and
permutation representations. We also obtain an exact algorithm for
Dyer and M\"uller's QChains class~\cite[Section~3.4]{DM19} by placing it within the
distance-hereditary graphs.
The three hardness proofs share a quantum-circuit-inspired construction.

\subsection{Main results}\label{sec:intro-results}

On the hardness side, we prove the following.

\begin{theorem}\label{thm:main}
Exact perfect-matching counting is $\sharpP$-complete under $\le_T$
on each of the following classes:
\begin{enumerate}[label=(\roman*)]
\item monotone graphs, equivalently bipartite permutation graphs;
\item unit interval graphs;
\item chordal permutation graphs.
\end{enumerate}
\end{theorem}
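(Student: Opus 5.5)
Membership in $\sharpP$ is immediate in all three cases, since a perfect matching is a polynomially checkable witness. The work is hardness, and the plan follows the route sketched in the abstract. The source problem will be exact evaluation of a polynomial-size circuit over a fixed gate set on a few "wires," with $\sharpP$-hard output; $\Word$ suggests this takes the form of evaluating a word in a small set of matrices. The target is a reduction to weighted perfect-matching counts on ordered graphs, after which the weights are removed by interpolation.

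\textbf{Step 1: matching-transfer operators.} Cut an ordered graph (monotone, unit interval, or chordal permutation) by a sweep line in the vertex order. The state at a cut is the set of "open" vertices left of the cut that are matched across it. Because adjacency is interval-like, these states are essentially counts, or boundary profiles, so the number of perfect matchings becomes a product of transfer matrices, one per local gadget. The first task is to show that each class can realize a generating set of such transfers with polynomially bounded nonnegative integer (or indeterminate) edge weights. Weights would be simulated by parallel twins or by blow-ups that preserve the ordering property.

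\textbf{Step 2: exact projection.} This is the main obstacle. The transfers are globally coupled: one sweep line sees all wires at once, so a gadget acting on one encoded "qubit" disturbs the others. I would first design an encoding of circuit basis states into cut states, for instance distinct occupancy numbers in well-separated ranges. I would then construct a polynomial-length product of normalized transfers that acts as the identity on the code subspace and kills everything else. Concretely, I would take a product of operators of the form $(T-\lambda_j I)/(\mu-\lambda_j)$ over the unwanted eigenvalues $\lambda_j$, a Lagrange-type projector, where each factor must itself be realizable by a gadget up to scalars. This restores tensor-product locality, so that encoded gates compose and the whole circuit value appears as a single matrix entry times a known scalar.

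\textbf{Step 3: interpolation and class-specific embedding.} Because the normalizations introduce signs and rationals that gadgets cannot realize directly, I would introduce a formal weight $x$ on designated edges and compute the count as a polynomial in $x$ of polynomial degree. The counts at polynomially many small integer values, each realized by an unweighted graph in the class, recover all coefficients through a Vandermonde solve; this gives the $\le_T$ reduction, and the circuit value is a fixed linear combination of these coefficients, where the cancellation happens. Finally, the three constructions must be checked for class membership:
\begin{itemize}
\item for (i), a biadjacency matrix with staircase intervals;
\item for (ii), a proper interval ordering;
\item for (iii), a common interval-and-permutation model built from the unit interval construction by adjusting endpoints so that no induced cycle of length at least $4$ arises.
\end{itemize}
Each of these is a routine but delicate verification.
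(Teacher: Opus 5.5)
Your outline follows the abstract, but the step you yourself call the main obstacle is left unsolved, and the mechanism you propose for it would not work. A Lagrange-type product $\prod_j(T-\lambda_jI)/(\mu-\lambda_j)$ needs the unwanted part of the spectrum, with its nilpotency indices, to be known and of polynomial total size. These operators act on a $\binom{4k}{2k}$-dimensional subset-state space and need not be diagonalizable; already the background transfer $\cA_U\cA_L$ is unipotent but not the identity. Moreover, each factor $T-\lambda_jI$ is a linear combination rather than a product of transfers, so no single perfect-matching query realizes it. Expanding such projectors inserted between all gates then gives exponentially many terms. Shared-parameter interpolation handles repeated occurrences of a diagonal operator with constantly many eigenvalues, which is how the paper treats the $4\times4$ logical gate $\Delta$ in Lemma~\ref{lem:spectral}, but it does not apply to physical transfers.

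The missing idea is one-way flow. A letter with index $i$ fixes every track left of $i$ and can raise a prefix count only at the cut between tracks $i$ and $i+1$ (Lemma~\ref{lem:flow}). Hence a word confined to a block never increases that block's particle count, and when the count is preserved it restricts exactly to the local word (Lemma~\ref{lem:restriction}). The paper combines this with a constant-length four-track filter $\Theta$ that annihilates the $0$- and $1$-particle sectors and is a rank-two idempotent on the $2$-particle sector. One pass $\mathcal F_k=\Theta_0\cdots\Theta_{k-1}$ is then block upper triangular in the block-count vector, with the single nonzero diagonal block $(\Theta^{[2]})^{\otimes k}$. Since $\sum_b bq_b$ strictly increases along off-diagonal steps and has range at most $2k(k-1)$, the plain power $\mathcal F_k^{2k(k-1)+2}$ is already an exact projection of rank $2^k$. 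The same prefix-count conservation at block boundaries gives tensor-product locality of the encoded gates; a projection by itself does not.

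The other steps also have concrete problems. If cut states were ``essentially counts,'' the transfer product would have polynomial dimension and would yield a polynomial-time algorithm. Hardness needs the exponential space of subsets of a layer with unboundedly many vertices. Likewise, if ``a few wires'' means boundedly many, the circuit is easy to evaluate; the paper uses one logical bit per vertex in a reduction from $\IS$ through the constraints $\mathsf N$ and $\mathrm{CZ}$.

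The normalization in each letter is by an inverse transfer $F_p(U)^{-1}$, not by an edge weight, so a formal edge weight $x$ cannot produce it. The paper uses unipotence of $\cA_U\cA_L$ and interpolates a shared repetition count $t$ to $t=-1$ (Lemma~\ref{lem:paired-transfer-interface}).

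Finally, the geometric models force unwanted edges. In an interval model, all vertices straddling a sweep point form a clique, so the count is not a transfer product until these intralayer edges are removed. In the paper's monotone diagram, one cut per pair comes out complemented. The paper cancels both effects by probe blow-ups of a common size $s$, whose normalized matching count is a polynomial in $s$ evaluated at $s=-1$ (Lemmas~\ref{lem:bipartite-probe} and~\ref{lem:clique-probe}).

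For (iii), any interval model is automatically chordal, so adjusting endpoints to avoid induced cycles addresses the wrong constraint. What is needed is a simultaneous permutation diagram. The paper builds this directly, with clique layers, uncomplemented cuts, and a private clique probe per layer, and then verifies chordality by a perfect elimination ordering.
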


% The complete TikZ landscape is inserted here by the manuscript builder.
% Exact-counting overlay on Dyer--Mueller (2019), Appendix Fig. A.1.
% Topology and Bezier control points transcribed from the authors' public
% NonbipartiteArxivV3.tex, arXiv:1705.05790v3, lines 2247--2319.
% Public source: https://arxiv.org/src/1705.05790
% Journal accepted PDF: https://eprints.whiterose.ac.uk/id/eprint/137388/7/Dyer%20NonbipartiteSIDMA-final.pdf
% Requirements in the parent preamble: \usepackage{tikz,graphicx,xcolor}.
% Citation key: DM19.  No original sampling, mixing, or P-stability symbols
% are retained. All 31 nodes remain; the erroneous Cograph--OddChordal
% containment is omitted, as explained in the caption's footnote.
\begingroup
\definecolor{pmOldFP}{HTML}{0072B2}
\definecolor{pmOldHard}{HTML}{D55E00}
\definecolor{pmNewFP}{HTML}{009E73}
\definecolor{pmNewHard}{HTML}{AA4499}
\tikzset{
  pmclass/.style={rectangle,rounded corners=2pt,fill=white,
    inner xsep=4pt,inner ysep=3pt,line width=1.0pt},
  pmpriorfp/.style={draw=pmOldFP},
  pmpriorhard/.style={draw=pmOldHard},
  pmnewfp/.style={draw=pmNewFP,line width=1.3pt},
  pmnewhard/.style={draw=pmNewHard,line width=1.3pt}
}
\begin{figure}[htbp]
  \begin{minipage}{\textwidth}
  \renewcommand{\thempfootnote}{\arabic{mpfootnote}}
  \centering
  \resizebox{.885\textwidth}{!}{%
  \begin{tikzpicture}[xscale=4.0,yscale=1.49,font=\footnotesize\scshape,line width=.55pt]
    \draw[double] (1,6)--(2,7)  (1,3)--(3,4)  (1,1)--(3,2)  (1,0)--(3,1);
    \node[pmclass,pmpriorhard] at (1,10) (oh-f) {OddHoleFree};
    \node[pmclass,pmpriorhard] at (2,10) (eh-f) {EvenHoleFree};
    \node[pmclass,pmpriorhard] at (1,9) (perf) {Perfect};
    \node[pmclass,pmpriorhard] at (2,9) (noch) {Switchable}; \draw (eh-f)--(noch);
    \node[pmclass,pmpriorhard] at (3,8) (wkch) {WeakChordal};        \draw (eh-f)--(wkch)--(perf);
    \node[pmclass,pmpriorhard] at (1,7) (bip) {Bipartite};  \draw (oh-f)--(perf)--(bip);
    \node[pmclass,pmpriorhard] at (2,7) (oddch) {OddChordal};         \draw (noch)--(oddch);
    \node[pmclass,pmpriorhard] at (3,7) (chord) {Chordal};       \draw (wkch)--(chord);
    \node[pmclass,pmpriorhard] at (1,6) (chbip) {ChordalBipartite};  \draw (bip)--(chbip);
    \draw (wkch)  .. controls (1.5,7.3) .. (chbip);
    \node[pmclass,pmpriorfp] at (2,6) (tw<w) {TreeWidth $w$};
    \node[pmclass,pmpriorhard] at (3,6) (strch) {StrongChordal};       \draw (oddch)--(strch)--(chord);
    \node[pmclass,pmpriorhard] at (4,6) (split) {Split};
    \draw (chord) to[bend left=40] (split);
    \node[pmclass,pmnewhard] at (1,5) (conv) {Convex};        \draw (chbip)--(conv);
    \node[pmclass,pmpriorfp] at (2,5) (tree) {Forest};        \draw (tw<w)--(tree)--(strch);
    \draw (chbip) to[bend right] (tree);
    \node[pmclass,pmpriorhard] at (3,5) (scs) {StrongChordalSplit};
                                                  \draw (strch)--(scs)--(split);
    \node[pmclass,pmnewhard] at (1,4) (bic) {Biconvex};  \draw (conv)--(bic);
    \node[pmclass,pmnewhard] at (2,4) (perm) {Permutation};
    \node[pmclass,pmnewhard] at (3,4) (qmon) {QMonotone};
    \node[pmclass,pmnewhard] at (4,4) (int) {Interval};
    \draw (strch) to[bend left=20] (int);
    \node[pmclass,pmnewhard] at (1,3) (mono) {Monotone};        \draw (bic)--(mono)--(perm);
    \node[pmclass,pmnewhard] at (3,3) (chp) {ChordalPermutation};  \draw (perm)--(chp)--(int);
    \node[pmclass,pmpriorfp] at (1,2) (Efcb) {E-Free}; \draw (mono)--(Efcb);
    \node[pmclass,pmpriorfp] at (2,2) (cogr) {Cograph};\draw (perm)--(cogr);
    \node[pmclass,pmnewfp] at (3,2) (qchs) {QChains};
    \draw (perm) to[bend right=25] (qchs);
    \draw (qmon) to[bend right=50] (qchs);
    \node[pmclass,pmnewhard] at (4,2) (unii) {UnitInterval};        \draw (int)--(unii);
    \draw (unii) to[bend right=20] (qmon);
    \node[pmclass,pmpriorfp] at (1,1) (chain) {Chain};\draw (Efcb)--(chain);
    \node[pmclass,pmpriorfp] at (2,1) (thre) {Threshold}; \draw (cogr)--(thre);
    \draw (int)   .. controls (3.3,1.6) .. (thre);
    \draw (scs)   .. controls (2.3,3.7) .. (2.3,3.0)
                  .. controls (2.3,2.6) .. (2.4,2.4)
                  .. controls (2.5,2.2) .. (2.5,1.8)
                  .. controls (2.5,1.5) .. (thre);
    \node[pmclass,pmpriorfp] at (3,1) (qcbs) {QCompletes};  \draw (qchs)--(qcbs)--(cogr);
    \node[pmclass,pmpriorfp] at (4,1) (c-chn) {Cochain}; \draw (unii)--(c-chn)--(chp);
    \node[pmclass,pmpriorfp] at (1,0) (cb) {CompleteBipartite};          \draw (chain)--(cb)--(cogr);
    \node[pmclass,pmpriorfp] at (3,0) (compl) {Complete};       \draw (qcbs)--(compl);
    \draw (thre)--(compl)--(c-chn);
    \draw (wkch)  .. controls (2.55,7.0) .. (2.55,6.0)
                  .. controls (2.55,5.7) .. (perm);
    \draw (oddch) .. controls (2.55,6.1) .. (2.4,5.5)
                  .. controls (2.25,4.9) .. (qmon);
    \draw[dotted]   (bip) .. controls (1.2,6.5) .. (2.5,6.5)
                          .. controls (3.8,6.5) .. (split);
    \draw[dotted] (chbip) .. controls (1.3,5.5) .. (2.0,5.5)
                          .. controls (2.8,5.5) .. (scs);
    \draw[dotted] (chain)--(thre);
  \end{tikzpicture}
  }
  \par\medskip
  \begin{tikzpicture}[font=\footnotesize]
    \node[pmclass,pmpriorfp] at (0,0) {Prior FP};
    \node[pmclass,pmpriorhard] at (3.3,0) {Prior $\#\mathrm P$-complete};
    \node[pmclass,pmnewfp] at (6.6,0) {New FP};
    \node[pmclass,pmnewhard] at (9.9,0) {New $\#\mathrm P$-complete};
  \end{tikzpicture}
  \caption[Exact perfect-matching counting in the graph-class diagram.]{\label{fig:exact-landscape}
  Exact perfect-matching counting on the classes in
  Dyer and M\"uller~\cite[Appendix, Fig.~A.1]{DM19}.
  The original node positions and all relations except the erroneous
  Cograph--OddChordal inclusion\protect\hyperlink{fn:exact-landscape-correction}{\protect\footnotemark[1]}
  are retained. Frame colors
  distinguish prior results from the classifications established here,
  including their consequences by inclusion.
  Ordinary and double lines run from a smaller class below to a larger
  class above.  Double lines specifically denote inclusion in the
  quasi-class after closure under disjoint union.
  Dotted lines denote linked bipartite/split classes, obtained by completing
  one bipartition side to a clique. These dotted links do not denote
  inclusions.
  Only exact counting is annotated.
  We retain the source's convention that \textsc{EvenHoleFree} forbids
  induced even cycles of length at least six, and $w\ge1$ is fixed in
  \textsc{TreeWidth}~$w$.  All completeness assertions use polynomial-time
  Turing reductions.}
  \footnotetext[1]{\hypertarget{fn:exact-landscape-correction}{}%
  We omit one erroneous inclusion from the source diagram,
  $\mathrm{Cograph}\subseteq\mathrm{OddChordal}$.
  The same paper gives a counterexample~\cite[Section~2, Fig.~2.7]{DM19}:
  join every vertex of two disjoint edges to two independent vertices.
  This is a cograph, but it has a spanning six-cycle whose chords
  all join positions of the same parity, so it is not odd chordal.
  This correction does not change any of the counting classifications.}
  \end{minipage}
\end{figure}
\endgroup

Here $\le_T$ denotes polynomial-time Turing reductions. All query
graphs are simple and unweighted, and the reductions supply valid
interval or permutation representations. Thus the hardness holds even
when these representations are part of the input. By inclusion, the theorem
also settles interval, permutation, convex, biconvex, and
quasimonotone graphs. The chordal permutation result is stronger
than either superclass conclusion alone: it establishes hardness
on the intersection of interval and permutation graphs.

These results resolve the open exact-counting problems for interval
and unit interval graphs in~\cite{OUU10,KOU11}, and for convex,
biconvex, and monotone graphs in~\cite[Section~2.1]{DJM17}.
Their significance for sampling is developed in
Section~\ref{sec:intro-impact}, followed by the common circuit-based
mechanism in Section~\ref{sec:intro-methods}.

On the tractable side, we prove
$\mathsf{QChains}\subsetneq\mathsf{DH}$
(Lemma~\ref{lem:qchains-dh}), where $\mathsf{DH}$ denotes the
distance-hereditary graphs and QChains is the quasi-class of
disjoint unions of chain graphs defined by Dyer and
M\"uller~\cite{DM19}. This inclusion places QChains within the
scope of existing bounded-clique-width counting
algorithms~\cite{GolumbicRotics00,MRAG06,CurticapeanMarx16}.
We also give a direct algorithm on all distance-hereditary graphs:
after $O(n+m)$ graph preprocessing, it uses $O(n^2)$ integer
arithmetic operations on $O(n\log n)$-bit integers
(Theorem~\ref{thm:qchains-fp}), improving the arithmetic-operation
count over the generic $O(n^4)$ bound~\cite[Theorem~1.3]{CurticapeanMarx16}
obtained from a $3$-expression~\cite{GolumbicRotics00}.

Together with prior results, these theorems complete the
polynomial-time versus $\sharpP$-complete classification of the
classes in Dyer and M\"uller's graph-class diagram~\cite{DM19}.
Figure~\ref{fig:exact-landscape} records this classification,
distinguishing direct results and their consequences as explained
in the text. The new hard cases follow from the theorem, and the
remaining tractable case follows from the QChains inclusion.
The prior classifications include consequences of graph-class inclusions
and the perfect-matching-preserving transformations between linked
classes described in~\cite[Appendix]{DM19}. The underlying hardness
results and exact algorithms appear in~\cite{OUU10,DM15,DM19}.

\subsection{Impact on exact counting, sampling, and approximation}
\label{sec:intro-impact}

Our results answer three different structural questions about perfect
matchings: whether consecutive allowed positions simplify counting
permutations, whether one-dimensional geometry simplifies counting
pairings, and whether two simultaneous ordered representations suffice
for an exact algorithm. Each question has its own history, and each of
our direct hardness theorems supplies a missing answer. We describe these
connections separately, before turning to the tractable boundary.

All completeness statements below use polynomial-time Turing reductions
and concern simple, unweighted inputs. A \emph{fully polynomial
almost-uniform sampler} (FPAUS) samples within total variation distance
$\varepsilon$ of uniform in time polynomial in the input length and
$\log(1/\varepsilon)$. An FPRAS estimates a count within relative error
$\varepsilon$ with probability at least $3/4$, in time polynomial in the
input length and $1/\varepsilon$~\cite{JVV86,SJ89}.
The approximation and sampling guarantees below are prior results or
their standard self-reduction consequences. Our new conclusions concern
exact complexity. For sampling, we first decide in polynomial time
whether a perfect matching exists. If none exists, the count is zero.
Otherwise, the sampling guarantees apply. The switch chains used
below have the uniform distribution on perfect matchings as their
stationary distribution. Their rapid-mixing bounds imply that, starting
from any perfect matching found in polynomial time, a number of steps
polynomial in the input length and $\log(1/\varepsilon)$ suffices to
reach total variation distance at most $\varepsilon$ from uniform.
This gives the FPAUS guarantees~\cite{DJM17,DM19}.

\subsubsection{Restricted permutations and monotone constraints}
\label{sec:impact-permutations}

Permutation tests compare observed data with random reassignments, but
observation constraints can rule out most reassignments. To calibrate such
a test, one can sample uniformly from the admissible permutations.
Their number is the normalization constant of this distribution. Diaconis,
Graham, and Holmes~\cite{DGH01} studied this problem for truncated data,
where the restrictions naturally take the form of intervals of allowed
positions.

Concretely, object $i$ may occupy a position in
$[a_i,b_i]\subseteq[n]$, and every position must be used exactly once.
The counting problem is
\[
 Z(a,b)=\#\{\pi\in S_n:a_i\leq\pi(i)\leq b_i\text{ for every }i\}.
\]
If both endpoint sequences are nondecreasing, the allowed-position
matrix defines a \emph{monotone graph}, equivalently a bipartite
permutation graph~\cite{DM15,DJM17}. The corresponding $0$--$1$ matrix has consecutive
ones in every row, with both row boundaries nondecreasing. It is natural to ask whether these restrictions permit
an exact formula or a polynomial-time dynamic program, as happens for
smaller families with nested allowed sets~\cite{OUU10,DM15}.

The general permanent FPRAS~\cite{JSV04,BSVV08} already applies to
these inputs, but exploiting their order allows a much simpler sampler:
exchange two assigned
positions if the resulting permutation remains admissible. This is the
\emph{switch chain} on perfect matchings. Dyer, Jerrum, and
M\"uller~\cite{DJM17} proved that it mixes in polynomial time on monotone
graphs. Exact counting on this class was left open
in~\cite{OUU10,DJM17}. Table~\ref{tab:history-monotone} summarizes the
algorithmic results alongside our hardness theorem. Combining the
prior sampling and approximation guarantees with our exact-counting
result gives the following contrast for restricted permutations.

\begin{corollary}[Monotone restricted permutations]
\label{cor:monotone-sampling}
For nondecreasing endpoint sequences $a$ and $b$, computing $Z(a,b)$
exactly is $\sharpP$-complete under polynomial-time Turing reductions.
In contrast, prior results give an FPAUS for admissible permutations
and an FPRAS for $Z(a,b)$.
\end{corollary}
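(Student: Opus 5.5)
The corollary follows from Theorem~\ref{thm:main}(i) once we check that counting perfect matchings of a monotone graph is the same problem as computing $Z(a,b)$ for nondecreasing $a,b$, with translations in both directions computable in polynomial time. The plan has three steps.

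\textbf{Step 1: membership in $\sharpP$.} Given $(a,b)$, a witness is a permutation $\pi\in S_n$. We can check $a_i\le\pi(i)\le b_i$ for all $i$ in polynomial time, so $Z(a,b)\in\sharpP$.

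\textbf{Step 2: hardness.} Let $G$ be a monotone graph with sides $U=\{u_1,\dots,u_n\}$ and $W=\{w_1,\dots,w_n\}$. We may assume $|U|=|W|$, since otherwise $G$ has no perfect matching and the count is $0$. Monotonicity means the neighbourhood of each $u_i$ is a consecutive interval $\{w_{a_i},\dots,w_{b_i}\}$ whose endpoints are nondecreasing in $i$. This is exactly the biadjacency description recalled above. Perfect matchings of $G$ then correspond bijectively to permutations $\pi$ with $\pi(i)$ the index of the partner of $u_i$, so $\PM(G)=Z(a,b)$.

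An isolated vertex $u_i$ has no interval. In that case the count is $0$, which we can detect and answer directly. Alternatively, we could use an empty interval with $a_i>b_i$, chosen so that monotonicity is preserved.

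The only care needed is with the input format in Theorem~\ref{thm:main}(i). If the reduction's query graphs are given by vertex orderings or by permutation representations, we extract the orderings realizing the consecutive-ones, nondecreasing-boundary form. This can be done in polynomial time using the equivalence between bipartite permutation graphs and monotone graphs cited from~\cite{DM15,DJM17}, and the paper's reductions supply representations anyway. Composing the Turing reduction of Theorem~\ref{thm:main}(i) with this translation gives $\sharpP$-hardness of computing $Z(a,b)$.

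\textbf{Step 3: the contrast.} Under the same correspondence, the switch chain on admissible permutations is the switch chain on perfect matchings of $G$. As explained at the start of this subsection, we first decide in polynomial time whether a perfect matching exists, and return $0$ if not. Otherwise we run the switch chain from a matching found in polynomial time; its rapid mixing on monotone graphs~\cite{DJM17} gives an FPAUS. Monotone graphs are closed under vertex deletion, so the self-reduction of~\cite{JVV86,SJ89} turns this FPAUS into an FPRAS. Alternatively, the permanent FPRAS~\cite{JSV04} applies directly to these inputs.

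The only potential obstacle is the bookkeeping of input representations in Step 2. The mathematical content is entirely Theorem~\ref{thm:main}(i).
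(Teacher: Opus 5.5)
Your proposal is correct and takes the paper's approach: hardness is Theorem~\ref{thm:main}(i) applied through the identification of nondecreasing-endpoint allowed-position matrices with monotone biadjacency matrices, and the construction in Section~\ref{sec:monotone} already outputs an explicit monotone ordering. The FPAUS comes from the switch-chain rapid mixing of~\cite{DJM17}, and the FPRAS comes either from~\cite{JSV04} or from self-reduction, using the fact that fixing one assignment leaves another monotone instance.
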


The FPAUS follows from the rapid-mixing bound of
Dyer, Jerrum, and M\"uller~\cite[Section~3]{DJM17}. The FPRAS follows either
from the general nonnegative-permanent algorithm of Jerrum, Sinclair,
and Vigoda~\cite{JSV04}, subsequently improved by Bez\'akov\'a et
al.~\cite{BSVV08}, or from sampling and self-reducibility~\cite{JVV86}.
For the latter route, fixing one assignment deletes its object and
position, leaving another monotone instance. Thus the same ordered
structure supports a rapidly mixing local sampler but retains the full
difficulty of exact counting. A polynomial-time exact algorithm for
these instances would imply $\mathsf{FP}=\sharpP$.

Relaxing the monotonicity restriction gives \emph{convex bipartite
graphs}: each object still has consecutive allowed positions. If the
objects can also be ordered so that the objects allowed at each position
are consecutive, the graph is \emph{biconvex}. Hence
\[
 \mathrm{Monotone}\subseteq\mathrm{Biconvex}\subseteq\mathrm{Convex}.
\]
Our theorem settles exact counting for both larger families as well:
even one-sided or two-sided consecutiveness does not suffice for exact
tractability. Both retain the JSV approximation guarantee, although the
basic switch chain can mix exponentially slowly on biconvex
instances~\cite{DM15,DJM17}. Thus slow mixing of the switch chain is
compatible with efficient approximation by other algorithms.

\begin{table}[htbp]
\centering
\begin{historytable}
\multicolumn{2}{@{}l@{}}{\emph{Prior exact-counting results}} \tabularnewline
\addlinespace[.35em]
\historycell{Okamoto et al.}{\footnotesize (WG proceedings, 2010)~\cite{OUU10}}
& \historycell{$\sharpP$-complete on chordal bipartite graphs.}{Applies to a superclass of monotone graphs.} \tabularnewline
\addlinespace[.55em]
\historycell{Dyer--M\"uller}{\footnotesize (Ann. Fac. Sci. Toulouse, 2015)~\cite{DM15}}
& \historycell{Polynomial-time exact counting on E-Free graphs.}{Applies to the E-Free subclass of monotone graphs.} \tabularnewline
\midrule
\multicolumn{2}{@{}l@{}}{\emph{Prior sampling and approximation results}} \tabularnewline
\addlinespace[.35em]
\historycell{Jerrum et al.}{\footnotesize (JACM, 2004)~\cite{JSV04}}
& \historycell{FPRAS for all bipartite graphs.}{Covers monotone, biconvex, and convex inputs.} \tabularnewline
\addlinespace[.55em]
\historycell{Bez\'akov\'a et al.}{\footnotesize (SICOMP, 2008)~\cite{BSVV08}}
& \historycell{Faster approximation of the permanent.}{Uses improved simulated-annealing schedules.} \tabularnewline
\addlinespace[.55em]
\historycell{Dyer et al.}{\footnotesize (JACM, 2017)~\cite{DJM17}}
& \historycell{Rapid mixing on monotone graphs.}{Gives an FPAUS using local switches.} \tabularnewline
\midrule
\historycell{\textbf{This work}}{\footnotesize\emph{Exact classification}}
& \historycell{$\sharpP$-complete on monotone graphs.}{Biconvex and convex hardness follow.} \tabularnewline
\end{historytable}
\caption{Exact-counting results around monotone graphs, together with
prior sampling and approximation guarantees. E-Free is the subclass
shown in Figure~\ref{fig:exact-landscape}.}
\label{tab:history-monotone}
\end{table}

\subsubsection{Equal-length intervals and nonbipartite sampling}
\label{sec:impact-intervals}

A basic test of geometric tractability is whether a pairing problem
becomes easy when compatibility depends only on positions along a line.
Equal-length intervals impose a particularly strong restriction: all
objects have the same geometric extent, and two may be paired precisely
when their intervals overlap. Does this structure, which permits a
left-to-right description of the entire input, also permit efficient
exact counting?

The input is an even number of equal-length intervals, and the goal is
to count their partitions into intersecting pairs. Its graph is a
\emph{unit interval graph}, also called a proper interval graph~\cite{DM19}.
Unlike a restricted permutation, a pairing here need not match objects
from two predetermined sides: three intervals can overlap pairwise.
Consequently the general bipartite-permanent FPRAS does not by itself
answer the sampling or approximation question for this class.
The exact-counting questions for unit interval and interval graphs
were explicitly left open in~\cite{OUU10,KOU11}.

A graph is \emph{quasimonotone} if, for every partition of its vertices
into two sides, retaining only crossing edges produces a monotone graph.
This condition supplies ordered structure on every cut without
requiring the original graph to be bipartite. The class contains all
unit interval graphs~\cite[Section~3.2.1]{DM19} and can be recognized
in polynomial time~\cite{DMQ19}.
The following two corollaries contrast our exact-counting results with
the prior sampling and approximation guarantees for these classes.

\begin{corollary}[Pairings of equal-length intervals]
\label{cor:unitinterval-sampling}
Given an equal-length interval representation, counting partitions
into intersecting pairs is $\sharpP$-complete under polynomial-time Turing
reductions. By comparison, these pairings admit an FPAUS, and their
number admits an FPRAS.
\end{corollary}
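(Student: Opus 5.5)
The corollary has two halves, and I would handle them separately. The hardness half is a restatement of Theorem~\ref{thm:main}(ii). Given an equal-length interval representation, the intersection graph is by definition a unit interval graph. Its perfect matchings are exactly the partitions of the intervals into intersecting pairs. So counting these pairings is the same as counting perfect matchings in the associated unit interval graph.

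Membership in $\sharpP$ is immediate. A pairing is a polynomial-size certificate, and checking that each pair intersects and that every interval is used exactly once takes polynomial time.

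For hardness, I would use the remark after Theorem~\ref{thm:main}: the reductions output simple unweighted query graphs together with valid interval representations. In case (ii) these representations are equal-length, or can be normalised to be. One standard way is to take a unit interval ordering, which is computable in linear time from a proper interval model. From it, intervals of a common length realising the same graph can be produced in polynomial time. Each oracle query in the Turing reduction can then be answered by the pairing-counting problem on that representation. This gives $\sharpP$-hardness under $\le_T$. The only point needing care is this one: the hardness must hold when the representation is part of the input, and that is exactly what the text after Theorem~\ref{thm:main} asserts.

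For the positive half, I would invoke prior results. Unit interval graphs are quasimonotone \cite[Section~3.2.1]{DM19}, and Dyer and M\"uller prove rapid mixing of the switch chain on quasimonotone graphs \cite{DM19}. Following the preamble of Section~\ref{sec:intro-impact}, I would first decide in polynomial time whether a perfect matching exists, for example with Edmonds' algorithm, outputting count $0$ if none exists. Otherwise I would start the switch chain from the matching found. The chain's stationary distribution is uniform on perfect matchings, so by the mixing bound this yields an FPAUS.

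For the FPRAS I would apply the Jerrum--Valiant--Vazirani self-reduction \cite{JVV86}. The number of perfect matchings of $G$ is written as a telescoping product of ratios $\PM(G_{i})/\PM(G_{i+1})$, where $G_{i+1}$ is obtained from $G_i$ by deleting both endpoints of an edge of a sampled matching. Each ratio is estimated by sampling. The key closure fact is that deleting vertices from a unit interval graph leaves a unit interval graph: this is just deletion of intervals from the representation. So the sampler stays applicable at every level. Each ratio is polynomially bounded below when the edge is chosen as the most frequent one at a fixed vertex, since that vertex has degree at most $n$. With polynomially many samples and a standard variance argument, the product gives relative error $\varepsilon$ with probability at least $3/4$.
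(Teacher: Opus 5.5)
Your proposal is correct and follows essentially the same route as the paper. For hardness, the paper uses Theorem~\ref{thm:unitinterval}, whose oracle queries already come with explicit integer equal-length representations built in~\eqref{eq:intervals} and~\eqref{eq:probes}, so your normalisation hedge is unnecessary; for the positive half, it uses Dyer--M\"uller's rapid mixing on quasimonotone graphs together with JVV self-reduction via closure under vertex deletion, just as you do.
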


\begin{corollary}[The quasimonotone extension]
\label{cor:qmonotone-sampling}
Exact perfect-matching counting is $\sharpP$-complete on quasimonotone
graphs under polynomial-time Turing reductions. Nevertheless, this
class admits an FPAUS for perfect matchings and an FPRAS for their number.
\end{corollary}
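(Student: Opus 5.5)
The hardness half of Corollary~\ref{cor:qmonotone-sampling} will come from inclusion, and the algorithmic half from prior work. By~\cite[Section~3.2.1]{DM19}, every unit interval graph is quasimonotone. Theorem~\ref{thm:main}(ii) shows that exact perfect-matching counting is $\sharpP$-hard under $\le_T$ on unit interval graphs. The reduction only issues queries on unit interval graphs, so the same reduction is a Turing reduction to counting on quasimonotone graphs. For membership in $\sharpP$, note that perfect matchings of any graph are polynomial-size certificates that can be checked in polynomial time. We do not even need to verify that the input is quasimonotone, although recognition is polynomial by~\cite{DMQ19}. Together these give $\sharpP$-completeness.

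For the FPAUS, I would invoke the rapid-mixing theorem of Dyer and M\"uller~\cite{DM19} for the switch chain on quasimonotone graphs, following the procedure described in Section~\ref{sec:intro-impact}. First decide, in polynomial time, whether a perfect matching exists, for instance by Edmonds' algorithm. If one exists, start the chain from it and run it for a number of steps polynomial in $n$ and $\log(1/\varepsilon)$. Its stationary distribution is uniform, so this yields the FPAUS.

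For the FPRAS, I would use the self-reducibility framework of~\cite{JVV86}. The point to check is that the smaller instances it generates stay in the class. The paper notes that the class is closed under vertex deletion. Conditioning on an edge $uv$ lying in the matching corresponds to deleting $u$ and $v$. Any cut of the smaller graph extends to a cut of the original graph in which $u$ and $v$ sit on arbitrary sides. The crossing subgraph of the smaller graph is then an induced subgraph of a monotone graph, and monotone graphs are hereditary. So $G-u-v$ is again quasimonotone.

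The standard telescoping product then applies. Pick an edge $uv$ and estimate the fraction of perfect matchings containing it by sampling. The difficulty is that this fraction could be tiny, which would make the estimate unreliable. The usual fix is to choose, at each step, the edge at $u$ that is most frequent among the samples; its fraction is at least $1/\deg(u) \ge 1/n$. This gives relative error $\varepsilon$ with $\poly(n,1/\varepsilon)$ samples.

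The main point of care is this self-reducibility bookkeeping, namely that closure under vertex deletion really does preserve quasimonotonicity. Everything else is citation.
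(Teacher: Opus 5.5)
Your proposal is correct and follows the paper's own argument. Hardness comes from the inclusion of unit interval graphs in the quasimonotone class together with Theorem~\ref{thm:main}(ii), the FPAUS from Dyer and M\"uller's rapid-mixing theorem for the switch chain started at a perfect matching found in polynomial time, and the FPRAS from self-reduction~\cite{JVV86} using closure under vertex deletion. Where the paper simply asserts that closure, you verify it explicitly by extending each cut of $G-u-v$ and using that monotone graphs are hereditary.
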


Both corollaries use the same prior algorithmic results. Dyer and
M\"uller~\cite[Section~3.3]{DM19} proved rapid switch-chain mixing on
quasimonotone graphs, giving an FPAUS that also applies to unit interval
graphs. Both classes are closed under vertex deletion, so fixing a
matched edge and deleting its endpoints preserves the input class.
Self-reduction therefore gives the FPRAS guarantees~\cite{JVV86}.
Exact hardness follows in the other direction: our unit interval
theorem implies hardness for the containing quasimonotone class.
It also implies hardness for general interval graphs, without extending
the sampling guarantee to that larger class.

Table~\ref{tab:history-intervals} places the result alongside prior exact
counting on cochain graphs, a subclass of unit interval graphs, and
prior hardness on strongly chordal split graphs, a related branch of
Figure~\ref{fig:exact-landscape}.

\begin{table}[htbp]
\centering
\begin{historytable}
\multicolumn{2}{@{}l@{}}{\emph{Prior exact-counting results}} \tabularnewline
\addlinespace[.35em]
\historycell{Okamoto et al.}{\footnotesize (WG proceedings, 2010)~\cite{OUU10}}
& \historycell{$\sharpP$-complete on strongly chordal split graphs.}{Follows via linked classes~\cite[Appendix]{DM19}.} \tabularnewline
\addlinespace[.55em]
\historycell{Okamoto et al.}{\footnotesize (WG proceedings, 2010)~\cite{OUU10}}
& \historycell{Polynomial-time exact counting on cochain graphs.}{Applies to a subclass of unit interval graphs.} \tabularnewline
\addlinespace[.55em]
\historycell{Dyer--M\"uller}{\footnotesize (SIDMA, 2019)~\cite{DM19}}
& \historycell{Exact counting on cochain graphs in $O(n^2)$ arithmetic operations.}{Uses matching counts in chain graphs.} \tabularnewline
\midrule
\multicolumn{2}{@{}l@{}}{\emph{Prior sampling and approximation results}} \tabularnewline
\addlinespace[.35em]
\historycell{Dyer--M\"uller}{\footnotesize (SIDMA, 2019)~\cite{DM19}}
& \historycell{Rapid mixing on quasimonotone graphs.}{Includes all unit interval graphs.} \tabularnewline
\midrule
\historycell{\textbf{This work}}{\footnotesize\emph{Exact classification}}
& \historycell{$\sharpP$-complete on unit interval graphs.}{Hardness holds with an equal-length representation supplied.} \tabularnewline
\end{historytable}
\caption{Prior exact-counting results near unit interval graphs and
the sampling guarantees on quasimonotone graphs. The new unit interval
hardness also settles exact counting on quasimonotone graphs.}
\label{tab:history-intervals}
\end{table}

\subsubsection{Simultaneous interval and permutation representations}
\label{sec:impact-intersection}

When one structural restriction is insufficient for an exact algorithm,
a natural next question is whether imposing a second restriction changes
the answer. Hardness on two separate graph classes does not imply
hardness on their intersection: the instances witnessing each result may
violate the other restriction. Our chordal permutation result answers
this question for two basic geometric descriptions of a graph.

An \emph{interval graph} represents vertices by intervals on a line,
with adjacency meaning overlap. A \emph{permutation graph} represents
vertices by segments between two parallel lines, with adjacency meaning
crossing. The two endpoint orders specify all adjacencies. Graphs
admitting both descriptions are precisely \emph{chordal permutation
graphs}~\cite{DM19}. Their perfect matchings can be viewed either as
partitions into overlapping interval pairs or as partitions into
crossing segment pairs, with both representations describing the same
compatibilities.

Prior hardness on chordal graphs~\cite{OUU10}, including the strongly
chordal split branch in Figure~\ref{fig:exact-landscape}, did not settle
this intersection. On the tractable side, cochain graphs lie within the
intersection, while cographs form a polynomial-time solvable subclass
of permutation graphs~\cite{OUU10,DM19}. Dyer and
M\"uller~\cite[Section~4]{DM19} also showed
that the switch chain can mix exponentially slowly even here.
Their slow-mixing family $G_k$ nevertheless has the explicit count $\PM(G_k)=2\cdot3^k-1$~\cite[Section~4.1.2]{DM19}: a bottleneck for one sampling process need
not be an obstacle to exact counting. Table~\ref{tab:history-intersection}
separates the prior exact-counting results from the switch-chain result.

\begin{table}[htbp]
\centering
\begin{historytable}
\multicolumn{2}{@{}l@{}}{\emph{Prior exact-counting results}} \tabularnewline
\addlinespace[.35em]
\historycell{Okamoto et al.}{\footnotesize (WG proceedings, 2010)~\cite{OUU10}}
& \historycell{$\sharpP$-complete on chordal graphs.}{Applies to a superclass of the intersection.} \tabularnewline
\addlinespace[.55em]
\historycell{Okamoto et al.}{\footnotesize (WG proceedings, 2010)~\cite{OUU10}}
& \historycell{$\sharpP$-complete on strongly chordal split graphs.}{Follows via linked classes~\cite[Appendix]{DM19}.} \tabularnewline
\addlinespace[.55em]
\historycell{Dyer--M\"uller}{\footnotesize (SIDMA, 2019)~\cite{DM19}}
& \historycell{Exact counting on cochain graphs in $O(n^2)$ arithmetic operations.}{Applies to a subclass of the intersection.} \tabularnewline
\addlinespace[.55em]
\historycell{Dyer--M\"uller}{\footnotesize (SIDMA, 2019)~\cite{DM19}}
& \historycell{Polynomial-time exact counting on cographs.}{Uses a cotree for this permutation subclass.} \tabularnewline
\midrule
\multicolumn{2}{@{}l@{}}{\emph{Prior sampling and approximation results}} \tabularnewline
\addlinespace[.35em]
\historycell{Dyer--M\"uller}{\footnotesize (SIDMA, 2019)~\cite{DM19}}
& \historycell{Exponential switch-chain mixing time.}{Occurs on a family of chordal permutation graphs.} \tabularnewline
\midrule
\historycell{\textbf{This work}}{\footnotesize\emph{Exact classification}}
& \historycell{$\sharpP$-complete on chordal permutation graphs.}{Hardness holds with both representations supplied.} \tabularnewline
\end{historytable}
\caption{Prior exact-counting results on related chordal and permutation
classes, the slow-mixing result, and hardness on the interval/permutation
intersection. Cochain graphs lie in this intersection. Cographs need not.}
\label{tab:history-intersection}
\end{table}

Theorem~\ref{thm:main}(iii) establishes $\sharpP$-completeness within the
intersection itself. Thus even the simultaneous availability of the
two ordered descriptions does not make exact counting tractable unless
$\mathsf{FP}=\sharpP$. This is additional to our superclass
classifications: interval hardness already follows from the unit
interval theorem, and permutation hardness from the monotone theorem.
The third result locates hard instances satisfying \emph{both}
restrictions. This exact-counting result is independent of the earlier
slow-mixing construction. Neither result establishes approximation
hardness for the entire class.

\subsubsection{Nested choices, exact algorithms, and sampling oracles}
\label{sec:impact-qchains}

Stronger ordering restrictions can still permit exact counting.
In a \emph{chain graph}, the allowed partner sets on each bipartition
side are ordered by inclusion, enabling simple exact counting~\cite{OUU10}.
Does extending this model to nonbipartite graphs preserve efficient
exact counting, or only efficient sampling?

Dyer and M\"uller~\cite[Section~3.4]{DM19} introduced the relevant class,
\emph{QChains}: every vertex bipartition, after retaining crossing edges,
must be a disjoint union of chain graphs. They gave a polynomial-time
recognition algorithm, but did not obtain a better counting algorithm
than the approximate one available for QMonotone. Our inclusion
$\mathsf{QChains}\subsetneq\mathsf{DH}$ answers the exact question
positively. Here a graph is distance-hereditary if every connected
induced subgraph preserves the distances between its vertices~\cite{BandeltMulder86}.
Known bounded-clique-width algorithms already count its perfect
matchings in polynomial time~\cite{GolumbicRotics00,MRAG06,CurticapeanMarx16}.
Our direct pendant/twin algorithm further reduces the arithmetic-operation
count on this larger class (Theorem~\ref{thm:qchains-fp}).
Exact uniform sampling then follows by self-reduction~\cite{JVV86}, in expected
polynomial time in the random-bit model.

Table~\ref{tab:history-qchains} relates our results to earlier counting
algorithms and to the sampling-oracle framework of
Alimohammadi et al.~\cite{AASV21}.

\begin{table}[htbp]
\centering
\begin{historytable}
\historycell{Curticapean--Marx}{\footnotesize (SODA, 2016)~\cite{CurticapeanMarx16}}
& \historycell{Exact counting on DH in $O(n^4)$ arithmetic operations.}{Uses a $3$-expression~\cite{GolumbicRotics00}.} \tabularnewline
\addlinespace[.55em]
\historycell{Dyer--M\"uller}{\footnotesize (SIDMA, 2019)~\cite{DM19}}
& \historycell{Efficient sampling on QChains.}{Follows from rapid mixing on QMonotone.} \tabularnewline
\addlinespace[.55em]
\historycell{Alimohammadi et al.}{\footnotesize (STOC, 2021)~\cite{AASV21}}
& \historycell{FPRAS for fixed-size matchings in planar graphs.}{Uses exact perfect-matching counts on induced subgraphs.} \tabularnewline
\midrule
\historycell{\textbf{This work}}{\footnotesize\emph{Structural classification}}
& \historycell{Polynomial-time exact counting on QChains.}{Follows from $\mathsf{QChains}\subsetneq\mathsf{DH}$.} \tabularnewline
\addlinespace[.55em]
\historycell{\textbf{This work}}{\footnotesize\emph{Algorithmic improvement}}
& \historycell{Exact counting on DH in $O(n^2)$ arithmetic operations.}{Uses pendant and twin reductions.} \tabularnewline
\end{historytable}
\caption{Exact tractability and sampling oracles: a new QChains
classification and a faster algorithm on distance-hereditary graphs
($\mathsf{DH}$).}
\label{tab:history-qchains}
\end{table}

Exact counting can also serve as a subroutine for sampling a larger
space of matchings. Alimohammadi, Anari, Shiragur, and
Vuong~\cite{AASV21} give an FPRAS for fixed-size matchings in planar
graphs using a rapidly mixing walk on sets of unmatched vertices.
In the unweighted setting, the weight of a set $S$ is
$\mu(S)=\PM(G[V\setminus S])$. Their implementation extends to
vertex-deletion-closed graph families with an efficient perfect-matching
counting oracle. Our QChains inclusion supplies such an oracle, whereas
our hardness results rule out polynomial-time exact oracles on the
three hard classes unless $\mathsf{FP}=\sharpP$. These conclusions
concern exact oracles for unweighted inputs and leave open the
possibility of efficient approximate oracles or other samplers.

The sampling results alone do not determine exact-counting complexity:
monotone and unit interval graphs admit efficient sampling despite exact
hardness, while QChains also admits efficient exact counting.
We next explain how the ordered representations in our hard classes
can encode a counting computation.

\subsection{A circuit-based approach to ordered counting}\label{sec:intro-methods}

The difficulty in proving hardness is that an ordered representation
constrains many edges at once. Choosing endpoints to create a desired
interaction can also create unwanted ones, so assembling independent
matching gadgets does not automatically preserve the graph class.
Yet the order need not give a small boundary for dynamic programming:
arbitrarily many intervals can overlap, and a layer can have arbitrarily
many vertices. We start from the information that a matching carries
across such a boundary and ask whether it can retain enough structure
to encode a counting computation.

Consider consecutive layers of $N=2p$ vertices, with edges only between
successive layers and boundary conditions forcing $p$ edges across each
cut. A state is the subset of vertices in the current layer still
awaiting partners to the right. The vertices in the next state $T$
must remain available for a later layer, so $S$ must be paired
bijectively with the complement of $T$ in the next layer. If $M$ is
the cut matrix, the number of these pairings is the transfer entry
\[
 \mathcal A_M[S,T]=\operatorname{per}M[S,[N]\setminus T].
\]
Multiplying these transfers sums over intermediate states and counts
perfect matchings. Even simple cut matrices act on a state space of
dimension $\binom{N}{p}$, which is exponential in $N$.
We take a quantum-circuit viewpoint on this space by encoding logical
bits in subset states and composing linear operations on them.
Our circuits use exact rational matrices and need not be unitary.
For this approach to work, an operation on one encoded bit must
continue to act correctly in the presence of all the others.

To obtain a common interface for the three geometric reductions, we
restrict the algebraic construction to the upper-triangular matrix
$U_{ij}=\mathbf 1\{i\le j\}$ and single-entry perturbations of it.
Normalizing by the inverse of the transfer without the perturbation
isolates its effect and leads to the problem $\Word$
(Section~\ref{sec:wordeval}). A tempting interpretation is that changing
one entry supplies a local gate. It does not: in the subset-state
space, normalization can move occupied positions to later tracks,
including tracks outside the intended encoding block. Thus a correct
four-track calculation alone gives no guarantee that the same gate
can be composed inside a large instance. The main task is to recover
locality despite this leakage.

A word whose letter indices lie inside a block can move particles
out only through its right boundary. The number of particles at or to
the left of that block can therefore only decrease, so a particle
that leaves cannot return (Section~\ref{sec:flow}). We encode each bit using two occupied tracks
among four and construct a constant-length filter whose isolated
two-particle action is a rank-two projection. The same filter kills
states with fewer than two particles. With $2k$ particles on $k$
four-track blocks, every occupation pattern other than two particles
per block has an underfilled block. Such a pattern cannot persist
through a full pass of the filters without moving particles between
blocks. Each strict change increases a potential whose range is
$O(k^2)$, so there can be only $O(k^2)$ strict changes in the block
occupation pattern, despite the exponential number of states.

We prove that repeated passes therefore stabilize \emph{exactly}:
if $\mathcal F_k$ is one pass of the filters, then
\[
 P_k=\mathcal F_k^{\,2k(k-1)+2},\qquad
 P_k^2=P_k,\qquad \operatorname{rank}P_k=2^k.
\]
The projection is a word of length $O(k^3)$ multiplied by a known
rational normalization, acting on a space of dimension
$\binom{4k}{2k}$. We use row vectors, so the encoded space is the
row space of $P_k$, equivalently the image of $v\mapsto vP_k$.
The projection need not be orthogonal, and this row space need not
be the span of the raw code states. Encoding and decoding through
this space, with $P_k$ inserted between operations, gives the crucial
composition property: an operation confined to a few blocks acts
on the corresponding logical bits and as the identity on all others
(Section~\ref{sec:projection}). A polynomial-length sequence of the
allowed transfers thus supplies the exact local behavior that was
absent from the individual operations.

With composition established, calculations on four or eight tracks
produce one-bit operations and a two-bit interaction
(Section~\ref{sec:gates}). The latter
does not immediately impose the constraint we need. We recover its
diagonal part by conjugation and interpolation, then use powers of
that diagonal operation to recover a constraint forbidding $11$ and
a controlled-sign operation. The interpolation parameters are shared
across all occurrences of each target operation. For $g$ occurrences
of a fixed diagonal operation, we group contributions by how often
each of its constantly many eigenvalues is used. The number of
resulting multiplicity vectors is polynomial in $g$. The reduction
therefore needs only polynomially many evaluations, rather than
expanding the choices independently at every gate position
(Section~\ref{sec:interpolation}). Simultaneous interpolation is a
well-established tool in counting reductions~\cite{HuangLu16}.
Starting with the sum of all Boolean assignments, applying the
forbidden-$11$ constraint for every edge of a source graph, and
summing the surviving assignments counts its independent sets.
Hadamard and controlled-sign operations allow the necessary routing
between nonadjacent bits (Section~\ref{sec:reduction}).

Connections between circuits, matching
signatures, and tensor contractions are well
established~\cite{Valiant02,MarkovShi08,CaiLuXia18}, as are encoded
matchgate constructions and quantum-information approaches to
counting~\cite{JozsaMiyake08,Backens21}. Here, the new step is to
recover a composable encoding by an exact polynomial-length
projection despite the global coupling of the ordered transfers.

It remains to realize this algebraic computation in the promised
graph classes. We first remove inverse transfers by interpolating
a shared repetition parameter, leaving the unweighted paired-layer
problem $\mathrm{PairEval}$. The three geometric constructions then
realize this common interface, but their valid representations
introduce clique edges or complemented cuts. We compensate by
adding paired bipartite probes in the monotone construction
(Section~\ref{sec:monotone}), and clique probes in the unit interval
(Section~\ref{sec:unitinterval}) and chordal permutation
(Section~\ref{sec:permutation}) constructions.
After normalization, the matching count is a polynomial
in a common probe multiplicity. Evaluating it at $-1$ adds weight
$-1$ on prescribed cliques or complete bipartite subgraphs, canceling
the unwanted edges or correcting the complemented cuts, up to a
known uniform sign. All queried multiplicities are nonnegative,
and all queried graphs are simple, unweighted, and explicitly
represented in their target class. Thus the geometric restrictions
are respected throughout the reduction. Cancellation takes place
only in exact postprocessing.

The tractable side uses a different way of controlling boundary
information. The distance-hereditary algorithm merges bags through
pendant and twin operations. Within each bag, all active vertices
have the same neighborhood outside that bag. The state therefore
records only how many active vertices remain unmatched.
Our recurrence handles a merge of bags of sizes $a,b$ in $O(ab)$
integer arithmetic operations. Each pair of original vertices is
charged at most once, when their bags first merge, so these costs
sum to $O(n^2)$
(Section~\ref{sec:qchains}).

\subsection{Further context}

The ordered classes studied here complement the topological
approach to perfect-matching counting. The FKT method gives
polynomial-time algorithms on planar
graphs~\cite{Kasteleyn61,TemperleyFisher61}. Extensions to fixed
surfaces and excluded-minor
classes~\cite{GalluccioLoebl99,Tesler00,Vazirani89,StraubThieraufWagner16}
culminated in the exact-counting dichotomy for minor-closed graph
classes of Thilikos and Wiederrecht~\cite{ThilikosWiederrecht24}.
Every fixed proper minor-closed class has only linearly many
edges~\cite{Mader67}, whereas the ordering constraints considered
here allow dense instances. Density alone does not determine the
complexity of exact counting: chain, cochain, threshold, E-free,
and cograph classes
admit exact algorithms~\cite{OUU10,DM15,DM19}, and fixed
clique-width supports matching-polynomial
algorithms~\cite{MRAG06,CurticapeanMarx16}.

Other work on dense graphs studies different structural restrictions.
El Maalouly and Wang~\cite{ElMaaloulyWang22} prove hardness for graphs
partitionable into two cliques, while matching-polynomial methods give
exact algorithms on cographs~\cite{Jumadildayev25}.
For arbitrary $2r$-vertex graphs, Li~\cite{Li26} gives a
$2^{r-\Omega(\sqrt r)}$-time algorithm, improving the general
exponential-time bound. Our results complement these developments
by settling exact counting under the ordering restrictions in
Theorem~\ref{thm:main}.

Sequential importance sampling provides another approach to
approximate counting. Diaconis and Kolesnik~\cite{DK21} analyze this
method for structured restricted permutations. Alimohammadi,
Diaconis, Roghani, and Saberi~\cite{ADRS23} prove polynomial-time
guarantees for bipartite graphs with $n$ vertices on each side and
minimum degree greater than $(1/2+\lambda)n$, for fixed
$0<\lambda<1/2$.

\subsection{Organization}

Section~\ref{sec:prelims} collects notation and prior results.
Section~\ref{sec:wordeval} introduces $\Word$ and the paired-transfer
interface. Sections~\ref{sec:flow}--\ref{sec:reduction} establish
the common algebraic hardness result.
Sections~\ref{sec:monotone}, \ref{sec:unitinterval}, and
\ref{sec:permutation} give the geometric realizations in the order
of Theorem~\ref{thm:main}.
Section~\ref{sec:qchains} gives the structural inclusion and
counting algorithm.

\section{Preliminaries}\label{sec:prelims}

\subsection{Graphs, representations, and the counting problem}

Write $G=(V,E)$ for a graph with vertex set $V$ and edge set $E$;
each edge is an unordered pair of distinct vertices. All graphs
are finite, simple, undirected, and unweighted unless
explicitly used as auxiliary weighted expressions. Vertices are
labeled. A \emph{matching} is a set of edges with pairwise disjoint
endpoints; it is \emph{perfect} if every vertex is incident with
exactly one of its edges. We write $\PM(G)$ for the number of perfect
matchings of $G$. The graph with no vertices has one perfect matching,
the empty edge set; a graph with an odd number of vertices has none.
A \emph{clique} is a set of pairwise adjacent vertices, and an
\emph{independent set} is a set containing no adjacent pair.
A graph is \emph{bipartite} if its vertex set has a partition
$A\mathbin{\dot\cup}B$ into two independent sets, where
$\dot\cup$ denotes disjoint union.

An \emph{interval representation} assigns a closed real interval
$I_v=[\ell_v,r_v]$ to each vertex, with $uv$ an edge exactly when
$u\ne v$ and $I_u\cap I_v\ne\varnothing$. An \emph{interval graph}
is a graph admitting such a representation. It is a \emph{unit
interval graph} if it admits one with all intervals of length one,
or equivalently of any one common positive length $\Lambda$.
Rescaling the line proves the equivalence. Distinct vertices may
have identical intervals. A \emph{proper interval representation}
has no interval properly containing another; graphs admitting such
a representation are exactly the unit interval graphs
\cite{DM19}.

A \emph{chordal graph} has no induced cycle of length at least four.
An induced cycle has no extra edge among its cycle vertices.
Write $N_G(v)=\{u\in V:uv\in E\}$ for the neighborhood of $v$.
An ordering $(v_1,\ldots,v_n)$ of $V$ is a \emph{perfect elimination
ordering} if $N_G(v_i)\cap\{v_{i+1},\ldots,v_n\}$ is a clique for
every $i$. We use the characterization
\cite[Section~7, pp.~851--852]{FulkersonGross65}
\begin{equation}\label{eq:chordal-peo}
 G\text{ is chordal}
 \quad\Longleftrightarrow\quad
 G\text{ admits a perfect elimination ordering}.
\end{equation}

For comparison with the literature, a \emph{chordal bipartite graph}
is bipartite and has no induced cycle of length at least six.
A \emph{split graph} has a vertex partition into a clique
and an independent set. A \emph{chain graph} is bipartite and its
neighborhoods on each side can be ordered by inclusion. A
\emph{cochain graph} is the complement of a chain graph; graph
complementation exchanges edges and nonedges between distinct
vertices. A \emph{threshold graph} can be constructed from the
empty graph by repeatedly adding either an isolated vertex or
a vertex adjacent to all existing vertices. These comparison classes follow the conventions in~\cite{DM19,OUU10}.

The problem $\UIPM$ takes a unit interval graph $G$, encoded by
its vertices and edges, and outputs the nonnegative integer
$\PM(G)$ in binary. We may equivalently supply a rational
equal-length interval representation. Corneil et al.\ \cite{CorneilEtAl95}
give a linear-time recognition algorithm and, for a nonempty graph
on $n$ vertices, a construction of a unit interval representation
whose endpoints are integer multiples of $1/n$. The coordinates
can be kept polynomially bounded: the union of the intervals in
a connected component of $a$ vertices spans length at most $a$;
translate the components to start at zero and then place them
successively with gaps of one. All endpoints then lie in $[0,2n]$
on the same $1/n$ grid. Their numerators are $O(n^2)$, so their
binary encoding length is polynomial. Thus a representation can be recovered
in polynomial time, and a representation determines all edges by
pairwise endpoint comparisons. Scaling clears the common denominator
and gives equal-length intervals with integer endpoints. These
input models are polynomial-time equivalent. Formally, the counting
function on unrestricted graph encodings can be defined to be zero
when recognition rejects; this gives the same restricted problem.

In the integer model $I_v=[x_v,x_v+\Lambda]$, adjacency is
$|x_u-x_v|\le\Lambda$. For integer $x_v$ and $\Lambda$, this is
equivalent to $|x_u-x_v|<r$ with $r=\Lambda+1$. Both the coordinates
and the common length (or radius) are input data. This equivalence
does not fix the integer radius to a constant.

\subsection{Permutation diagrams and quasi-classes}

A \emph{permutation diagram} consists of one labeled endpoint on
each of two parallel lines for every vertex, with pairwise distinct
endpoints on each line. Vertices are adjacent
when the corresponding straight segments intersect. Equivalently,
two vertices are adjacent exactly when their relative orders in
the two endpoint lists disagree. A graph admitting such a diagram
is a \emph{permutation graph}; it is a \emph{chordal permutation
graph} if it is also chordal. Every endpoint list in our reductions
is explicit, with integer ranks serving as coordinates.

A bipartite graph is \emph{monotone} if its two parts can be ordered
so that each row neighborhood of the biadjacency matrix is an
interval and both interval endpoints are nondecreasing with the
row index; we allow empty intervals $[t+1,t]$.
Dyer and M\"uller~\cite[Lemma~2.13]{DM15} give the characterization
\begin{equation}\label{eq:monotone-bipartite-permutation}
 G\text{ is monotone}
 \quad\Longleftrightarrow\quad
 G\text{ is a bipartite permutation graph}.
\end{equation}
Their convention excludes isolated vertices, but the equivalence
extends to them: isolated rows can be placed first with interval
$[1,0]$ and isolated columns last, while isolated segments can be
placed first in the same order on both lines.

For an explicit monotone ordering, sort each independent part of
a bipartite permutation diagram by its upper endpoints. The order
within each part is the same on both lines. If $a_x,b_x$ count the
opposite-part endpoints preceding $x$ on the upper and lower lines,
respectively, its neighborhood in this order is
\[
 [\min(a_x,b_x)+1,\max(a_x,b_x)].
\]
Both $a_x,b_x$ are nondecreasing along its own part, so both
neighborhood endpoints are nondecreasing.

For a bipartite graph class $\mathcal C$, define its \emph{quasi-class} by
\[
 G\in\operatorname{quasi}(\mathcal C)
 \quad\Longleftrightarrow\quad
 \forall L\subseteq V(G),\quad G[L:V(G)\setminus L]\in\mathcal C,
\]
where $G[L:R]$ retains only edges crossing the partition
$V(G)=L\mathbin{\dot\cup}R$. Thus QMonotone is quasi-Monotone,
whereas QChains is quasi-Chains, with Chains denoting disjoint
unions of chain graphs~\cite{DM19}.

A graph $G$ is \emph{distance-hereditary} if
\begin{equation}\label{eq:distance-hereditary-definition}
 d_{G[W]}(u,v)=d_G(u,v)
 \qquad\bigl(W\subseteq V(G),\ G[W]\text{ connected},\ u,v\in W\bigr),
\end{equation}
where $d_H$ denotes graph distance in $H$. Let $\mathsf{DH}$ denote
this class. The definition applies componentwise to disconnected
graphs. The needed structural characterizations and counting
states are given in Section~\ref{sec:qchains}.

For any class $\mathcal C$ considered here, $\#\mathrm{PM}(\mathcal C)$
means the exact binary output $\PM(G)$ on input $G\in\mathcal C$.
The hardness constructions output both the graph and its indicated
representation. Membership in
$\sharpP$ follows from the usual canonical matching certificates.

\subsection{Counting complexity and the source problem}

We use binary encodings and measure running time in bit operations,
unless an arithmetic-operation bound is stated explicitly.
For nonnegative integer-valued functions, $\mathsf{FP}$ denotes
those computable by deterministic polynomial-time algorithms, and
$\sharpP$ denotes those that count the accepting computation paths
of a nondeterministic polynomial-time machine. Equivalently, a
$\sharpP$ function counts polynomial-length certificates satisfying
a polynomial-time predicate, with each certificate counted once.

A \emph{polynomial-time Turing reduction} from a function $f$ to a
function $g$, written $f\le_T g$, is a deterministic polynomial-time
algorithm for $f$ allowed to query an oracle returning exact values
of $g$. Its query lengths, number of queries, and computation on
oracle answers must all be polynomially bounded. Throughout, $\le_T$
always has this polynomial-time meaning, and all hardness and
completeness claims are with respect to $\le_T$. A counting problem $g$ satisfies
\[
 \begin{aligned}
 g\text{ is }\sharpP\text{-hard}
 &\ \Longleftrightarrow\ (\forall f\in\sharpP)\ f\le_T g,\\
 g\text{ is }\sharpP\text{-complete}
 &\ \Longleftrightarrow\ g\in\sharpP\text{ and }g\text{ is }\sharpP\text{-hard}.
 \end{aligned}
\]
Unlike a \emph{parsimonious reduction}, which maps each
instance to one with the same answer, a Turing reduction can combine
several oracle answers by exact arithmetic. Our reduction uses
rational coefficients and cancellations in this postprocessing;
every queried graph remains unweighted.

The exact problem $\IS$ asks for the total number of independent
sets, of all sizes, in a given simple undirected graph. A \emph{vertex
cover} is a vertex set meeting every edge; taking complements gives
a bijection between vertex covers and independent sets. Provan and
Ball~\cite{PB83} proved that counting vertex covers in bipartite graphs
is $\sharpP$-complete. Viewing these graphs as general graphs therefore
gives $\sharpP$-hardness of $\IS$. Membership in $\sharpP$ follows by
choosing a vertex subset and checking that it contains no edge.
Thus $\IS$ is $\sharpP$-complete.

\subsection{Permanents and matrix notation}

For an integer $n\ge0$, put $[n]=\{0,\ldots,n-1\}$ and
$\cS_{n,q}=\{S\subseteq[n]:|S|=q\}$ for $0\le q\le n$.
The complement $\overline S$ is $[n]\setminus S$ when the ambient
set is $[n]$. We write $\ind{\mathcal P}$ for the indicator of a
statement $\mathcal P$, equal to one when it holds and zero otherwise.
For a matrix $M$, $M[S,T]$ denotes its submatrix on row indices
$S$ and column indices $T$, each in increasing order; when $S,T$
themselves are row and column labels of a matrix, the same bracket
notation denotes the corresponding entry.

For a square matrix $M\in\mathbb Q^{n\times n}$, its
\emph{permanent} is
\[
 \per M=\sum_{\sigma\in\mathfrak S_n}
                   \prod_{i=0}^{n-1}M_{i,\sigma(i)},
\]
where $\mathfrak S_n$ is the set of permutations of $[n]$.
The empty permanent is one. The formula has no permutation signs:
for example, $\per\bigl(\begin{smallmatrix}a&b\\c&d\end{smallmatrix}\bigr)
=ad+bc$. Determinant identities therefore do not automatically
hold for permanents.

For a bipartite graph with ordered parts $A=\{a_0,\ldots,a_{n-1}\}$
and $B=\{b_0,\ldots,b_{n-1}\}$, its \emph{biadjacency matrix}
has entry $M_{ij}=\ind{a_ib_j\in E}$. Each nonzero product in the
permanent selects exactly one neighbor of each vertex, so
$\per M=\PM(G)$. More generally, assigning the rational edge weight
$M_{ij}$ gives the weighted sum
\[
 \sum_{\text{perfect matchings }\mathcal M}
                    \prod_{a_ib_j\in\mathcal M}M_{ij}
 =\per M.
\]
This identity uses the bipartite biadjacency matrix; it is not a
formula for perfect matchings from the adjacency matrix of an
arbitrary graph. For any graph $G$ and rational edge weights $w$, write
\[
 \PM(G;w)=\sum_{\mathcal M\text{ perfect matching of }G}
                         \prod_{e\in\mathcal M}w(e).
\]
This convention also applies to nonbipartite graphs. A weight-zero
edge makes no contribution and can be deleted.

All linear algebra is over $\mathbb Q$. Matrices act on row vectors:
rows index inputs, columns index outputs, and products execute from
left to right. Thus an entry of $M_1\cdots M_a$ is the sum over
intermediate indices of the products of the corresponding entries;
we call each such sequence a \emph{path} in the product. Write
$I_d$ for the $d\times d$ identity, or $I$ when its dimension is
clear, $M^{\mathsf T}$ for transpose, and
$M^{-\mathsf T}=(M^{-1})^{\mathsf T}$ for an invertible $M$.
The notation $\diag(d_1,\ldots,d_s)$ denotes a diagonal matrix.
The \emph{tensor product} (Kronecker product) is indexed by ordered
pairs and satisfies
\[
 (A\otimes B)[(i,j),(i',j')]=A[i,i']B[j,j'];
\]
$A^{\otimes k}$ denotes its $k$-fold repetition. Tensor factors
are placed in increasing coordinate order. A \emph{projection}
means an idempotent matrix $P$, satisfying $P^2=P$, with no
orthogonality requirement. Its rank equals its trace because its
only eigenvalues are zero and one and it is diagonalizable over
$\mathbb Q$. Here rank is the dimension of the image and trace
is the sum of diagonal entries.

\subsection{Exact interpolation}

We repeatedly use elementary polynomial interpolation. A polynomial
$f\in\mathbb Q[z]$ of degree at most $d$ is uniquely determined by
its values at distinct rational nodes $a_0,\ldots,a_d$, through
\begin{equation}\label{eq:lagrange}
 f(z)=\sum_{j=0}^d f(a_j)
              \prod_{\substack{0\le i\le d\\i\ne j}}
                       \frac{z-a_i}{a_j-a_i}.
\end{equation}
This follows since the right-hand side has the prescribed values
and the difference has more roots than its degree. If $d$ and the
binary lengths of the nodes and values are polynomially bounded,
the coefficients and any evaluation at a point of polynomial bit
length can be computed in polynomial time by exact rational
arithmetic. Indeed, each numerator and denominator is a product
of polynomially many numbers of polynomial bit length, and only
polynomially many such terms are added. A constant number of
successive interpolations retains polynomial complexity; a separate
parameter for each of polynomially many positions need not do so.

For $j\ge0$ we use the polynomial convention
$\binom tj=t(t-1)\cdots(t-j+1)/j!$, with $\binom t0=1$;
it applies also to negative or nonintegral $t$. A matrix $K$ is
\emph{nilpotent} if $K^{d+1}=0$ for some $d\ge0$, and $I+K$ is
then called \emph{unipotent}. The binomial polynomial
$\sum_{j=0}^d\binom tjK^j$ equals $(I+K)^t$ at nonnegative
integers and equals $(I+K)^{-1}=\sum_{j=0}^d(-K)^j$ at $t=-1$.
These facts justify the inverse-transfer interpolation below.

\section{WordEval and layered perfect matchings}\label{sec:wordeval}\label{sec:oracle}

We introduce the common algebraic interface before any geometric
realization. The states are subsets of a layer, not explicit rows
of a matrix stored by the reduction. A transfer records how the
vertices still awaiting partners can be matched across the next cut.

\subsection{Subset states and permanental transfers}

Fix $n\ge1$ and $0\le q\le n$. A \emph{track} is one of the
ordered positions in $[n]$, and a \emph{state} is a subset
$S\in\cS_{n,q}$. Identify $S$ with its indicator bit string, listed
in increasing track order. The \emph{$q$-particle space} is the
$\binom nq$-dimensional rational row-vector space with one coordinate
for each state. In linear expressions a state string denotes its
standard coordinate row; for example, $-1010$ denotes the negative
of that row, not a signed binary integer. A track is \emph{occupied}
when its indicator is one; a \emph{particle} means only membership
in the subset.

For an $n\times n$ matrix $M$, define its
\emph{$q$th permanental compound}, a $\binom nq\times\binom nq$
matrix, by
\[
 F_q(M)[S,T]=\per M[S,T],\qquad S,T\in\cS_{n,q}.
\]
Thus an entry counts weighted bijections from the selected rows
to the selected columns, using the permanent defined in
Section~\ref{sec:prelims}. 
At half filling, $N=2p$, let $J[S,T]=\ind{T=[N]\setminus S}$
on $\cS_{N,p}$. Thus $J^2=I$. For a cut matrix $M$, define
\[
 \cA_M=F_p(M)J,\qquad
 \cA_M[X,Y]=\per M[X,[N]\setminus Y].
\]

\subsection{Reading a perfect matching one layer at a time}

For $\ell\ge1$, consider $\ell+1$ successive layers, each initially
an independent set of $N$ vertices labeled by tracks. The
\emph{cut} between two consecutive layers is the set of edges
joining them. Its biadjacency matrix has rows in the earlier layer
and columns in the later layer. Give the cuts matrices
$M_1,\ldots,M_\ell$, with no other edges. In the first layer retain only $S$;
in the last retain only $[N]\setminus T$. Every perfect matching
then has exactly $p$ edges at each cut. The state at an internal
layer is the set of its $p$ vertices not matched from the preceding
layer. It follows that the perfect matching count is
\begin{equation}\label{eq:transfer}
 (\cA_{M_1}\cdots\cA_{M_\ell})[S,T].
\end{equation}
This identity also holds with rational edge weights, interpreting
the weighted perfect matching count as the sum of products of edge weights.

The complement has a concrete role: $Y$ describes the vertices
reserved for the next cut, so the current cut must cover its
complement. The first boundary forces $p$ edges at the first cut;
each full internal layer then forces $p$ edges at the next cut by
induction. Multiplying the transfers sums over all intermediate
states, and every perfect matching has a unique such state sequence.
\begin{figure}[htbp]
\centering
\begin{tikzpicture}[x=1.55cm,y=.56cm,
 vertex/.style={circle,draw,fill=white,inner sep=1.65pt},
 retained/.style={circle,draw,fill=black,inner sep=1.65pt},
 selected/.style={line width=1.25pt,draw=blue!70!black},
 unused/.style={gray!40,line width=.35pt}]
% N=4, p=2, four cuts. Every displayed solid blue edge belongs
% to one perfect matching. Dashed circles are deleted boundary vertices.
\foreach \j in {0,...,4}{
  \foreach \i in {0,...,3}{\coordinate (v\j\i) at (\j,-\i);}
  \node[above=6pt] at (v\j0) {$C_{\j}$};
}
\foreach \j/\k in {0/1,1/2,2/3,3/4}{
  \foreach \i in {0,...,3}{\foreach \r in {0,...,3}{
    \ifnum\j=0
      \ifodd\i\else\draw[unused] (v\j\i)--(v\k\r);\fi
    \else
      \ifnum\k=4
        \ifodd\r\else\draw[unused] (v\j\i)--(v\k\r);\fi
      \else\draw[unused] (v\j\i)--(v\k\r);\fi
    \fi
  }}
}
\foreach \a/\b in {00/10,02/11,12/20,13/21,22/30,23/31,32/40,33/42}
  \draw[selected] (v\a)--(v\b);
\foreach \j in {1,2,3}{
 \foreach \i in {0,...,3}{\node[vertex] at (v\j\i) {};}
 \foreach \i in {2,3}{\node[retained] at (v\j\i) {};}
}
\foreach \j in {0,4}{
 \foreach \i in {0,2}{
   \ifnum\j=0\node[retained] at (v\j\i) {};
   \else\node[vertex] at (v\j\i) {};\fi
 }
 \foreach \i in {1,3}{\node[circle,draw=gray!65,dashed,inner sep=1.65pt] at (v\j\i) {};}
}
\foreach \i in {0,...,3}{\node[left=6pt] at (v0\i) {\scriptsize $\i$};}
\foreach \j/\lab in {.5/1,1.5/2,2.5/3,3.5/4}
 \node at (\j,-4.05) {$M_{\lab}$};
\node[align=center,font=\scriptsize] at (0,-5.1) {$S=\{0,2\}$};
\node[align=center,font=\scriptsize] at (2,-5.1) {black internal vertices:\\state left for the next cut};
\node[align=center,font=\scriptsize] at (4,-5.1) {$\overline T=\{0,2\}$};
\end{tikzpicture}
\caption{Layer transfer at $N=4$, $p=2$, with four complete cuts for
illustration. Dashed boundary vertices are deleted; thin gray lines
are available edges and thick blue lines form one perfect matching.
Exactly two edges cross each cut. At an internal layer, black vertices
are not matched from the preceding layer and must be matched to the
next one. If that next layer has outgoing state $Y$, the current cut
must use its complementary set $\overline Y$, explaining the factor
$J$ in $\cA_M[X,Y]=\per M[X,\overline Y]$. The last retained set is
$\overline T$ because its vertices are all matched from the left;
these last-layer vertices are therefore white.}
\label{fig:layered-transfer}
\end{figure}

We use this exponentially large state space implicitly to encode a
counting computation. The relevant operations are specified by
small perturbations of a particularly simple cut matrix.

\subsection{Four normalized operations and WordEval}

Set
\[
 U_{ij}=\ind{i\le j},\qquad L=U^{\mathsf T}.
\]
For $S=\{s_1<\cdots<s_q\}$ and $T=\{t_1<\cdots<t_q\}$,
\[
 F_q(U)[S,T]\ne0\ \Longrightarrow\
 s_j\le t_j\quad(1\le j\le q),\qquad F_q(U)[S,S]=1.
\]
Thus $F_q(U)$ is unit upper triangular
(upper triangular with every diagonal entry equal to one)
when states are ordered by increasing element sum, with arbitrary
tie-breaking, and is invertible.

Let $e_{ab}$ be the elementary matrix having a one in position $(a,b)$
and zeros elsewhere.
For $0\le i\le n-2$, put
\[
 V_i=U+e_{i+1,i},\qquad Z_i=U-e_{i,i},
\]
and define operations on the $q$-particle space by
\begin{equation}\label{eq:RD}
 \R_i^{(n,q)}=F_q(V_i)F_q(U)^{-1},\qquad
 \D_i^{(n,q)}=F_q(Z_i)F_q(U)^{-1}.
\end{equation}
Let $J_q$ have rows indexed by $\cS_{n,q}$ and
columns by $\cS_{n,n-q}$, with
$J_q[S,T]=\ind{T=[n]\setminus S}$. Define
\begin{equation}\label{eq:BE}
 \begin{split}
 \B_i^{(n,q)}&=J_q(\R_i^{(n,n-q)})^{\mathsf T}J_{n-q},\\
 \E_i^{(n,q)}&=J_q(\D_i^{(n,n-q)})^{\mathsf T}J_{n-q}.
 \end{split}
\end{equation}
In particular, complementing a local $q$-particle state uses the
$(n-q)$-particle space. Superscripts are omitted when determined
by context.

\begin{definition}
An instance of $\Word$ specifies an even number of tracks $N=2p\ge2$,
two states $S,T\in\cS_{N,p}$, and an explicit word $w$ in the
operations $\R_i,\D_i,\B_i,\E_i$, with $0\le i\le N-2$.
A \emph{word} is a finite ordered sequence of these letters,
evaluated on the specified $(N,p)$ space; its length is its number
of letters. For $w=X_1\cdots X_g$, write
\[
 W_w=X_1\cdots X_g,\qquad
 \Word(N,w,S,T)=W_w[S,T],\qquad W_{\varnothing}=I.
\]
The boundary states are given as $N$-bit vectors,
so the input length is at least $N$.
\end{definition}

The matrices may have exponentially many rows and columns, but are
specified implicitly by $N,p$ and their letter indices. A known
rational scalar multiplying a word is handled separately by exact
arithmetic and is not an additional oracle input letter.

\paragraph{Why normalize the perturbations?}
The upper triangular matrix $U$ permits motion only to later
tracks. The factors $F_q(U)^{-1}$ remove this background transfer,
leaving the effect of adding one subdiagonal edge or removing one
diagonal edge. Already at $n=2$, $q=1$, one obtains
\[
 U=\begin{pmatrix}1&1\\0&1\end{pmatrix},\qquad
 \R_0=\begin{pmatrix}1&0\\1&0\end{pmatrix},\qquad
 \D_0=\begin{pmatrix}0&1\\0&1\end{pmatrix}.
\]
These two operations send either one-particle state to a prescribed
track. On larger spaces the same normalization can affect later
tracks, so the operations cannot simply be assumed local. The flow
and projection arguments in Sections~\ref{sec:flow}--\ref{sec:projection}
are what make their Boolean-encoded actions composable.

\begin{lemma}\label{lem:word-bitlength}
For every WordEval instance with $N$ tracks and word $w$,
\[
 W_w[S,T]\in\mathbb Z,\qquad
 \operatorname{bitlength}(W_w[S,T])=\poly(N+|w|).
\]
\end{lemma}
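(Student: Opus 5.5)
We need to show that each letter matrix has integer entries of bounded size, and then bound the entries of the product.

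\emph{Integrality.} $F_q(U)$ is unit upper triangular with integer entries in a suitable order of states, so its determinant is $1$ and its inverse is an integer matrix. Hence $\R_i=F_q(V_i)F_q(U)^{-1}$ and $\D_i=F_q(Z_i)F_q(U)^{-1}$ are integer matrices. By \eqref{eq:BE}, $\B_i$ and $\E_i$ are obtained from the corresponding $(N,N-q)$ operations by transposition and conjugation with the permutation-like matrices $J$, so they are integer matrices too. Therefore $W_w[S,T]\in\mathbb Z$.

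\emph{Entry bounds.} I will bound the absolute row sums of each letter by $2^{\poly(N)}$. The space has dimension $\binom Nq\le 2^N$, and every entry of $F_q(V_i)$ or $F_q(Z_i)$ is a permanent of a $0$--$1$ matrix (for $Z_i$, a $0$--$1$ matrix with one entry possibly $0$), hence at most $q!\le N^N$. For $F_q(U)^{-1}$ I would use a direct description. $U^{-1}=I-e^{+}$, where $e^{+}$ is the superdiagonal shift, but permanental compounds are not multiplicative, so $F_q(U)^{-1}\ne F_q(U^{-1})$ in general and I cannot just invert $U$. Instead I will bound the inverse of the unit upper triangular integer matrix $T=F_q(U)$ of dimension $d\le 2^N$ via the Neumann series $T^{-1}=\sum_{k=0}^{d-1}(I-T)^k$. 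Since $I-T$ is strictly upper triangular with entries at most $N^N$, every entry of $(I-T)^k$ is at most $d^{k}N^{Nk}$. This gives entries of $T^{-1}$ bounded by $d\cdot (dN^N)^{d}$, which is doubly exponential and too weak.

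This is the main obstacle, and I will sharpen the bound with a nilpotency-depth argument. A nonzero entry $F_q(U)[S,T]$ requires $s_j\le t_j$ for all $j$, so any chain $S_0<S_1<\cdots$ of distinct states in this componentwise order strictly increases the element sum, which lies in $[0,qN]$. Hence $(I-T)^k=0$ for $k>qN\le N^2$, and the Neumann series has only $N^2+1$ terms. Each term has entries at most $(dN^N)^{N^2}=2^{O(N^3\log N)}$. So all entries of $F_q(U)^{-1}$, and therefore of each letter, are bounded by $2^{\poly(N)}$, with absolute row sums also bounded by $2^{c(N)}$ for some polynomial $c$, after absorbing the factor $d\le 2^N$.

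\emph{Product bound.} The induced $\infty$-norm is submultiplicative, so $|W_w[S,T]|\le\|W_w\|_\infty\le 2^{c(N)|w|}$. The bit length is therefore at most $c(N)|w|+1=\poly(N+|w|)$, as claimed. The same argument applies unchanged to the $(N,N-q)$ spaces used by $\B_i$ and $\E_i$.
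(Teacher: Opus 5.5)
Your proof is correct and follows essentially the same route as the paper. Both arguments get integrality from unit triangularity, bound the nilpotence depth of $F_p(U)-I$ polynomially because the element sum strictly increases, use the resulting truncated Neumann series to bound the inverse, and finish with a product bound whose logarithm is linear in $|w|$. The differences are cosmetic: you use the cruder depth bound $qN$ instead of the paper's $p^2$, and submultiplicativity of the $\infty$-norm instead of entrywise path counting.
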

\begin{proof}
Let $N=2p$, $F=F_p(U)$, and $d=\binom Np\le2^N$.
Every matrix $M\in\{U,V_i,Z_i\}$ has entries in $\{0,1\}$.
By the definition of the permanent, an entry
$F_p(M)[S,T]=\per M[S,T]$ is a sum over the $p!$ bijections
from $S$ to $T$, each contributing zero or one. Hence
\[
 F_p(M)[S,T]=\sum_{\pi:S\overset{\sim}{\to}T}
       \prod_{u\in S}M_{u,\pi(u)}\in\{0,1,\ldots,p!\}.
\]

We next justify a polynomial bound on the nilpotence index of
$F-I$, rather than using only its exponentially large dimension.
For a subset state $S$, put $\omega(S)=\sum_{u\in S}u$.
A nonzero permanent term in $F[S,T]$ is a bijection
$\pi:S\to T$ satisfying $\pi(u)\ge u$ for every $u\in S$.
Hence $\omega(T)\ge\omega(S)$; equality forces $\pi(u)=u$
for every $u$, and therefore $S=T$. This also proves $F[S,S]=1$:
the identity is the only contributing bijection on $S$.
Consequently, for $B=F-I$,
\[
 B[S,T]\ne0\ \Longrightarrow\ \omega(T)\ge\omega(S)+1.
\]
On $\cS_{2p,p}$ its minimum and maximum are
\[
 \frac{p(p-1)}2\quad\hbox{and}\quad\frac{p(3p-1)}2,
\]
whose difference is $p^2$. A nonzero term in an entry of $B^a$
would require $a$ successive strict increases, so
$B^{p^2+1}=0$. The finite geometric-series identity now gives
\[
 F^{-1}=\sum_{a=0}^{p^2}(-B)^a,
\]
which has integer entries.

Each entry of $B$ has absolute value at most $p!$.
For $a\ge1$, matrix multiplication expresses an entry of $B^a$
as a sum of $d^{a-1}$ products of $a$ such entries, giving the
bound $d^{a-1}(p!)^a$. Including the identity term, every entry
of $F^{-1}$ therefore has absolute value at most
\[
 C_0=(p^2+1)(dp!)^{p^2}.
\]
By~\eqref{eq:RD}, every entry of $\R_i$ or $\D_i$ is bounded
in absolute value by $C_1=dp!C_0$. At half filling, the
complements and transposes in~\eqref{eq:BE} only permute these
entries, so the same integer bound holds for $\B_i$ and $\E_i$.

Finally, an entry of a product of $g\ge1$ letters is an integer
of absolute value at most $d^{g-1}C_1^g$. Since
$\log_2 d\le N$ and $\log_2(p!)\le p\log_2(p+1)$,
\[
 \log_2 C_1
 \le \log_2(p^2+1)+(p^2+1)\bigl(N+p\log_2(p+1)\bigr).
\]
The binary length of the answer, including a sign bit, is thus
polynomial in $N+g$. For the empty word the answer is an entry
of the identity and has constant bit length.
\end{proof}

\subsection{A common paired-transfer interface}

For $N=2p$, let $\mathcal P_N$ consist of the pairs
\[
 (U,L),\quad(V_i,L),\quad(Z_i,L),\quad
 (U,V_i^{\mathsf T}),\quad(U,Z_i^{\mathsf T})
 \qquad(0\le i\le N-2).
\]
The \emph{paired layered evaluation problem}, denoted by
$\mathrm{PairEval}$, takes boundary states
$S,T\in\cS_{N,p}$ and an explicit sequence of $h\ge1$ pairs
$(M_j,H_j)\in\mathcal P_N$, and asks for
\begin{equation}\label{eq:paired-value}
 Z=(\cA_{M_1}\cA_{H_1}\cdots\cA_{M_h}\cA_{H_h})[S,T].
\end{equation}
This is an ordinary perfect-matching count on the independent-layer
graph above, with $2h+1$ layers and $n_0=4ph$ retained vertices.
Its two color classes each have $2ph$ vertices.

\begin{lemma}\label{lem:paired-transfer-interface}
One has $\Word\le_T\mathrm{PairEval}$.
For a word of length $g\ge1$, the reduction uses $gp^2+1$ instances,
each with $h=g(t+1)$ pairs for an integer $0\le t\le gp^2$.
\end{lemma}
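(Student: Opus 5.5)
The plan is to write every letter as a single paired transfer times a power of one fixed background operator, and then to remove the inverse powers by interpolating a shared exponent $t$. Put $F=F_p(U)$ and
\[
 G=\cA_U\cA_L=F\,J\,F_p(L)\,J .
\]
Since $F_p(M^{\mathsf T})=F_p(M)^{\mathsf T}$, we have $F_p(L)=F^{\mathsf T}$. For $M\in\{V_i,Z_i\}$ this gives
\[
 \cA_M\cA_L=F_p(M)\,J F^{\mathsf T} J=F_p(M)F^{-1}\,G ,
\]
so $\R_i=(\cA_{V_i}\cA_L)G^{-1}$ and $\D_i=(\cA_{Z_i}\cA_L)G^{-1}$.

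For the complemented letters I use half filling, where $q=n-q=p$ and $J_p=J$. Then
\[
 \B_i=J\bigl(F_p(V_i)F^{-1}\bigr)^{\mathsf T}J=(JF^{-\mathsf T}J)\,(JF_p(V_i)^{\mathsf T}J).
\]
On the other hand,
\[
 \cA_U\cA_{V_i^{\mathsf T}}=F\,J F_p(V_i)^{\mathsf T} J=F\,(JF^{\mathsf T}J)\,(JF^{-\mathsf T}J)(JF_p(V_i)^{\mathsf T}J)=G\,\B_i .
\]
Hence $\B_i=G^{-1}(\cA_U\cA_{V_i^{\mathsf T}})$, and likewise $\E_i=G^{-1}(\cA_U\cA_{Z_i^{\mathsf T}})$. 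Each of the $g$ letters is therefore one pair from $\mathcal P_N$ together with one factor $G^{-1}$, placed on the right for $\R_i,\D_i$ and on the left for $\B_i,\E_i$.

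Next I would show that $G$ is unipotent with $(G-I)^{p^2+1}=0$, reusing the weight $\omega(S)=\sum_{u\in S}u$ from the proof of Lemma~\ref{lem:word-bitlength}. That proof already shows that $F$ is unit upper triangular with respect to $\omega$. For the second factor, $(JF^{\mathsf T}J)[S,T]=\per U[\overline T,\overline S]$. This is nonzero only if the sorted elements of $\overline T$ lie termwise below those of $\overline S$, which forces $\omega(\overline T)\le\omega(\overline S)$, that is, $\omega(T)\ge\omega(S)$. Equality forces $T=S$, and the diagonal entry is then $1$. So $G$ is a product of two unit upper triangular matrices and is itself unit upper triangular. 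Off-diagonal entries of $K=G-I$ strictly increase $\omega$, whose range on $\cS_{2p,p}$ is $p^2$, so $K^{p^2+1}=0$. By the binomial facts in the preliminaries,
\[
 G^t=\sum_{j=0}^{p^2}\binom tj K^j
\]
as a matrix polynomial in $t$ of degree at most $p^2$. It agrees with $G^t$ for integers $t\ge0$ and with $G^{-1}$ at $t=-1$.

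Now define $Z(t)$ by replacing each $G^{-1}$ with $G^t$, that is, with $t$ extra pairs $(U,L)$ adjacent to the letter's pair. For integers $t\ge0$, $Z(t)[S,T]$ is a PairEval value with $h=g(t+1)$ pairs. Entrywise, $Z(t)$ is a polynomial of degree at most $gp^2$ in $t$. I would query it at $t=0,1,\ldots,gp^2$, which is $gp^2+1$ instances, and use Lagrange interpolation~\eqref{eq:lagrange} to evaluate it at $t=-1$; that value is $W_w[S,T]$. Since $g,p$ are polynomial in the input length, the degree and node count are polynomial. The oracle values are bounded by $(p!)^{2h}\binom Np^{2h}$, so their bit lengths are polynomial, and the arithmetic is exact and polynomial time.

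The main obstacle is the bookkeeping in the $\B_i,\E_i$ identity. One has to check that transposition and conjugation by $J$ combine exactly so that $G$ appears on the left rather than some other background operator. One also has to confirm the weight argument for $JF^{\mathsf T}J$, in particular that complementation reverses the $\omega$ order, so that both factors of $G$ are triangular in the same order.
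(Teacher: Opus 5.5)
Your proposal is correct and follows essentially the paper's proof. It uses the same background operator $\cA_U\cA_L$ and the same element-sum argument that makes it unipotent with nilpotence index at most $p^2+1$. It also places the $(U,L)^t$ blocks the same way (to the right of $\R_i,\D_i$ and to the left of $\B_i,\E_i$) and interpolates at $t=-1$ from $t=0,\ldots,gp^2$. The only detail you leave implicit is the empty word, which must be answered directly as $I[S,T]$ because $\mathrm{PairEval}$ requires $h\ge1$.
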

\begin{proof}
For the rest of this proof write $F=F_p(U)$ and $J=J_p$.
Set
\[
 W=\cA_U\cA_L=FJF^{\mathsf T}J.
\]
Both $F$ and $JF^{\mathsf T}J$ are unit upper triangular in the
element-sum order. Every strict transition increases that sum,
whose range on $\cS_{2p,p}$ has width $p^2$. Consequently
\begin{equation}\label{eq:unipotent}
 W=I+K_0,\qquad K_0^{p^2+1}=0,
 \qquad W^t=\sum_{j=0}^{p^2}\binom tjK_0^j.
\end{equation}
The last expression is a matrix polynomial in $t$, agrees with
ordinary powers for all nonnegative integers $t$, and has value
$W^{-1}$ at $t=-1$.

Replace each letter of a word by the indicated sequence of two-cut
pairs, all using the same nonnegative integer $t$:
\begin{equation}\label{eq:letter-pairs}
\begin{array}{c|l}
\text{letter}&\text{sequence of pairs}\\\hline
\R_i&(V_i,L)(U,L)^t\\
\D_i&(Z_i,L)(U,L)^t\\
\B_i&(U,L)^t(U,V_i^{\mathsf T})\\
\E_i&(U,L)^t(U,Z_i^{\mathsf T}).
\end{array}
\end{equation}
Here $(M,H)$ means two consecutive cuts with transfer
$\cA_M\cA_H$; juxtaposition concatenates pairs, and $(U,L)^t$
means $t$ repetitions, with no pair when $t=0$.
Their transfer products at $t=-1$ are precisely the four letters.
For example,
\[
 \cA_{V_i}\cA_LW^{-1}=F_p(V_i)F^{-1}=\R_i,
\]
whereas
\[
 W^{-1}\cA_U\cA_{V_i^{\mathsf T}}
 =\cA_L^{-1}\cA_{V_i^{\mathsf T}}
 =JF^{-\mathsf T}F_p(V_i)^{\mathsf T}J=\B_i.
\]
The identities for $\D_i,\E_i$ follow with $Z_i$ in place
of $V_i$. If $P_w(t)$ is the resulting matrix entry for a word
of length $g$, then
\[
 \deg P_w\le gp^2,\qquad P_w(-1)=W_w[S,T].
\]
Thus $P_w(0),\ldots,P_w(gp^2)$ determine the desired word entry
by interpolation at $-1$.

The empty word is handled directly. For nonempty words the displayed
bounds give polynomially many pairs in every sample. Values of
\eqref{eq:paired-value} have polynomial bit length, being perfect
matching counts on $4ph$ vertices, and exact interpolation preserves
polynomial bit complexity by Section~\ref{sec:prelims}.
\end{proof}

The negative evaluation point is used only in exact postprocessing:
all sampled values of $t$ are nonnegative and all paired layer graphs
are unweighted.

For each target graph class, it therefore suffices to recover
\eqref{eq:paired-value} using polynomially many queries of polynomial
size in $N+h$ and polynomial-time exact postprocessing. Indeed,
the lemma bounds both the number of paired instances and their
lengths by polynomials in $N+g$; composing such an evaluation
procedure with these instances remains a polynomial-time Turing
reduction. This separates the common algebraic argument from the
three geometric realizations.

The following five sections prove that evaluating
these implicitly represented word matrices is $\sharpP$-hard.

\section{One-way flow and restriction to a block}\label{sec:flow}

The transfer operations are defined globally and need not be local
in the usual tensor-product sense. We establish a weaker property
that will allow an encoding projection to restore locality.

A \emph{transition} of a matrix is an input-output pair with
nonzero entry. In this section a cut $j$ separates tracks
$0,\ldots,j-1$ from tracks $j,\ldots,n-1$; it is different from
a cut between layers in Section~\ref{sec:oracle}. For
$S\in\cS_{n,q}$ define its \emph{prefix count} by
\[
 c_j(S)=|S\cap\{0,\ldots,j-1\}|\qquad(0\le j\le n).
\]
A transition $S\to T$ is \emph{rightward} if
$c_j(T)\le c_j(S)$ for every $j$.

\begin{lemma}\label{lem:flow}
Let $0\le i\le n-2$, $S,T\in\cS_{n,q}$, and
$\mathsf O\in\{\R,\D,\B,\E\}$.
If $\mathsf O_i[S,T]\ne0$, then $S\cap[i]=T\cap[i]$ and
\[
 c_j(T)-c_j(S)\le
 \begin{cases}
  0,&\mathsf O\in\{\D,\E\},\\
  \ind{j=i+1},&\mathsf O\in\{\R,\B\},
 \end{cases}
 \qquad 0\le j\le n.
\]
\end{lemma}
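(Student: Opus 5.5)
We would prove the $\R$ and $\D$ cases directly and obtain the $\B$ and $\E$ cases by duality. For the duality step, let $\overline S=[n]\setminus S$. Then $c_j(\overline S)=j-c_j(S)$, so complementation reverses the inequality $c_j(T)-c_j(S)\le\epsilon_j$. Transposition swaps the roles of input and output. By \eqref{eq:BE},
\[
 \B_i^{(n,q)}[S,T]=\R_i^{(n,n-q)}[\overline T,\overline S].
\]
If $\R_i^{(n,n-q)}$ satisfies the bound for the transition $\overline T\to\overline S$, then
\[
 c_j(\overline S)-c_j(\overline T)=c_j(T)-c_j(S)\le\epsilon_j .
\]
The condition $S\cap[i]=T\cap[i]$ is also preserved under complementation. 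So it suffices to treat $\R_i$ and $\D_i$ for all particle numbers $q$.

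For $\R_i$ and $\D_i$ we write $F_q(V_i)=F_q(U)+\Delta^{\R}$ and $F_q(Z_i)=F_q(U)-\Delta^{\D}$. This gives $\R_i=I+\Delta^{\R}F_q(U)^{-1}$ and $\D_i=I-\Delta^{\D}F_q(U)^{-1}$. Expanding the permanent along the perturbed entry gives the two increments explicitly:
\[
 \Delta^{\R}[S,T]=\ind{i+1\in S,\ i\in T}\per U[S\setminus\{i+1\},T\setminus\{i\}],
\]
\[
 \Delta^{\D}[S,T]=\ind{i\in S\cap T}\per U[S\setminus\{i\},T\setminus\{i\}].
\]
The identity contributes only $S=T$, which is trivially fine. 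So everything reduces to controlling the transitions of $\Delta F_q(U)^{-1}$.

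The tool for this is a factorization of $F_q(U)$ into local operators. Since $U$ is a Ferrers board, $\per U[S,T]$ has a product formula. We would use it to show that $F_q(U)=H_0H_1\cdots H_{n-2}$, possibly in reverse order. Here $H_k=F_q(I+e_{k,k+1})$ is the unipotent hop that may move a particle from track $k$ to track $k+1$, weighted so as to account for the case where track $k+1$ is occupied. We would verify this factorization first in the one-particle case and then in general, using the Ferrers product formula and induction on $n$ by deleting the last track.

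With the factorization in hand, $F_q(U)^{-1}=H_{n-2}^{-1}\cdots H_0^{-1}$, and each $H_k^{-1}$ is again a local hop, since $(I+e_{k,k+1})^{-1}=I-e_{k,k+1}$. The next step is to show that $\Delta$ itself factors as $H_0\cdots H_{i-1}\,G_i\,H_{i}\cdots H_{n-2}$, with a local middle factor $G_i$ supported on tracks $i,i+1$. For $\R_i$ this factor is a backward hop $i+1\to i$; for $\D_i$ it is a diagonal kill at track $i$. The hops on tracks below $i$ then cancel against their inverses. The remaining factor involves only $G_i$ and hops on tracks $\ge i$, followed by inverses of those hops.

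Such a product never touches tracks in $[i]$, so $S\cap[i]=T\cap[i]$. Each forward hop and each inverse hop moves a particle only rightward, so these factors never increase any prefix count. The backward hop in $\R_i$ raises $c_{i+1}$ by one and leaves every other $c_j$ unchanged, which gives exactly the correction $\ind{j=i+1}$. The factor $G_i$ for $\D_i$ changes no prefix count.

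We expect the main obstacle to be the middle factorization. Permanental compounds are not multiplicative in general, so each claimed product identity must be checked directly. This is possible here because every factor is an elementary perturbation acting on two adjacent tracks, so the check reduces to expanding permanents along one column. If the exact factorization fails, the fallback is to take a nonzero entry of $\Delta F_q(U)^{-1}$, pick a nonzero path through $\Delta$ and the Neumann series of $F_q(U)^{-1}$ from the proof of Lemma~\ref{lem:word-bitlength}, and handle the prefix $[i]$ by a sign-reversing involution. That would show that all contributions moving particles of $[i]$ cancel.
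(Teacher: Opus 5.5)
There is a genuine gap. Your duality reduction for $\B_i,\E_i$ is correct, and so are the expansions $\R_i=I+\Delta^{\R}F_q(U)^{-1}$ and $\D_i=I-\Delta^{\D}F_q(U)^{-1}$. But the factorization that the rest of your argument rests on is false as soon as $q\ge2$.

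Take $n=3$, $q=2$. The row of $F_2(U)$ at $\{0,1\}$ is $\{0,1\}+\{0,2\}+2\,\{1,2\}$, since $\per\bigl(\begin{smallmatrix}1&1\\1&1\end{smallmatrix}\bigr)=2$. On the other hand, $H_0=F_2(I+e_{01})$ fixes $\{0,1\}$ and $\{1,2\}$ and sends $\{0,2\}\mapsto\{0,2\}+\{1,2\}$. Also $H_1=F_2(I+e_{12})$ sends $\{0,1\}\mapsto\{0,1\}+\{0,2\}$ and fixes the other two states. So the $\{0,1\}$ rows of $H_0H_1$ and $H_1H_0$ are $\{0,1\}+\{0,2\}$ and $\{0,1\}+\{0,2\}+\{1,2\}$, and neither matches.

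The coefficient $2$ comes entirely from terms in which both particles pass through track $1$. This is exactly the non-multiplicativity of $F_q$ that you flagged. No reweighting of one particle-conserving two-track factor per adjacent pair can repair it:
\begin{itemize}
\item In the order $H_0H_1$, the pair on tracks $0,1$ can only be rescaled, and nothing afterwards acts on track $0$. So $\{1,2\}$ is unreachable.
\item In the order $H_1H_0$, the entries $F_2(U)[\{0,2\},\{0,2\}]=F_2(U)[\{0,2\},\{1,2\}]=1$ force $H_0$ to give a lone particle on track $0$ equal coefficients for staying and hopping. That contradicts the values $1$ and $2$ in row $\{0,1\}$.
\end{itemize}
Hence the middle factorization of $\Delta$ and the cancellation of the hops below $i$ are unavailable. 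That was your only argument for $S\cap[i]=T\cap[i]$.

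The fallback does not close this. Pathwise, particles in $[i]$ really do move: for $n=3$, $i=1$, $q=2$, the row of $\Delta^{\D}$ at $\{0,1\}$ is $\{0,1\}+\{1,2\}$, yet $\D_1$ sends $\{0,1\}$ to $\{0,2\}$. So $S\cap[i]=T\cap[i]$ is a pure cancellation statement, and the unspecified involution would be the whole proof, not a detail.

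The prefix-count inequalities, by contrast, need no factorization. Every nonzero term of $\Delta^{\D}$, of $\Delta^{\R}$, and of the Neumann series for $F_q(U)^{-1}$ comes from a bijection moving particles weakly right, apart from $i+1\mapsto i$ in $\Delta^{\R}$. A nonzero product entry has a nonzero path, so the bounds follow.

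The missing idea is the basis in which locality actually holds. Work in $\mathbb Q[z_0,\ldots,z_{n-1}]/(z_0^2,\ldots,z_{n-1}^2)$ with $r_j=\sum_{v\ge j}z_v$.
\begin{itemize}
\item Row $S$ of $F_q(M)$ is the monomial expansion of the product of the row forms of $M$ indexed by $S$.
\item So right multiplication by $F_q(U)^{-1}$ rewrites that product in the basis $\rho_S=\prod_{j\in S}r_j$.
\item In this basis, $\D_i$ simply replaces the factor $r_i$ of $\rho_S$ by $r_{i+1}$ when $i\in S$.
\item Straightening with $r_j^a=a\,r_jr_{j+1}^{a-1}-(a-1)r_{j+1}^a$ only pushes factors to larger indices. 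So the factors $r_j$ with $j<i$ are never touched, which proves both assertions for $\D_i$.
\item The identity $r_i^2+r_{i+1}^2=2r_ir_{i+1}$ gives $\R_i+\D_i=I+\sigma_i$, where $\sigma_i$ swaps tracks $i$ and $i+1$. This transfers both claims to $\R_i$.
\end{itemize}
Your duality step then finishes $\B_i$ and $\E_i$.
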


\begin{proof}
Work in the square-free commutative algebra
\[
 \mathscr A_n=\mathbb Q[z_0,\ldots,z_{n-1}]/(z_0^2,\ldots,z_{n-1}^2),
 \qquad r_j=\sum_{v=j}^{n-1}z_v,\quad r_n=0.
\]
The quotient notation means that the variables commute and satisfy
$z_j^2=0$: terms containing a repeated $z_j$ vanish. Consequently
the square-free monomials $z_S=\prod_{j\in S}z_j$, for
$S\subseteq[n]$, form a basis. The \emph{homogeneous degree-$q$
part} is the span of those with $|S|=q$.
The products $\prod_{j\in S}r_j$, for $S\in\cS_{n,q}$, form a
basis of this degree-$q$ part: their coefficient matrix
in the monomial basis is $F_q(U)$, which is invertible.
We identify state $S$ with the row-form product
$\rho_S=\prod_{j\in S}r_j$ in this latter basis. More generally,
row $S$ of $F_q(M)$ is the coefficient row, in the $z$-monomial
basis, of the product of the row forms of $M$ indexed by $S$.
Right multiplication by $F_q(U)^{-1}$ converts these coefficients
to the $\rho$-basis. The matrix $Z_i$ changes only row $i$ of $U$,
from $r_i$ to $r_{i+1}$, whereas $V_i$ changes only row $i+1$,
from $r_{i+1}$ to $r_i$. Thus~\eqref{eq:RD} means that $\D_i$
replaces $r_i$ by $r_{i+1}$ when $i\in S$, and $\R_i$ replaces
$r_{i+1}$ by $r_i$ when $i+1\in S$; otherwise the respective
operation fixes $\rho_S$. Repeated factors
can be \emph{straightened}, meaning rewritten as a linear combination
of the distinct-index row-form products, using
\begin{equation}\label{eq:straightening}
 r_j^a=a r_jr_{j+1}^{a-1}-(a-1)r_{j+1}^a\qquad(a\ge2).
\end{equation}
Indeed, $r_j=r_{j+1}+z_j$ and $z_j^2=0$ give this identity.
Straightening moves factors only to larger indices; it terminates
at $r_n=0$ if a repeated last factor occurs. The replacement made
by $\D_i$ is itself rightward, and every affected factor has
index at least $i$. Thus $\D_i$ satisfies the stated prefix-count
bounds and fixes every position less than $i$.

Let $S_i$ be the permutation matrix exchanging positions $i,i+1$
in subset states. Put $P_S=\prod_{j\in S\setminus\{i,i+1\}}r_j$.
The following table lists the input and output factors before
straightening; every displayed factor is multiplied by $P_S$.
\[
\begin{array}{c|c|c|c|c}
 (\ind{i\in S},\ind{i+1\in S})
 &\text{input}&\R_i&\D_i&I+S_i\\\hline
 (0,0)&1&1&1&2\\
 (1,0)&r_i&r_i&r_{i+1}&r_i+r_{i+1}\\
 (0,1)&r_{i+1}&r_i&r_{i+1}&r_i+r_{i+1}\\
 (1,1)&r_ir_{i+1}&r_i^2&r_{i+1}^2&2r_ir_{i+1}
\end{array}
\]
The first three rows equate the sum of the $\R_i,\D_i$ outputs
with the last output. In the last row,
\[
 (r_i-r_{i+1})^2=z_i^2=0
 \quad\Longrightarrow\quad r_i^2+r_{i+1}^2=2r_ir_{i+1}.
\]
Multiplying these equalities by $P_S$ and taking their coordinates
in the $\rho$-basis proves, for every input row $S$,
\begin{equation}\label{eq:RplusD}
 \R_i+\D_i=I+S_i.
\end{equation}
The swap changes only positions $i,i+1$ and increases a prefix
count by at most one, only at cut $i+1$. Since
$\R_i=I+S_i-\D_i$, its support satisfies the claimed bounds.

Finally, complementing both states and reversing a transition
preserves its prefix-count inequalities, since
$c_j(\overline S)=j-c_j(S)$. Apply this observation to
\eqref{eq:BE} to obtain the assertions for $\B_i,\E_i$.
\end{proof}

Let $\mathcal B=\{a,\ldots,a+m-1\}\subseteq[n]$ be a consecutive
block with $m\ge1$, and let $w$ use only indices $i$ with
$i,i+1\in\mathcal B$. Write
$W_w$ for its product matrix on the $(n,q)$ space. Define
\[
 k_{\mathcal B}(S)=|S\cap\mathcal B|,\qquad
 s_{\mathcal B}(S)=\{u-a:u\in S\cap\mathcal B\}.
\]
The \emph{local word on $\mathcal B$} replaces each letter index
$i$ by $i-a$. Let $W_{\mathrm{loc}}^{(m,k)}$ be its product
on the isolated $(m,k)$ space; in particular, its $\B,\E$
operations use~\eqref{eq:BE} with complement taken inside
$\mathcal B$ and hence in the local $(m-k)$-particle space.

\begin{lemma}[Restriction without boundary flow]\label{lem:restriction}
For $S,T\in\cS_{n,q}$,
\[
 W_w[S,T]\ne0\quad\Longrightarrow\quad
 k_{\mathcal B}(T)\le k_{\mathcal B}(S).
\]
If $k_{\mathcal B}(S)=k_{\mathcal B}(T)=k$, then
\begin{equation}\label{eq:block-restriction}
 W_w[S,T]=\ind{S\setminus\mathcal B=T\setminus\mathcal B}\,
 W_{\mathrm{loc}}^{(m,k)}[s_{\mathcal B}(S),s_{\mathcal B}(T)].
\end{equation}
\end{lemma}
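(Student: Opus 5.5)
Both assertions will be proved letter by letter, using the description of $\R_i,\D_i$ in the row-form algebra $\mathscr A_n$ from the proof of Lemma~\ref{lem:flow}, together with the prefix-count bounds of that lemma. Since $k_{\mathcal B}(S)=c_{a+m}(S)-c_a(S)$, it suffices to track the prefix counts at the two cuts $a$ and $a+m$.

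\emph{Monotonicity.} Take a letter with index $i$, where $a\le i$ and $i+1\le a+m-1$. Lemma~\ref{lem:flow} gives $S\cap[i]=T\cap[i]$, so $c_a(T)=c_a(S)$. It also gives $c_{a+m}(T)\le c_{a+m}(S)$, because the possible $+1$ occurs only at cut $i+1\le a+m-1<a+m$. Hence $k_{\mathcal B}$ is nonincreasing along every transition of a letter, and therefore along every path in the product. This proves the first claim.

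\emph{The equality case.} Suppose $k_{\mathcal B}(S)=k_{\mathcal B}(T)=k$. Then every path contributing to $W_w[S,T]$ keeps $k_{\mathcal B}$ constant, so each of its steps satisfies $c_a(T')=c_a(S')$ and $c_{a+m}(T')=c_{a+m}(S')$. I then aim to prove the one-letter version of~\eqref{eq:block-restriction}: if $X_i[S',T']\ne0$ with equal block counts, then $S'\setminus\mathcal B=T'\setminus\mathcal B$ and $X_i[S',T']=X_{i-a}^{\mathrm{loc}}[s_{\mathcal B}(S'),s_{\mathcal B}(T')]$. Once this is established, summing over paths gives~\eqref{eq:block-restriction}, because intermediate states then have fixed outside part $S\setminus\mathcal B$. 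The indicator arises because outside parts can never change along a contributing path.

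For $\D_i$ and $\R_i$, work in $\mathscr A_n$. Write $\rho_{S'}=P_{\mathrm{out}}\cdot\rho_{\mathrm{in}}$, where $P_{\mathrm{out}}$ is the product of the $r_j$ with $j\in S'\setminus\mathcal B$. The letter acts only on $\rho_{\mathrm{in}}$, replacing a factor by another one with index in $\mathcal B$. Straightening~\eqref{eq:straightening} of the block factors produces distinct-index products of factors $r_j$ with $j\ge a$. These may involve $r_{a+m}$ or beyond, which would collide with outside factors and require further straightening. The key observation is that any term whose block part uses some index $\ge a+m$ has fewer than $k$ particles in $\mathcal B$. Such terms stay below $k$ under further straightening, which only moves factors rightward. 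They therefore contribute only to outputs with $k_{\mathcal B}(T')<k$.

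Block-internal terms, in which all $k$ indices lie in $\mathcal B$, are multiplied by $P_{\mathrm{out}}$ without further straightening and give outside part $S'\setminus\mathcal B$. I need to check this last point: $P_{\mathrm{out}}$ has distinct indices disjoint from $\mathcal B$, so the product is already a basis element. Straightening within the block then proceeds exactly as in the isolated algebra $\mathscr A_m$ after the shift $j\mapsto j-a$. The only difference is that the local $r_m$ is $0$ while the global $r_{a+m}$ is not, and this difference is exactly the discarded leakage terms. This gives the coefficient identity with the local $\R,\D$.

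For $\B_i$ and $\E_i$, use~\eqref{eq:BE}. A transition $S'\to T'$ of $\B_i$ corresponds to the transition $\overline{T'}\to\overline{S'}$ of $\R_i$ on the complementary space. Complementation preserves $\mathcal B$-block counts up to $m-k$ and preserves equality of outside parts. Applying the $\R,\D$ case with $m-k$ block particles then matches the local definition, whose complement is taken inside $\mathcal B$.

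\emph{Main obstacle.} The delicate step is the straightening argument showing that the leaked terms never return to block count $k$. In particular, it has to be shown that straightening with an outside factor $r_j$, $j\ge a+m$, cannot create a term with an index back in $\mathcal B$. This follows from the rightward nature of~\eqref{eq:straightening}: the identity rewrites $r_j^c$ using only $r_j,r_{j+1}$. I would present this as a filtration argument, in which the span of products with at most $k-1$ block indices is stable under straightening.
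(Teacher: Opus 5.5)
Your proof is correct. The monotonicity half, the reduction of the equality case to single letters via path sums, and the treatment of $\B_i,\E_i$ by complementation all match the paper. The genuine difference is how you identify a letter's count-preserving entries with the local letter.

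The paper argues at the matrix level. $U,V_i,Z_i$ are block upper triangular for the partition into the tracks left of, inside, and right of $\mathcal B$, so their permanental compounds are block upper triangular in the particle-count vectors. A count-preserving permanent term must match within each part, so each diagonal block is a tensor product of local compounds. Since the diagonal blocks of $F_q(U)^{-1}$ are the inverses of those of $F_q(U)$, the outside factors in~\eqref{eq:RD} cancel.

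You instead straighten the modified block product in $\mathscr A_n$. You isolate the leakage as the terms in which a factor reaches index $a+m$, and use the rightward filtration to show these never return to block count $k$. The paper's route needs no straightening and applies verbatim to any block-upper-triangular perturbation supported in $\mathcal B$. Yours exhibits the leakage concretely, as exactly the terms the local relation $r_m=0$ discards, at the cost of more bookkeeping.

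The step needing most care is ``straightening proceeds exactly as in $\mathscr A_m$'', because straightening is not a canonical procedure. It becomes immediate via the algebra homomorphism $\mathscr A_n\to\mathscr A_m$ with $z_{a+t}\mapsto z_t$ ($0\le t<m$) and all other $z_j\mapsto 0$. This map sends $r_j\mapsto r_{j-a}$ for $j\in\mathcal B$ and $r_j\mapsto 0$ for $j\ge a+m$. Apply it to the unique global $\rho$-expansion of the block product, all of whose indices are $\ge a$. It kills the leaked terms, and uniqueness of the local expansion then yields the coefficient identity directly.

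Also state your one-letter claim for all pairs $S',T'$ with equal outside parts and equal block counts, not only for nonzero entries. Your coefficient computation already gives this, and the path-sum comparison with the local product needs it.
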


\begin{proof}
By Lemma~\ref{lem:flow}, every elementary operation fixes the
positions to the left of $\mathcal B$ and cannot increase the
prefix count at its right boundary. Since
\[
 k_{\mathcal B}(S)=c_{a+m}(S)-c_a(S),
\]
the block count cannot increase at any step. A nonzero product
entry has at least one contributing sequence of nonzero letter
entries, so the same inequality holds for $W_w$.

For the second assertion, partition the tracks into the part to the left of
$\mathcal B$, the block $\mathcal B$, and the part to its right. The matrices
$U,V_i,Z_i$ are block upper triangular for this partition.
Their permanental compounds are block upper triangular when
indexed by the numbers of particles in the three parts.
In a diagonal block with fixed particle counts, any nonzero
permanent term must match within each part: a rightward crossing
cannot be balanced by a leftward crossing. Therefore this
diagonal block is the tensor product of the three local compounds.
The diagonal blocks of $F_q(U)^{-1}$ are likewise tensor products
of the local inverses. In~\eqref{eq:RD}, the outside factors
cancel to identities, leaving the local operation on $\mathcal B$.
Complement and transpose give the same assertion for
\eqref{eq:BE}, using the complementary local particle number.

In a product, a decrease at the right boundary of $\mathcal B$
cannot be recovered. Thus a transition preserving its block count uses
only the just-described diagonal restrictions at every step.
Their product is the local word on $\mathcal B$ and the identity
on each outside state. Distinct exterior occupations therefore
give a zero entry; identical exterior occupations give exactly
the local coefficient in~\eqref{eq:block-restriction}.
\end{proof}

\section{A polynomial-length encoding projection}\label{sec:projection}

\subsection{A four-track filter}

Define the following words on four consecutive tracks, with indices
relative to that block:
\begin{equation}\label{eq:filter-words}
 \begin{split}
 \mathsf Z&=\R_1\D_2\R_2\B_1\R_2,\\
 \mathsf J&=\R_1\R_0\E_2\D_2\R_2,\\
 \mathsf T&=\mathsf Z\mathsf J,\qquad
 \Theta=\frac1{64}\mathsf T^2.
 \end{split}
\end{equation}
The filter $\Theta$ is a word of length twenty times a known scalar.
Use $\Theta^{[q]}$ to denote its restriction to the isolated four-track
$q$-particle space, that is, the displayed word evaluated with
$(n,q)=(4,q)$ and complements in the $4-q$ sector. We call this
word a \emph{filter} because it annihilates the sectors stated below;
a \emph{sector} means the span of states with the specified particle
number. Outputs in the next table are coordinate rows in the
bit-string notation defined in Section~\ref{sec:oracle}.

\begin{lemma}\label{lem:filter}
One has $\Theta^{[0]}=\Theta^{[1]}=0$. On the two-particle space the
complete action is
\begin{equation}\label{eq:filter-table}
\begin{array}{c|rrrrrr}
\text{input}&1100&1010&1001&0110&0101&0011\\\hline
\text{output}&1010&1010&-1010&0110&\tfrac12\,0110&\tfrac12\,0110.
\end{array}
\end{equation}
In particular,
\[
 (\Theta^{[2]})^2=\Theta^{[2]},\qquad
 \rank\Theta^{[2]}=2,\qquad
 s\Theta^{[2]}=s\quad(s\in\{1010,0110\}).
\]
\end{lemma}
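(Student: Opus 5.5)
This is a finite computation on the isolated four-track spaces, so I will organize it so that every step can be checked by hand in the $\rho$-basis of the proof of Lemma~\ref{lem:flow}. That proof describes the normal letters directly. $\R_i$ replaces $r_{i+1}$ by $r_i$, and $\D_i$ replaces $r_i$ by $r_{i+1}$, followed by straightening via~\eqref{eq:straightening}. In addition, $\R_i+\D_i=I+S_i$ by~\eqref{eq:RplusD}. For the complemented letters $\B_1$ and $\E_2$, I will use~\eqref{eq:BE}. Their action on the $q$-particle sector is the transpose of the corresponding $\R$ or $\D$ letter on the $(4-q)$-particle sector, conjugated by complementation. So for $q=2$ I need the transposes of $\R_1$ and $\D_2$ on the two-particle space itself. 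For $q=0,1$ I need the actions of $\R_1,\D_2$ on the four- and three-particle spaces.

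\textbf{Sectors $q=0$ and $q=1$.} In the zero-particle sector every normal letter fixes the single state $\rho_\varnothing=1$. The letter $\E_2$ corresponds, via complementation, to $\D_2$ on the full state $\rho_{0123}=r_0r_1r_2r_3$. There $\D_2$ gives $r_0r_1r_3^2$, and straightening at the last index gives $r_3^2=-r_4^2\cdot(\ldots)=0$, since $r_4=0$. Hence $\E_2^{[0]}=0$, so $\mathsf J^{[0]}=0$ and $\Theta^{[0]}=0$.

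For $q=1$, the letters are $2\times2$-sized maps on $\{r_0,\dots,r_3\}$: $\R_i$ sends $r_{i+1}\mapsto r_i$ and $\D_i$ sends $r_i\mapsto r_{i+1}$, with $r_4=0$, so no straightening occurs. $\B_1^{[1]}$ and $\E_2^{[1]}$ come from the three-particle transposes, which I compute from the straightening formulas. I then multiply out $\mathsf J^{[1]}$ and $\mathsf Z^{[1]}$ and expect the product $\mathsf T^{[1]}$ to be nilpotent or zero. In particular, I expect a one-particle state to be driven to track $3$ and then killed by $\D_2$ followed by something sending it to $r_4=0$. If $\mathsf T^{[1]}\ne0$ but $(\mathsf T^{[1]})^2=0$, the claim still holds because $\Theta=\frac1{64}\mathsf T^2$.

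\textbf{Sector $q=2$.} This is the main work: six states, and $\B_1,\E_2$ on the two-particle sector are self-complementary transposes. I will first tabulate $\R_0,\R_1,\R_2,\D_2$ on the six products $r_ar_b$, straightening the squares $r_j^2=2r_jr_{j+1}-r_{j+1}^2$ recursively; for example, $r_2^2=2r_2r_3$ and $r_3^2=0$. I will then obtain $\B_1$ and $\E_2$ as $J\R_1^{\mathsf T}J$ and $J\D_2^{\mathsf T}J$. Next I will compute $\mathsf Z^{[2]}$ and $\mathsf J^{[2]}$ as explicit $6\times6$ matrices and form $\mathsf T^{[2]}$.

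I expect $\mathsf T^{[2]}$ to already have the rank-two structure: image spanned by $1010$ and $0110$, acting on them diagonally with eigenvalue $8$. Its square divided by $64$ would then be idempotent with table~\eqref{eq:filter-table}. Once the table is in hand, the final assertions are immediate. The rows show the image is $\operatorname{span}\{1010,0110\}$, and these two states are fixed, so $(\Theta^{[2]})^2=\Theta^{[2]}$ and the rank is $2$.

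\textbf{Main obstacle.} The main risk is bookkeeping errors in the complemented letters, namely the transpose–complement convention and the row-vector order, and in straightening signs. To guard against this, I will cross-check every single-letter matrix against two identities: $\R_i+\D_i=I+S_i$, and the support constraints of Lemma~\ref{lem:flow}, which require that positions left of $i$ are fixed and prefix counts only decrease except at cut $i+1$.
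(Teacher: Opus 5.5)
Your plan is correct and is essentially the paper's proof. Both are a finite direct computation on the four-track sectors: the paper gets the letter matrices from the permanent expansion \eqref{eq:per-rec} and the nilpotent series \eqref{eq:finite-inverse}, while you get the same matrices from the equivalent $\rho$-basis replacement-and-straightening rules in the proof of Lemma~\ref{lem:flow}.

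Two of your stated expectations are wrong, though this is harmless once you actually compute. First, $\mathsf T^{[2]}$ does not fix the code rows; it swaps them, with $1010\mapsto 8\cdot 0110$ and $0110\mapsto 8\cdot 1010$, matching $\enc{\mathsf T}=8X$. That is exactly why $\Theta$ is built from $\mathsf T^2$ rather than $\mathsf T/8$. Second, in sector one the vanishing already holds at $\mathsf J^{[1]}=0$, not through particles being sent to $r_4$: $\R_1\R_0$ sends every one-particle state to $1000$ or $0001$, and $\E_2^{[1]}$ annihilates both, being nonzero only on the row $0010$.
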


\begin{proof}
For a fixed $n\times n$ matrix $M$, the entries needed here can be
computed by expansion along the least row:
\begin{equation}\label{eq:per-rec}
 \begin{split}
 f_M(\varnothing,\varnothing)&=1,\\
 f_M(S,T)&=\sum_{j\in T}M_{ij}
                 f_M(S\setminus\{i\},T\setminus\{j\}),
                 \qquad i=\min S\quad(S\ne\varnothing).
 \end{split}
\end{equation}
Thus $f_M(S,T)=F_q(M)[S,T]$ for $|S|=|T|=q$.
For every sector $0\le q\le n$, set $T_q=F_q(U)-I$.
As in Lemma~\ref{lem:word-bitlength}, a nonzero entry of $T_q$
strictly increases the subset weight $\sum_{i\in S}i$.
The range of this weight has width $q(n-q)$, so
$T_q^{q(n-q)+1}=0$ and
\begin{equation}\label{eq:finite-inverse}
 F_q(U)^{-1}=\sum_{a=0}^{q(n-q)}(-T_q)^a.
\end{equation}
Using these formulas with $n=4$ in
\eqref{eq:RD}--\eqref{eq:BE} and~\eqref{eq:filter-words},
we obtain $\Theta^{[0]}=\Theta^{[1]}=0$ and
\eqref{eq:filter-table} by direct calculation.
The displayed outputs all lie in the span of the two fixed rows,
proving idempotence and rank.
\end{proof}

We encode logical zero by $1010$ and logical one by $0110$.
We will not need the values of the filter in particle numbers
three and four. Its vanishing below two particles, together with
a fixed total particle number, is enough to isolate the encoding.

\subsection{Stabilization of the global filter}

For $k\ge1$, use $N=4k$ tracks and $p=2k$ particles, split into
blocks $b=0,\ldots,k-1$, with block $b$ consisting of tracks
$\{4b,\ldots,4b+3\}$. Let $\Theta_b$ be the word in
\eqref{eq:filter-words} translated to block $b$, now acting in the
global $p$-particle space: replace every subscript $i$ by $4b+i$
and evaluate the resulting letters at $(N,p)=(4k,2k)$. Define
\[
 \cF_k=\Theta_0\Theta_1\cdots\Theta_{k-1},\qquad
 K_k=2k(k-1)+2.
\]
Although $\Theta_b$ can move particles beyond block $b$, the following
power is a genuine projection.

\begin{proposition}\label{prop:projection}
The matrix
\begin{equation}\label{eq:global-projection}
 P_k=\cF_k^{K_k}
\end{equation}
satisfies
\[
 P_k^2=P_k,\qquad \rank P_k=2^k.
\]
It is a word of length $20kK_k=O(k^3)$ multiplied by the known
scalar $64^{-kK_k}$.
\end{proposition}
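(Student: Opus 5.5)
We plan to decompose the global $p$-particle space according to the \emph{block occupation pattern} $\kappa(S)=(k_0(S),\ldots,k_{k-1}(S))$, where $k_b(S)=|S\cap\{4b,\ldots,4b+3\}|$, and to track how $\cF_k$ acts on these patterns. By Lemma~\ref{lem:flow}, every letter of $\Theta_b$ fixes all tracks to the left of block $b$ and can only move particles rightward across the right boundary of block $b$. Hence every nonzero transition of $\Theta_b$ leaves $c_{4b}$ unchanged and has $c_{4b+4}(T)\le c_{4b+4}(S)$; by Lemma~\ref{lem:restriction}, $k_b$ cannot increase. Consider the potential
\[
 \Phi(S)=\sum_{b=0}^{k}c_{4b}(S).
\]
No block boundary prefix count ever increases under $\cF_k$, so $-\Phi$ is nondecreasing along every transition path. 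Since $0\le c_{4b}\le\min(4b,2k)$, the range of $\Phi$ has size $O(k^2)$. More precisely, we will bound the number of strict decreases of $\Phi$ by $2k(k-1)$: each internal boundary count $c_{4b}$, $1\le b\le k-1$, can decrease by at most its excess over the minimum forced by the $2k$ particles, and summing gives $\sum_b(\min(4b,2k)-\max(0,2k-4(k-b)))$. I expect this sum to be $2k(k-1)$, but the exact count must be checked.

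The key step is to show that on any path through one full pass $\cF_k$ in which no boundary count changes, the pattern must be $(2,\ldots,2)$ at the start. Suppose no boundary count changes during the pass. Then every $\Theta_b$ acts block-locally by \eqref{eq:block-restriction}, as $\Theta^{[k_b]}$ tensored with the identity. If some block has $k_b\le1$, this factor vanishes by Lemma~\ref{lem:filter}. Since the total is $2k$, any pattern other than all twos has some block with at most one particle. So every nonzero path of length $m$ passes, pass by pass, either through a strict decrease of $\Phi$ or through a pass with pattern all twos. Once the pattern is all twos with no change during a pass, the restriction of $\cF_k$ to this pattern subspace is $\bigotimes_b\Theta^{[2]}$, an idempotent of rank $2^k$ by Lemma~\ref{lem:filter}.

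To conclude, we order states by the pair (boundary-count vector, and then within it the sector). We write $\cF_k=D+N'$, where $D$ is the part preserving all boundary counts and $N'$ strictly decreases $\Phi$. Then $D$ is block diagonal by boundary-count vector, with $D=\bigotimes\Theta^{[2]}=:Q$ on the all-twos sector and $D=0$ elsewhere. Here $D$ preserves every $c_{4b}$, hence the pattern. Thus $\cF_k$ is block triangular with respect to a graded order having at most $K_k-1$ strict steps. Its diagonal blocks are $Q$, which is idempotent, and zeros.

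A matrix of this form, with idempotent or zero diagonal blocks and nilpotent coupling of depth $d$, satisfies $\cF_k^{m+1}=\cF_k^{m}$ once $m$ exceeds a bound in $d$. We plan to prove $\cF_k^{K_k+1}=\cF_k^{K_k}$ by expanding the power into words in $D$ and $N'$. Every term contains at most $K_k-2$ factors $N'$, and a $0$-diagonal block between them kills the term. A surviving term therefore has consecutive $D$ factors equal to $Q$ powers, which collapse, and each gap carries at least one factor. This is the step I expect to be the main obstacle: getting the exponent exactly $K_k$, rather than merely some polynomial bound, needs careful accounting, and we would try first the counting of "a zero block must separate consecutive strict steps except through the all-twos sector."

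Granting stabilization, $P_k=\cF_k^{K_k}$ satisfies $P_k^2=\cF_k^{2K_k}=P_k$. For the rank, we use $\rank P_k=\tr P_k$. Triangularity gives $\tr P_k=\tr Q^{K_k}=\tr Q=(\tr\Theta^{[2]})^k=2^k$. Finally, the length claim is immediate: $\cF_k$ has $20k$ letters and scalar $64^{-k}$, so $P_k$ has $20kK_k$ letters and scalar $64^{-kK_k}$.
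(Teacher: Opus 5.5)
Your proposal is correct and follows the paper's proof essentially step for step. Your potential $\Phi=\sum_b c_{4b}$ equals $2k^2-\phi$ for the paper's $\phi(q)=\sum_b bq_b$, the diagonal blocks $\bigotimes_b\Theta^{[q_b]}$ vanish off $q_*$ for the same reason, and the rank-by-trace computation is identical. The step you flag as the main obstacle is not one. A nonzero word in $(D+N')^m$ has at most $2k(k-1)=K_k-2$ factors $N'$, and all its $D$ factors form a single run at $q_*$, so for $m\ge K_k-1$ one gets $\cF_k^m=\sum_{a+c\le K_k-2}N'^aDN'^c$ independently of $m$. Also, your range sum is actually $k^2$ or $k^2-1$ for even or odd $k$, not $2k(k-1)$; this is harmless, since you only need it to be at most $2k(k-1)$.
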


\begin{proof}
Partition the state space by the vector of block particle counts
$q=(q_0,\ldots,q_{k-1})$. Every count satisfies $0\le q_b\le4$
and $\sum_bq_b=2k$. Define the order
\[
 q\preceq q'\ \Longleftrightarrow\
 \sum_{b<j}q'_b\le\sum_{b<j}q_b\quad(1\le j\le k).
\]
Write $\cF_k[q,q']$ for the submatrix from states with count
vector $q$ to states with count vector $q'$.
Lemma~\ref{lem:flow} gives
\[
 \cF_k[q,q']\ne0\ \Longrightarrow\ q\preceq q',
\]
so $\cF_k$ is block upper triangular in any compatible order.
By Lemma~\ref{lem:restriction}, its diagonal block is
\begin{equation}\label{eq:diagonal-filter}
 \cF_k[q,q]=\bigotimes_{b=0}^{k-1}\Theta^{[q_b]}.
\end{equation}
If $q\ne q_*=(2,\ldots,2)$, then at least one $q_b$ is zero or
one; otherwise the fixed sum forces $q=q_*$. The corresponding
factor in~\eqref{eq:diagonal-filter} is zero. Thus the only
nonzero diagonal block is
\[
 \cF_k[q_*,q_*]=(\Theta^{[2]})^{\otimes k},
\]
an idempotent of rank $2^k$.

For $\phi(q)=\sum_b bq_b$ and $q\preceq q'$ with $q\ne q'$,
\[
 \phi(q')-\phi(q)
 =\sum_{j=1}^{k-1}\left(\sum_{b<j}q_b-\sum_{b<j}q'_b\right)
 \ge1.
\]
Thus every strict block transition increases $\phi$.
Its range has width at most $2k(k-1)$,
so a path contributing to a matrix power has at most this many
strict block transitions. It cannot stay in any block except
$q_*$, since all other diagonal blocks are zero. Once it leaves
$q_*$ it cannot return. For a power greater than $2k(k-1)$,
every nonzero path must consequently spend at least one step
on the diagonal block $q_*$. All positive powers of this block
are equal. Fixing the strict transitions fixes the one place
where the remaining steps can be inserted, so increasing the
power introduces neither a new contribution nor a multiplicity
factor. Hence
\[
 \cF_k^m=\cF_k^{K_k}=P_k\qquad(m\ge K_k).
\]

It follows that $P_k^2=\cF_k^{2K_k}=P_k$. Its only nonzero
diagonal block has trace $2^k$, so $\rank P_k=\tr P_k=2^k$. The word length
and scalar follow directly from the definition.
\end{proof}

Figure~\ref{fig:encoding-projection} illustrates both the local
four-track encoding and the stabilization of the global filter.
\begin{figure}[htbp]
\centering
\begin{tikzpicture}[x=1cm,y=1cm,>=Latex,
  every node/.style={font=\small},
  input/.style={inner sep=2pt},
  count/.style={draw=black!65,rounded corners=2pt,fill=black!3,
    minimum width=1.35cm,minimum height=.65cm},
  flow/.style={->,line width=.7pt,draw=blue!65!black},
  occupied/.style={circle,draw=blue!65!black,fill=blue!65!black,
    inner sep=0pt,minimum size=5pt},
  vacant/.style={circle,draw=blue!65!black,fill=white,
    inner sep=0pt,minimum size=5pt}]
\path[use as bounding box] (0,0) rectangle (15.1,8.7);
\node[anchor=west,font=\small\bfseries] at (0,8.45)
  {(a) A local filter: six two-particle states collapse to two fixed rows};
\node[input] (a0) at (.8,7.8) {$1100$};
\node[input] (b0) at (2.2,7.8) {$1010$};
\node[input] (c0) at (3.6,7.8) {$1001$};
\node[input] (a1) at (5.7,7.8) {$0110$};
\node[input] (b1) at (7.1,7.8) {$0101$};
\node[input] (c1) at (8.5,7.8) {$0011$};
\draw[rounded corners=3pt,draw=blue!65!black,fill=blue!3]
  (.55,6.05) rectangle (3.85,6.85);
\draw[rounded corners=3pt,draw=blue!65!black,fill=blue!3]
  (5.45,6.05) rectangle (8.75,6.85);
\coordinate (zero) at (2.2,6.85);
\coordinate (one) at (7.1,6.85);
\draw[flow] (a0.south) -- (zero)
  node[pos=.45,left,font=\scriptsize] {$1$};
\draw[flow] (b0.south) -- (zero)
  node[pos=.43,right,font=\scriptsize] {$1$};
\draw[flow] (c0.south) -- (zero)
  node[pos=.45,right,font=\scriptsize] {$-1$};
\draw[flow] (a1.south) -- (one)
  node[pos=.45,left,font=\scriptsize] {$1$};
\draw[flow] (b1.south) -- (one)
  node[pos=.43,right,font=\scriptsize] {$\tfrac12$};
\draw[flow] (c1.south) -- (one)
  node[pos=.45,right,font=\scriptsize] {$\tfrac12$};
\foreach \i in {0,1,2,3} {
  \node[font=\scriptsize] at ({1.3+.6*\i},6.25) {$\i$};
  \node[font=\scriptsize] at ({6.2+.6*\i},6.25) {$\i$};
}
\foreach \i in {0,2} \node[occupied] at ({1.3+.6*\i},6.57) {};
\foreach \i in {1,3} \node[vacant] at ({1.3+.6*\i},6.57) {};
\foreach \i in {1,2} \node[occupied] at ({6.2+.6*\i},6.57) {};
\foreach \i in {0,3} \node[vacant] at ({6.2+.6*\i},6.57) {};
\node at (2.2,5.7) {logical $0$: $1010$};
\node at (7.1,5.7) {logical $1$: $0110$};
\node[align=left,anchor=west] at (10,6.8) {
  $(\Theta^{[2]})^2=\Theta^{[2]}$\\[5pt]
  $\rank \Theta^{[2]}=2$\\[5pt]
  $\Theta^{[0]}=\Theta^{[1]}=0$};
\draw[black!20] (0,5.2) -- (15.1,5.2);
\node[anchor=west,font=\small\bfseries] at (0,4.9)
  {(b) The global filter: rightward flow and one surviving diagonal block};
\node[count] (qm) at (1.4,3.45) {$q^{-}$};
\node[count,draw=blue!65!black,fill=blue!5,minimum width=2.7cm]
  (qs) at (5.4,3.45) {$q_*=(2,\ldots,2)$};
\node[count] (qp) at (9.4,3.45) {$q^{+}$};
\node[count] (qpp) at (13.25,3.45) {$q^{++}$};
\draw[flow] (qm) -- (qs);
\draw[flow] (qs) -- (qp);
\draw[flow] (qp) -- (qpp);
\draw[flow] (qs.north west) .. controls (4.05,4.55) and (6.75,4.55)
  .. (qs.north east);
\node[font=\scriptsize] at (5.4,4.55) {$(\Theta^{[2]})^{\otimes k}$};
\node[font=\scriptsize] at (1.4,4.12) {diagonal block $0$};
\node[font=\scriptsize] at (9.4,4.12) {diagonal block $0$};
\node[font=\scriptsize] at (13.25,4.12) {diagonal block $0$};
\draw[flow,dashed] (qm.south east) .. controls (3.4,2.05) and (7.4,2.05)
  .. (qp.south west);
\node[font=\scriptsize,fill=white,inner sep=2pt] at (5.4,2.02)
  {strict paths may also bypass $q_*$};
\node[align=center] at (7.55,1.5) {
  Every strict step increases $\phi(q)=\sum_b bq_b$;
  at most $2k(k-1)$ strict steps are possible.};
\node[align=center] at (7.55,.87) {
  $\cF_k^m=P_k\quad(m\ge K_k),\qquad
  P_k^2=P_k,\qquad \rank P_k=2^k.$};
\node[font=\scriptsize,align=center] at (7.55,.3) {
  Off-diagonal contributions can survive in $P_k$:
  its row space need not be the span of the raw code states.};
\end{tikzpicture}
\caption{Local encoding and global stabilization.
In (a), filled dots are occupied tracks; arrow labels are coefficients
in the action of $\Theta^{[2]}$ on coordinate rows.
In (b), arrows illustrate possible directions of nonzero block
transitions, not a claim that every displayed block is nonzero.
Only the block $q_*$ can contribute a diagonal step. A sufficiently
long contributing path therefore passes through $q_*$, where repeating
the diagonal step changes neither its value nor its multiplicity.
$P_k$ may retain contributions from paths entering or leaving $q_*$.}
\label{fig:encoding-projection}
\end{figure}

\subsection{The encoded action and its locality}

Let $d=2^k$ and $D_k=\binom{4k}{2k}$. For a logical bit string
$x\in\{0,1\}^k$, let $\operatorname{code}(x)$ replace each zero
by $1010$ and each one by $0110$. These are the \emph{raw code
states}. Define the $D_k\times d$ matrix
$C_k[S,x]=\ind{S=\operatorname{code}(x)}$, ordering logical strings
lexicographically. Its columns are the corresponding standard
coordinate columns. Define
\begin{equation}\label{eq:encoding}
 \cE_k=C_k^{\mathsf T}P_k,\qquad \cL_k=P_kC_k.
\end{equation}
The matrix $\cE_k$ has dimensions $d\times D_k$, and $\cL_k$
has dimensions $D_k\times d$.
For a matrix $M$ on the global $(4k,2k)$ space, define its
\emph{encoded operation} by $\enc M=\cE_kM\cL_k$; the hat
notation suppresses $k$.

\begin{lemma}\label{lem:encoding}
One has
\[
 C_k^{\mathsf T}P_kC_k=I_d,\quad
 \cE_k\cL_k=I_d,\quad P_k=\cL_k\cE_k.
\]
For $a\ge1$ and any matrices $M_1,\ldots,M_a$ on this space,
\begin{equation}\label{eq:composition}
 \enc M_1\cdots\enc M_a
 =C_k^{\mathsf T}P_kM_1P_kM_2P_k\cdots P_kM_aP_kC_k.
\end{equation}
\end{lemma}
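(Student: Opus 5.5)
The identity $C_k^{\mathsf T}P_kC_k=I_d$ is the only nontrivial input; everything else is algebra with $P_k^2=P_k$ from Proposition~\ref{prop:projection}.

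\emph{Step 1: the identity $C_k^{\mathsf T}P_kC_k=I_d$.}
The entry $(x,y)$ of this matrix is $P_k[\operatorname{code}(x),\operatorname{code}(y)]$. Both states have block count vector $q_*=(2,\ldots,2)$. By Lemma~\ref{lem:flow}, $\cF_k$ is block upper triangular in the order $\preceq$. The diagonal block of any power of a block upper triangular matrix is the corresponding power of the diagonal block. Hence
\[
 P_k[q_*,q_*]=\bigl(\cF_k[q_*,q_*]\bigr)^{K_k}.
\]
By \eqref{eq:diagonal-filter} and Lemma~\ref{lem:filter}, $\cF_k[q_*,q_*]=(\Theta^{[2]})^{\otimes k}$, and this matrix is idempotent. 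Therefore
\[
 P_k[q_*,q_*]=(\Theta^{[2]})^{\otimes k}.
\]
Its entry at $(\operatorname{code}(x),\operatorname{code}(y))$ is the product over blocks $b$ of $\Theta^{[2]}[c(x_b),c(y_b)]$, where $c(0)=1010$ and $c(1)=0110$. Lemma~\ref{lem:filter} says that $s\Theta^{[2]}=s$ for both code rows. So each factor equals $\ind{x_b=y_b}$, and the product equals $\ind{x=y}$. This is the step I expect to need the most care. One must justify that off-diagonal block paths cannot contribute to an entry between two states of the same count vector $q_*$. They cannot, because a strict transition never returns to $q_*$: strict transitions strictly increase $\phi$.

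\emph{Step 2: the remaining identities.}
By definition,
\[
 \cE_k\cL_k=C_k^{\mathsf T}P_kP_kC_k=C_k^{\mathsf T}P_kC_k=I_d,
\]
using $P_k^2=P_k$ and Step 1. The identity $P_k=\cL_k\cE_k=P_kC_kC_k^{\mathsf T}P_k$ is subtler, because $C_kC_k^{\mathsf T}$ is only the coordinate projection onto the raw code states. My plan is a rank argument.

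First, $\cL_k\cE_k$ is idempotent, since
\[
 (\cL_k\cE_k)^2=\cL_k(\cE_k\cL_k)\cE_k=\cL_k\cE_k.
\]
Its image lies in the image of $P_k$, because it factors as $P_k(\cdot)$ on the left, and similarly on the right. Its rank is $d=2^k$, because $\cE_k\cL_k=I_d$ forces $\rank\cL_k=\rank\cE_k=d$.

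Now set $Q=\cL_k\cE_k$. Since $P_kQ=Q$ and $QP_k=Q$ (both hold because $P_k^2=P_k$), the difference $P_k-Q$ is idempotent:
\[
 (P_k-Q)^2=P_k-Q-Q+Q=P_k-Q.
\]
Its rank therefore equals its trace, which is $\tr P_k-\tr Q=2^k-2^k=0$. Hence $P_k=Q$. This uses $\rank P_k=2^k$ from Proposition~\ref{prop:projection}.

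\emph{Step 3: composition.}
Expand the left side of \eqref{eq:composition} as
\[
 \prod_j\bigl(C_k^{\mathsf T}P_kM_jP_kC_k\bigr).
\]
Between consecutive factors the product contains $P_kC_kC_k^{\mathsf T}P_k=\cL_k\cE_k$, which equals $P_k$ by Step 2. Replacing each such junction by a single $P_k$ gives the right side of \eqref{eq:composition}. A short induction on $a$ makes this rigorous; the base case $a=1$ is the definition of $\enc M_1$.
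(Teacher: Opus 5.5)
Your proof is correct. Steps 1 and 3 follow the paper's argument. The paper phrases Step 1 path-wise (a path from $q_*$ back to $q_*$ never leaves $q_*$), which is the same as your observation about the diagonal block of a block upper triangular power. Step 3 is the same expansion.

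The only real divergence is the identity $P_k=\cL_k\cE_k$:
\begin{itemize}
\item The paper notes that the $d$ columns of $\cL_k$ are independent, since $C_k^{\mathsf T}\cL_k=I_d$, and lie in the rank-$d$ column space of $P_k$. Hence $P_k=\cL_kV$ for some $V$, and left multiplication by $C_k^{\mathsf T}$ gives $V=C_k^{\mathsf T}P_k=\cE_k$.
\item You instead show that $P_k-\cL_k\cE_k$ is an idempotent of trace $2^k-2^k=0$.
\end{itemize}
Both arguments are equally short and both rest on $\rank P_k=2^k$. Yours reuses the rank-equals-trace fact for idempotents that Proposition~\ref{prop:projection} already invokes. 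The paper's spanning argument produces the factorization $P_k=\cL_kV$ directly, without any trace bookkeeping.
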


\begin{proof}
A path starting and ending at block counts $q_*$ cannot leave
that block. Its restriction in $P_k$ is
$(\Theta^{[2]})^{\otimes k}$, which fixes every raw code row.
This proves the first identity. Idempotence gives
$\cE_k\cL_k=I_d$. The $d$ independent columns of $\cL_k$ lie
in the rank-$d$ column space of $P_k$, so they span it.
Write $P_k=\cL_kV$. Since $C_k^{\mathsf T}\cL_k=I_d$,
multiplication on the left by $C_k^{\mathsf T}$ gives
$V=C_k^{\mathsf T}P_k=\cE_k$.
The composition identity follows by multiplication.
\end{proof}

The encoded row space, spanned by the rows of $\cE_k$, is the
row space of $P_k$ and need not be the span of the raw code states.
A logical row $v\in\mathbb Q^{1\times d}$ is encoded and recovered by
\[
 v\longmapsto v\cE_k,\qquad (v\cE_k)\cL_k=v.
\]

For $1\le r\le k$ and $0\le b\le k-r$, suppose every letter index
$i$ of a word $M$ satisfies $4b\le i\le4(b+r)-2$.
Thus both tracks $i,i+1$ lie in the $r$ blocks beginning at $b$.
Let $M^{\mathrm{loc}}$ be the word obtained
by subtracting $4b$ from every letter index and evaluating on the
isolated $4r$-track, $2r$-particle space.

\begin{lemma}[Projected locality]\label{lem:locality}
With $b,r,M$ as above,
\[
 \enc M=I_{2^b}\otimes
 \bigl(C_r^{\mathsf T}P_rM^{\mathrm{loc}}P_rC_r\bigr)
 \otimes I_{2^{k-b-r}}.
\]
\end{lemma}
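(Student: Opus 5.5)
The plan is to reduce the entries of $\enc M=C_k^{\mathsf T}P_kMP_kC_k$ to a single diagonal sector of a coarse block-count order. On that sector every factor splits as a tensor product, and the factor on the window reproduces $P_r$ and $M^{\mathrm{loc}}$.

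\emph{Choice of order.} Let $\mathcal W=\{4b,\ldots,4(b+r)-1\}$ be the window of $r$ blocks. Using the per-block order $\preceq$ from Proposition~\ref{prop:projection} will not work. By Lemma~\ref{lem:flow}, a letter $\R_i$ or $\B_i$ may raise the prefix count at cut $i+1$, and that cut may be a block boundary inside $\mathcal W$. So $M$ need not be $\preceq$-monotone. Instead I use the \emph{coarse} count vector $\bar q=(q_0,\ldots,q_{b-1},\,k_{\mathcal W},\,q_{b+r},\ldots,q_{k-1})$, which merges the window into one part. Define the prefix order $\bar\preceq$ on coarse vectors as before.
\begin{itemize}
\item $P_k$ is $\bar\preceq$-monotone, because it is $\preceq$-monotone by Lemma~\ref{lem:flow} and coarsening preserves the prefix inequalities at the remaining cuts.
\item $M$ is also $\bar\preceq$-monotone. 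All its letter indices lie in $\mathcal W$, so by Lemma~\ref{lem:flow} it fixes the tracks left of $\mathcal W$ and never raises the prefix count at any cut $\ge 4(b+r)$.
\item By Lemma~\ref{lem:restriction}, $M$ cannot raise $k_{\mathcal W}$.
\end{itemize}
A contributing path in $C_k^{\mathsf T}P_kMP_kC_k$ starts and ends at raw code states, whose coarse vector is $\bar q_*=(2,\ldots,2,2r,2,\ldots,2)$. By monotonicity and antisymmetry of $\bar\preceq$, every intermediate state has coarse vector $\bar q_*$. Hence
\[
 \enc M=C_k^{\mathsf T}\,P_k[\bar q_*,\bar q_*]\,M[\bar q_*,\bar q_*]\,P_k[\bar q_*,\bar q_*]\,C_k .
\]

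\emph{Tensor factorization of the blocks.}
\begin{itemize}
\item \emph{The word $M$.} By \eqref{eq:block-restriction} with block $\mathcal W$ and $k=2r$, the block $M[\bar q_*,\bar q_*]$ acts as the identity on the exterior occupation and as $M^{\mathrm{loc}}$ on the window. This gives $I\otimes M^{\mathrm{loc}}\otimes I$, with the outside factors restricted to two-particle states of each block.
\item \emph{One pass $\cF_k$.} I repeat the block-triangular argument from the proof of Lemma~\ref{lem:restriction}, now for the partition into the $b$ left blocks, the window, and the $k-b-r$ right blocks. In the diagonal block with fixed coarse counts, no particle crosses a part boundary. The block of $\cF_k$ is then $(\Theta^{[2]})^{\otimes b}\otimes \cF_r^{\mathrm{loc}}\otimes(\Theta^{[2]})^{\otimes(k-b-r)}$, where $\cF_r^{\mathrm{loc}}$ is the $r$-block filter pass on the isolated $(4r,2r)$ space.
\item \emph{The power $P_k$.} Paths of $\cF_k^{K_k}$ from $\bar q_*$ to $\bar q_*$ stay in $\bar q_*$, so the block of $P_k$ is the $K_k$-th power of the block above. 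The outside factors are idempotent by Lemma~\ref{lem:filter}. Since $K_k\ge K_r$, Proposition~\ref{prop:projection} applied to $r$ blocks gives $(\cF_r^{\mathrm{loc}})^{K_k}=P_r$. So the block is $(\Theta^{[2]})^{\otimes b}\otimes P_r\otimes(\Theta^{[2]})^{\otimes(k-b-r)}$.
\end{itemize}

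\emph{Assembly.} Write $C_k=C_b\otimes C_r\otimes C_{k-b-r}$, with $C_1$ the $6\times2$ code inclusion. Each outside block then contributes $C_1^{\mathsf T}\Theta^{[2]}\Theta^{[2]}C_1=C_1^{\mathsf T}\Theta^{[2]}C_1=I_2$, since $\Theta^{[2]}$ fixes $1010$ and $0110$ by \eqref{eq:filter-table}. The window contributes $C_r^{\mathsf T}P_rM^{\mathrm{loc}}P_rC_r$, which is the claimed identity.

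The main obstacle is the first step: choosing an order in which both $P_k$ and $M$ are triangular, despite leftward moves of $M$ across interior block boundaries. Once the coarse order is fixed, the second step needs care to check that the part-wise tensor splitting from Lemma~\ref{lem:restriction} extends to three parts and survives taking the power $K_k$.
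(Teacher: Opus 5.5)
Your proposal is correct and follows essentially the same route as the paper: your coarse order $\bar\preceq$ is the paper's device of tracking prefix counts only at block boundaries not internal to the active region. After that, both arguments split one pass of $\cF_k$ via Lemma~\ref{lem:restriction} into $\Theta^{[2]}$ factors outside and $\cF_r$ inside, use $K_k\ge K_r$ and stabilization to obtain $P_r$, and finish with $C_1^{\mathsf T}\Theta^{[2]}C_1=I_2$. One small notational point: on the sector $\bar q_*$ the outside code factors are $C_1^{\otimes b}$ and $C_1^{\otimes(k-b-r)}$, not $C_b$ and $C_{k-b-r}$; your per-block assembly already uses the former.
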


\begin{proof}
Consider a path from a raw code state to another through
$P_kMP_k$. At every block boundary other than those internal
to the active region, each of $P_k,M,P_k$ can only decrease
the prefix count. If $S_0,\ldots,S_\ell$ is a contributing state
path and $j$ is such a boundary, then
\[
 c_j(S_0)\ge c_j(S_1)\ge\cdots\ge c_j(S_\ell)=c_j(S_0).
\]
Every inequality is therefore an equality: the path preserves
each of these counts throughout.

Restrict to these preserved counts. By
Lemma~\ref{lem:restriction}, each pass of $\cF_k$ is the tensor
product of $\cF_r$ on the active region and $\Theta^{[2]}$ on each
outside block. Its $K_k$th power restricts to $P_r$ on the active
region, since $K_k\ge K_r$ and the powers have stabilized, and
to $\Theta^{[2]}$ on each outside block. The word $M$ is the identity
outside and equals $M^{\mathrm{loc}}$ inside. Finally, extracting
the raw code entries of each outside factor gives
$C_1^{\mathsf T}\Theta^{[2]}C_1=I_2$.
The claimed tensor product follows.
\end{proof}

With $b,r,M$ as in Lemma~\ref{lem:locality}, put
\[
 g=C_r^{\mathsf T}P_rM^{\mathrm{loc}}P_rC_r,
 \qquad g_{\mathrm{ext}}=I_{2^b}\otimes g\otimes I_{2^{k-b-r}}.
\]
Together with $P_k=\cL_k\cE_k$ from Lemma~\ref{lem:encoding},
projected locality gives
\[
 \cE_kMP_k=g_{\mathrm{ext}}\cE_k,
 \qquad P_kMP_k=\cL_kg_{\mathrm{ext}}\cE_k.
\]
Thus the global word followed by $P_k$ sends an encoded row
$v\cE_k$ to $(vg_{\mathrm{ext}})\cE_k$.
Equation~\eqref{eq:composition} composes these logical operations
by retaining the intervening projections.

\section{Constant-size operations on the encoding}\label{sec:gates}

The logical space on $k$ bits is $\mathbb Q^{2^k}$, with coordinate
rows indexed lexicographically by $\{0,1\}^k$. A \emph{one-bit
operation} at bit $j\in\{0,\ldots,k-1\}$ has the form
\[
 I_{2^j}\otimes M\otimes I_{2^{k-j-1}},\qquad M\in\mathbb Q^{2\times2}.
\]
A \emph{two-bit operation} is defined analogously with a
$4\times4$ matrix acting on two specified bits.
A \emph{logical circuit} is an explicitly listed product of these
operations, also called \emph{gates}. Its size is the number of gates,
and its evaluation is a specified matrix entry of the product.
These are rational linear circuits: gates may be noninvertible
and need not map a bit string to a single bit string. The one-bit
basis order is $0,1$, and the two-bit order is $00,01,10,11$.

Lemma~\ref{lem:locality} reduces the following computations to
four or eight tracks, independently of the number of logical bits.
Let
\[
 X=\begin{pmatrix}0&1\\1&0\end{pmatrix},\qquad
 H=\begin{pmatrix}1&1\\1&-1\end{pmatrix},\qquad
 Q=\diag(2,1).
\]
The matrix $H$ is unnormalized; all factors of two will be retained
explicitly.

\begin{lemma}\label{lem:gates}
On one logical bit the following encoded actions hold:
\begin{equation}\label{eq:one-bit}
\begin{split}
 \enc{\R_0}&=\begin{pmatrix}1&0\\1&0\end{pmatrix},\qquad
 \enc{\E_0}=\begin{pmatrix}1&1\\0&0\end{pmatrix},\\
 \enc{\B_1}&=Q,\qquad
 \enc{\D_2}=A:=\diag(-1,1/2),\qquad
 \enc{\mathsf T}=8X.
\end{split}
\end{equation}
For the word $\mathsf M=\D_1\B_1\B_0\B_1\R_1\R_2$,
\begin{equation}\label{eq:mix}
 \enc{\mathsf M}=-2\begin{pmatrix}1&-2\\-2&-4\end{pmatrix},
 \qquad A\enc{\mathsf M}A=-2H.
\end{equation}
On two adjacent bits, the operation at their shared track
boundary satisfies
\begin{equation}\label{eq:G}
 \enc{\R_3}=G=
 \begin{pmatrix}
 -2&2&-7/2&-6\\
 0&-3&0&-2\\
 0&0&2&-4\\
 0&0&0&1
 \end{pmatrix},
\end{equation}
in the order $00,01,10,11$.
\end{lemma}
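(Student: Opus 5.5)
The plan is to reduce every claimed identity to a finite, explicit computation on four or eight isolated tracks, and then verify it exactly. By Lemma~\ref{lem:locality} with $r=1$, each one-bit word $M$ with letter indices in $\{0,1,2\}$ satisfies
\[
 \enc M=I_{2^b}\otimes\bigl(C_1^{\mathsf T}P_1M^{\mathrm{loc}}P_1C_1\bigr)\otimes I_{2^{k-b-1}}.
\]
Thus it suffices to compute $C_1^{\mathsf T}P_1M^{\mathrm{loc}}P_1C_1$ on the $(4,2)$ space, which has dimension six. Here $K_1=2$ and $P_1=\Theta^{[2]}$ on the two-particle sector, which is given explicitly by Lemma~\ref{lem:filter}. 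For the two-bit identity \eqref{eq:G} we instead take $r=2$: the letter $\R_3$ has both tracks $3,4$ in the two blocks beginning at $b$, so we need $C_2^{\mathsf T}P_2\R_3P_2C_2$ on the $(8,4)$ space, of dimension $70$. Here $P_2=\cF_2^{K_2}$ with $K_2=6$.

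For each single letter $\R_i,\D_i,\B_i,\E_i$ on four tracks, I would compute the $6\times6$ matrix in the two-particle sector, using the recurrence \eqref{eq:per-rec} and the finite inverse \eqref{eq:finite-inverse}. For $\B,\E$ this uses \eqref{eq:BE}; at half filling, complementation stays in the same sector. A cleaner route uses the row-form algebra from the proof of Lemma~\ref{lem:flow}. There, $\D_i$ replaces $r_i$ by $r_{i+1}$, and $\R_i$ is obtained from \eqref{eq:RplusD}, after which we straighten with \eqref{eq:straightening}.

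The rows of $C_1^{\mathsf T}P_1=C_1^{\mathsf T}\Theta^{[2]}$ are just the code rows $1010$ and $0110$, since both are fixed by $\Theta^{[2]}$. We then multiply by $M^{\mathrm{loc}}$, apply $\Theta^{[2]}$ via table \eqref{eq:filter-table}, and read off the coefficients of $1010$ and $0110$. For $\mathsf T$ and $\mathsf M$ we chain the letter matrices; $\mathsf T$ is already a product whose square gives $64\Theta$, which serves as a consistency check. The identity $A\enc{\mathsf M}A=-2H$ then follows by multiplying $2\times2$ matrices with $A=\diag(-1,1/2)$.

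The main obstacle is \eqref{eq:G}. Its difficulty is the global projection $P_2$, not the letter $\R_3$ itself. To avoid computing $P_2$ on all $70$ states, I would use the block triangular structure from Proposition~\ref{prop:projection}. The left factor $C_2^{\mathsf T}P_2$ starts at count vector $(2,2)$. Since $P_2$ is rightward, its row for a code state equals the restriction to the $(2,2)$ block, namely $(\Theta^{[2]})^{\otimes2}$ applied to the code row, plus leakage into count vectors $(1,3),(0,4),\dots$. Applying $\R_3$ then raises the prefix count at cut $4$ by at most one, by Lemma~\ref{lem:flow}. Only terms that return to $(2,2)$ survive the final $P_2C_2$, because from any state with a different count vector it can reach code states only by moving leftward, which is impossible.

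The surviving contributions therefore split into two types. The first are paths that stay in $(2,2)$. The second are paths that leak to $(1,3)$ inside $P_2$ and are then pulled back by $\R_3$ to $(2,2)$. For the second type we need the single off-diagonal block $P_2[(2,2),(1,3)]$, which is computed from $\cF_2$ by summing over the position of the one strict step among the powers, as in the proof of Proposition~\ref{prop:projection}. The final $P_2C_2$ likewise has a $(2,2)$ component plus contributions from $(3,1)$ into $(2,2)$, since $\R_3$ can move a particle left across the boundary.

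These finitely many block computations, each on at most a few dozen states, give the sixteen entries of $G$. The upper-triangular shape of $G$ serves as a check: the leading zeros are predicted by the one-way flow. I expect the entry $-7/2$ to come from a leakage path, and I would verify it last, by direct exact rational computation.
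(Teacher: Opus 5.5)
Your plan is correct and is essentially the paper's proof: reduce via Lemma~\ref{lem:locality} to $C_1^{\mathsf T}\Theta^{[2]}w\,\Theta^{[2]}C_1$ on four tracks and to $C_2^{\mathsf T}P_2\R_3P_2C_2$ with $P_2=(\Theta_0\Theta_1)^6$ on eight tracks, then evaluate these exactly; the paper records this step simply as a direct calculation. Your block-triangular bookkeeping for $G$ is a valid way to organize that computation: paths either stay in $(2,2)$, leak to $(1,3)$ and are pulled back by $\R_3$, or are pushed to $(3,1)$ by $\R_3$ and returned by $P_2$, and each off-diagonal block of $P_2$ collapses to a single term because $\Theta^{[1]}=0$. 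Note, however, that your intermediate sentence claiming that only terms already back in $(2,2)$ survive the final $P_2C_2$ is false for $(3,1)$, although you then correctly include that case.
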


\begin{proof}
For one bit, the encoded matrices are
\[
 C_1^{\mathsf T}P_1wP_1C_1,
 \qquad P_1=\Theta^{[2]},\qquad
 w\in\{\R_0,\E_0,\B_1,\D_2,\mathsf T,\mathsf M\}.
\]
For two bits, they are computed on eight tracks using
\[
 C_2^{\mathsf T}P_2\R_3P_2C_2,
 \qquad P_2=(\Theta_0\Theta_1)^6,
\]
where $C_2$ selects $10101010,10100110,01101010,01100110$
in that order. Using~\eqref{eq:per-rec}--\eqref{eq:finite-inverse},
we obtain~\eqref{eq:one-bit}--\eqref{eq:G} by direct calculation.
Lemma~\ref{lem:locality} places the same matrices at every block
or adjacent pair of blocks.
\end{proof}

In particular,
\[
 X=\tfrac18\enc{\mathsf T},\qquad
 H=-\tfrac12\enc{\D_2}\enc{\mathsf M}\enc{\D_2},\qquad
 Q=\enc{\B_1}.
\]
Equation~\eqref{eq:composition} implements these products by
a constant number of raw letters and occurrences of $P_k$,
with the displayed nonzero scalar normalizations.
Below, the \emph{available one-bit operations} mean
the fixed matrices in~\eqref{eq:one-bit}, together with $X,H,Q$;
their products are given as explicit circuits. In particular,
a power $Q^u$ means a product of
$u$ \emph{projected} copies. For a general word $M$, one must not
replace $(\enc M)^u$ by $\enc{M^u}$ without justification.

\section{Shared-parameter interpolation of constraints}\label{sec:interpolation}

\subsection{Removing the off-diagonal entries}

For a fixed finite set $\mathcal G$ of rational $4\times4$ matrices,
let $\mathrm{CircuitEval}(\mathcal G)$ be the following problem.
The input specifies a number of logical bits, an explicitly listed
circuit, and two boundary bit strings. Its one-bit gates are the
available operations of Section~\ref{sec:gates}; its two-bit gates
belong to $\mathcal G$ and act on adjacent bits. The output is the
specified entry of the circuit product as an exact rational number.
Thus the input records the
gate locations and order without listing the full circuit matrix.
Figure~\ref{fig:circuit-eval} illustrates the fixed boundaries and
the internal states summed over in this evaluation.

\begin{figure}[htbp]
\centering
\begin{tikzpicture}[x=1cm,y=1cm,>=Latex,
  every node/.style={font=\small},
  gate/.style={draw=blue!65!black,fill=blue!4,line width=.7pt,
    minimum width=.8cm,minimum height=.65cm,inner sep=3pt},
  two/.style={gate,fill=blue!9,minimum height=1.65cm},
  state/.style={font=\small,text=blue!65!black}]
\node[align=center] at (1.05,3.9)
  {fixed input\\$s=(s_0,s_1,s_2)$};
\node[align=center] at (11.55,3.9)
  {fixed output\\$t=(t_0,t_1,t_2)$};
\draw[->,line width=.7pt,draw=blue!65!black]
  (3,4.05) -- (9.6,4.05)
  node[midway,above=3pt] {gate order / matrix multiplication};
\foreach \j/\yy in {0/2.6,1/1.6,2/.6} {
  \node[anchor=east] at (1.5,\yy) {$s_{\j}$};
  \draw[draw=black!65,line width=.7pt] (1.6,\yy) -- (11,\yy);
  \node[anchor=west] at (11.1,\yy) {$t_{\j}$};
}
\foreach \xx/\cutstate in {4.1/a,6.3/b,8.5/c} {
  \draw[dashed,draw=blue!40] (\xx,.2) -- (\xx,3.1);
  \node[state] at (\xx,3.35) {$\cutstate$};
}
\node[gate] at (3,2.6) {$H$};
\node[two] at (5.2,2.1) {$G_*$};
\node[gate] at (7.4,1.6) {$Q$};
\node[two] at (9.6,1.1) {$G_*$};
\node[state] at (6.3,-.2)
  {sum over internal states $a,b,c\in\{0,1\}^3$};
\end{tikzpicture}
\[
 C=(H\otimes I_2\otimes I_2)(G_*\otimes I_2)
   (I_2\otimes Q\otimes I_2)(I_2\otimes G_*).
\]
\caption{A three-bit instance of $\mathrm{CircuitEval}(\mathcal G)$,
with $G_*\in\mathcal G$. Each box acts on the wires it meets;
the factors of $C$ follow the boxes from left to right.
The specified boundary strings $s,t$ are fixed, while the states
at the dashed cuts are summed over. The value to be computed is
the rational number $C[s,t]=e_sCe_t^{\mathsf T}$.}
\label{fig:circuit-eval}
\end{figure}

\begin{remark*}[Tensor networks and Holant]
Each CircuitEval instance is a Boolean tensor-network contraction
with fixed boundary indices; see~\cite{MarkovShi08} for the circuit
viewpoint. A gate $M$ on $d\in\{1,2\}$ bits supplies the tensor
$f_M(x,y)=M[x,y]$ with $2d$ Boolean indices, ordered as inputs
then outputs. Wire segments carry Boolean variables, and unary
pins $\delta_b(z)=\ind{z=b}$ impose the boundary bits. Summing
the product of these local entries over the wire variables gives
exactly the specified circuit entry. This is a rational-weighted
Boolean Holant evaluation~\cite{CaiLuXia18}, restricted to the
fixed gate signatures together with the boundary pins, on planar
circuit layouts with a temporal order and adjacent two-bit gates.
\end{remark*}

Set
\[
 \Delta=\diag(-2,-3,2,1).
\]
\begin{lemma}\label{lem:diagonal}
$\mathrm{CircuitEval}(\{\Delta\})\le_T\Word$.
\end{lemma}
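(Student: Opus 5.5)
The plan is to recover the diagonal part $\Delta$ of the encoded two-bit gate $G=\enc{\R_3}$ from Lemma~\ref{lem:gates} by a scaling conjugation that uses one shared parameter, and then remove the off-diagonal contributions by interpolation.

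\textbf{Support of $G$.} Reading~\eqref{eq:G} in the order $00,01,10,11$, every nonzero off-diagonal entry $G[x,y]$ has $x\le y$ bitwise. The entry $G[01,10]$ is zero, and the nonzero ones are at
\[
 (00,01),\ (00,10),\ (00,11),\ (01,11),\ (10,11).
\]
So every off-diagonal transition of $G$ only turns logical zeros into ones, and $\operatorname{diag}(G)=\Delta$.

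\textbf{The scaled gate.} The available one-bit operations include $Q=\diag(2,1)$ and $X$. Put $Q'=XQX=\diag(1,2)$. On the two bits $j,j+1$ of a $\Delta$-gate, define
\[
 G_u=(Q^u\otimes Q^u)\,G\,(Q'^u\otimes Q'^u),
\]
with $u\ge 0$ an integer. Entry $(x,y)$ is multiplied by $\prod_i Q[x_i,x_i]^u\,Q'[y_i,y_i]^u$. For a bit with $x_i=y_i$ this factor is $2^u$, and for a bit with $0\to1$ it is $4^u$. Writing $z=2^u$, this gives
\[
 2^{-2u}G_u[x,y]=G[x,y]\,z^{|y|-|x|},
\]
where $|\cdot|$ is Hamming weight. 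Hence $2^{-2u}G_u=\Delta+\sum_{m=1}^{2}z^mN_m$ for fixed matrices $N_m$ supported on transitions that raise the weight by $m$.

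\textbf{Interpolation.} Given a $\mathrm{CircuitEval}(\{\Delta\})$ instance with $g$ occurrences of $\Delta$, replace every occurrence by $2^{-2u}G_u$, using the same $u$ everywhere. The specified circuit entry then becomes a polynomial $f(z)\in\mathbb Q[z]$ of degree at most $2g$. Its constant term $f(0)$ collects exactly the paths that use the diagonal at every occurrence, which is the desired value. Query $f$ at the $2g+1$ distinct nodes $z=2^u$ for $u=0,\ldots,2g$, then evaluate at $z=0$ by~\eqref{eq:lagrange}.

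\textbf{Realizing each query as a WordEval entry.} Each gate of the modified circuit is an encoded raw word:
\begin{itemize}
\item the powers $Q^u$ and $Q'^u$ are $u$ projected copies of $\enc{\B_1}$, with $\enc{\mathsf T}=8X$ for the $X$ factors;
\item the one-bit gates of the original circuit are available by Lemma~\ref{lem:gates};
\item $G$ on bits $j,j+1$ is $\enc{\R_{4j+3}}$, by Lemma~\ref{lem:locality}.
\end{itemize}
By~\eqref{eq:composition}, the circuit entry $[s,t]$ equals the $[\operatorname{code}(s),\operatorname{code}(t)]$ entry of
\[
 P_kM_1P_k\cdots P_kM_aP_k .
\]
Since $P_k$ is a word of length $O(k^3)$ times the known scalar $64^{-kK_k}$ (Proposition~\ref{prop:projection}), this is one WordEval value times an explicit rational scalar. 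All scalars ($8^{-1}$, $-\tfrac12$, $2^{-2u}$, the $P_k$ normalizations) are handled exactly in postprocessing. Word lengths are polynomial in $k$, $g$ and the circuit size, and Lemma~\ref{lem:word-bitlength} keeps all oracle answers of polynomial bit length. The reduction therefore makes $2g+1$ queries and runs in polynomial time.

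\textbf{Main obstacle.} The key step is that the scaling is consistent, namely that only diagonal paths survive at $z=0$. This rests on the bitwise-monotone support of $G$ noted first. The $G[01,10]$ entry in particular must vanish; otherwise that transition would contribute with exponent $|y|-|x|=0$ and survive in $f(0)$. One must also insert the projections between all factors, including the $u$ copies of $Q$, rather than replacing $(\enc M)^u$ by $\enc{M^u}$, as warned after Lemma~\ref{lem:gates}.
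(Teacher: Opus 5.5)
Your proposal is correct and follows the paper's proof essentially step for step. With $K(q)=\diag(1,q)$ and $q=2^u$, you have $Q^u=2^uK(q)^{-1}$ and $Q'^u=K(q)$, so your normalized gate $2^{-2u}G_u$ is exactly the paper's $G(q)=(K(q)^{-1}\otimes K(q)^{-1})\,G\,(K(q)\otimes K(q))$. From there you use the same ingredients as the paper: the bitwise-monotone support of $G$, one interpolation parameter shared by all occurrences, the $2g+1$ nodes $2^u$ interpolated at $q=0$, and projected composition into a single WordEval entry times a known scalar. The gate-free circuit, which you do not mention, is trivial to evaluate directly.
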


\begin{proof}
For a nonnegative integer $u$, set $q=2^u$ and
$K(q)=\diag(1,q)$. Since
\[
 K(q)=XQ^uX,\qquad K(q)^{-1}=q^{-1}Q^u,
\]
both matrices are implementable by words of length $O(u)$
in the logical operations, with known scalar factors; for $u=0$,
use the empty word for both identity matrices.
Replace every occurrence of $\Delta$ by
\[
 G(q)=(K(q)^{-1}\otimes K(q)^{-1})\,G\,
                (K(q)\otimes K(q)).
\]
For two-bit strings $x,y$, the relation $x\le y$ means that each
bit of $x$ is at most the corresponding bit of $y$; let
$\wt(x)$ be the \emph{Hamming weight}, the number of one bits in $x$.
Equation~\eqref{eq:G} gives
\[
 G[x,y]\ne0\ \Longrightarrow\ x\le y
 \ \Longrightarrow\ 0\le\wt(y)-\wt(x)\le2.
\]
Consequently the entrywise polynomial extension is
\begin{equation}\label{eq:Gq}
 G(q)[x,y]=
 \begin{cases}
 G[x,y]q^{\wt(y)-\wt(x)},&x\le y,\\
 0,&x\not\le y.
 \end{cases}
\end{equation}
Each entry has degree at most two. Moreover,
$x\le y$ and $\wt(x)=\wt(y)$ imply $x=y$, so $G(0)=\Delta$.

If the circuit has $h$ such occurrences, sharing $q$ among all
of them gives a chosen-entry polynomial $\Phi$ satisfying
\[
 \deg\Phi\le2h,\qquad
 \Phi(0)=\text{the entry of the original $\Delta$-circuit}.
\]
Evaluate it at $q=2^u$ for $u=0,\ldots,2h$, and
interpolate at zero. Other gates, including $H$, can occur
anywhere in the circuit: they do not affect the degree argument.

A circuit with no gates is evaluated directly. Otherwise, on
$k\ge1$ logical bits, implement each sampled gate
using Section~\ref{sec:gates}, retaining its internal projections
(including those in $H$ and the repeated copies of $Q$).
Then use~\eqref{eq:composition} to insert $P_k$ between successive
gates. Each $P_k$ expands into $O(k^3)$ letters and the scalar
$64^{-kK_k}$. Since $u\le2h$, this produces one polynomial-length
word on $N=4k$ tracks, multiplied by a known rational scalar.
For logical boundaries $s,t$, its selected physical entry has
boundaries $\operatorname{code}(s),\operatorname{code}(t)$;
in particular, $0^k$ corresponds to $(1010)^k$.

The sample nodes $2^u$ have polynomial bit length. The scalar is a product
of polynomially many fixed gate normalizations, projection
normalizations, and powers of $q^{-1}$, so it has polynomial bit
length. Lemma~\ref{lem:word-bitlength} gives the same bound for
each $\Word$ answer. Multiplying by the scalar recovers the sampled
entry, and the exact interpolation bounds following
\eqref{eq:lagrange} recover $\Phi(0)$ in polynomial time.
\end{proof}

\subsection{Two diagonal constraints from powers of \texorpdfstring{$\Delta$}{Delta}}

We require the matrices
\[
 \mathsf N=\diag(1,1,1,0),\qquad
 \mathrm{CZ}=\diag(1,1,1,-1).
\]
Here $\mathsf N$ is the binary constraint that forbids $11$, and
$\mathrm{CZ}$ is the controlled-sign operation: it multiplies
the row $11$ by $-1$ and fixes the other three coordinate rows.
The first is a diagonal constraint on assignments, not a Boolean
gate with a separate output wire.

\begin{lemma}\label{lem:spectral}
$\mathrm{CircuitEval}(\{\mathsf N,\mathrm{CZ}\})
\le_T\mathrm{CircuitEval}(\{\Delta\})$.
\end{lemma}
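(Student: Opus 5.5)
The plan is to realize $\mathsf N$ and $\mathrm{CZ}$ as polynomials in $\Delta$, and then recover the resulting circuit entry by interpolation over shared multiplicities, as sketched in Section~\ref{sec:intro-methods}.

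\textbf{Polynomial representation.} The diagonal entries $-2,-3,2,1$ of $\Delta$ are pairwise distinct. Hence any diagonal $4\times4$ matrix is a polynomial of degree at most three in $\Delta$. In particular there are rational $\alpha_j,\beta_j$ with
\[
 \mathsf N=\sum_{j=0}^{3}\alpha_j\Delta^j,\qquad
 \mathrm{CZ}=\sum_{j=0}^{3}\beta_j\Delta^j,
\]
where these are Lagrange combinations at the four eigenvalues. Note that $\Delta^0=I$ is the empty circuit, and $\Delta^j$ is $j$ consecutive copies of the gate $\Delta$ on the same adjacent pair.

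\textbf{Avoiding exponential expansion.} Expanding every occurrence independently would produce $4^g$ terms for $g$ occurrences, so I would not do that. Instead, take the circuit with $g_1$ occurrences of $\mathsf N$ and $g_2$ of $\mathrm{CZ}$, and replace each occurrence of $\mathsf N$ by $\Delta^{m}$ and each occurrence of $\mathrm{CZ}$ by $\Delta^{m'}$, using shared exponents $m,m'$.

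Write the original entry as a sum over wire assignments. Each occurrence of $\mathsf N$ contributes $f(\lambda)$, where $f$ is the interpolating function on the eigenvalue set $\Lambda=\{-2,-3,2,1\}$ and $\lambda$ is the eigenvalue selected by that assignment. Grouping assignments by how many $\mathsf N$-occurrences select each eigenvalue gives a multiplicity vector $c=(c_\lambda)_{\lambda\in\Lambda}$ with $\sum c_\lambda=g_1$; define $d$ similarly for $\mathrm{CZ}$. Let $Y_{c,d}$ be the total of the remaining gate weights over assignments with these multiplicities. Then the original value is
\[
 \sum_{c,d}Y_{c,d}\prod_\lambda f(\lambda)^{c_\lambda}g(\lambda)^{d_\lambda},
\]
and the query with exponents $(m,m')$ returns
\[
 \sum_{c,d}Y_{c,d}\prod_\lambda\lambda^{mc_\lambda+m'd_\lambda}.
\]
There are only $O(g_1^3g_2^3)$ vectors $(c,d)$, which is polynomial.

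\textbf{Recovering $Y_{c,d}$ (the main obstacle).} The query value is a linear combination of the unknowns $Y_{c,d}$ with coefficients $\prod_\lambda\lambda^{mc_\lambda}$. Two different vectors can give the same product $\prod_\lambda\lambda^{c_\lambda}$, because the absolute values $2,3,2,1$ have coincidences and there are sign ambiguities. To separate the unknowns I would use more than a single exponent family. Since $|\lambda|=2$ occurs twice, products of powers alone are insufficient. I would therefore also use the conjugation trick of Lemma~\ref{lem:diagonal}: apply the available one-bit $Q$, $X$, and $A$ conjugations, which commute with diagonal gates, to get modified diagonals such as $(A\otimes I)\Delta$ or $(Q\otimes I)\Delta$. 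Each such gate is a product of available operations with $\Delta$, so it stays inside $\mathrm{CircuitEval}(\{\Delta\})$.

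The goal is to obtain a diagonal gate $\Delta'$ in this class whose four entries have pairwise distinct absolute values, for example $\Delta\cdot(Q\otimes Q)=\diag(-8,-6,4,1)$, which is multiplicatively independent enough. Then I replace the eigenvalue grouping with that of $\Delta'$, whose entries separate multiplicity vectors by unique factorization into primes $2,3$ with signs. I expect this step to need the most care. If distinct vectors still collide, I would fall back on two or three independent exponent families (powers of $\Delta$ and of $Q\otimes I$ separately), which give a multivariate Vandermonde system.

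\textbf{Solving and recombining.} Once the coefficient map $c\mapsto(\lambda^{c_\lambda})$ is injective, evaluating at polynomially many exponent tuples gives a nonsingular generalized Vandermonde system in the $Y_{c,d}$. I would solve it exactly, for instance by treating the value as a polynomial in variables $x_\lambda=\lambda^m$ with a fixed monomial ordering. Then I plug in $f(\lambda)^{c_\lambda}g(\lambda)^{d_\lambda}$ and sum. Bit lengths stay polynomial by the bounds following~\eqref{eq:lagrange}.
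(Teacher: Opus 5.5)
\textbf{There is a genuine gap at the step you flag as the main obstacle.} Your setup, with shared exponents and paths grouped by multiplicity vectors, is sound. The way you resolve the obstacle is not.

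\emph{The proposed gate does not separate multiplicity vectors.} Take $\Delta(Q\otimes Q)=\diag(-8,-6,4,1)$ and $g=3$. The vectors $(c_{00},c_{01},c_{10},c_{11})=(2,0,0,1)$ and $(0,0,3,0)$ both give $(-8)^2\cdot 1=64=4^3$. This is unavoidable for any single diagonal gate whose entries have the form $\pm2^a3^b$, which covers every product of $\Delta$ with $Q$, $A$ and their tensor products. Its node $\prod_\lambda\lambda^{c_\lambda}$ takes only $O(g^2)$ values, while there are $\binom{g+3}{3}$ vectors. Worse, the two colliding vectors above have different numbers of $11$ states, so their $\mathsf N$-weights differ ($0$ versus $1$). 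Powers of this gate alone therefore cannot isolate the target value in general.

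\emph{The fallback is only a hedge.} Independent exponents for $\Delta$ and $Q\otimes I$ can be made to work: the pair $\bigl((-1)^{c_{00}+c_{01}}2^{c_{00}+c_{10}}3^{c_{01}},\,2^{c_{00}+c_{01}}\bigr)$ determines $c$, and iterated univariate Vandermonde solves on a product grid recover each class. But you state this without checking injectivity or nonsingularity. The ``monomial ordering in $x_\lambda=\lambda^m$'' description also does not apply, since these $x_\lambda$ all depend on one parameter $m$ and are not independent variables.

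\emph{The missing idea is that no separation of multiplicity vectors is needed.} Both $\mathsf N$ and $\mathrm{CZ}$ equal $1$ on $00,01,10$, so the product of target entries along a path depends only on $c_{11}$. For the original $\Delta$, two vectors with the same node have the same absolute value $2^{c_{00}+c_{10}}3^{c_{01}}$. Hence they have the same $c_{01}$, the same $c_{00}+c_{10}$, and therefore the same $c_{11}$. The collisions you worried about are harmless. A single univariate polynomial that takes the target weight at each of the polynomially many distinct nodes restores all occurrences at once.

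This is the paper's proof. It substitutes $\Delta^{2r}=\diag(4^r,9^r,4^r,1)$ with one shared $r$ per target type. It groups paths by $a$ (states $00$ or $10$) and $b$ (states $01$), and notes that the nodes $4^a9^b$ are distinct. It then interpolates with values $\ind{a+b=g}$ for $\mathsf N$ or $(-1)^{g-a-b}$ for $\mathrm{CZ}$, using a two-dimensional grid when both types occur.
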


\begin{proof}
If neither target type occurs, the circuit is already an instance
of $\mathrm{CircuitEval}(\{\Delta\})$. Otherwise fix a target type
occurring $g\ge1$ times, and replace all of these occurrences by
\[
 \Delta^{2r}=\diag(4^r,9^r,4^r,1)
\]
with a common nonnegative integer $r$. In the expansion of any
circuit entry over intermediate Boolean states, these positions
contribute $(4^a9^b)^r$: $a$ counts their states $00$ or $10$,
$b$ counts $01$, and $g-a-b$ counts $11$.
The numbers
\[
 \alpha_{ab}=4^a9^b,\qquad a,b\in\mathbb Z_{\ge0},\quad a+b\le g,
\]
are distinct by unique prime factorization. There are
$J_g=(g+1)(g+2)/2$ of them.

For the target $\mathsf N$, choose the unique polynomial $L_{g,N}$
of degree less than $J_g$ satisfying
\[
 L_{g,N}(\alpha_{ab})=\ind{a+b=g}.
\]
For the target $\mathrm{CZ}$, instead require
\[
 L_{g,\mathrm{CZ}}(\alpha_{ab})=(-1)^{g-a-b}.
\]
Write the chosen polynomial as
$L_{g,*}(z)=\sum_{r=0}^{J_g-1}\lambda_rz^r$, where $*$ denotes
the chosen target type, $N$ or $\mathrm{CZ}$.
Group the intermediate-state paths by $(a,b)$, and let $c_{ab}$
be the sum of their weights from all other gate positions.
Writing $V(r)$ for the entry with the $g$ positions replaced
by $\Delta^{2r}$ gives
\[
 V(r)=\sum_{a+b\le g}c_{ab}\alpha_{ab}^r,
 \qquad
 \sum_{r=0}^{J_g-1}\lambda_rV(r)
 =\sum_{a+b\le g}c_{ab}L_{g,*}(\alpha_{ab}).
\]
The prescribed values of $L_{g,*}$ are exactly the products of
the $g$ target entries on these paths. The last sum therefore
restores all occurrences of that type simultaneously.

When both types occur, with multiplicities $g_N,g_{\mathrm{CZ}}$, use
two parameters $r_N,r_{\mathrm{CZ}}$. Let $V(r,s)$ be the sampled
entry and $\lambda_r^N,\lambda_s^{\mathrm{CZ}}$ the coefficients
of the corresponding interpolation polynomials. The target entry is
\[
 \sum_{r=0}^{J_{g_N}-1}\sum_{s=0}^{J_{g_{\mathrm{CZ}}}-1}
 \lambda_r^N\lambda_s^{\mathrm{CZ}}V(r,s).
\]
The same pathwise argument proves this identity on a grid of size
$J_{g_N}J_{g_{\mathrm{CZ}}}$, even when the two types and the
one-bit operations are interleaved. Each power uses only
$2r_*$ copies of $\Delta$, and $r_*<J_{g_*}$ is polynomial.
Each grid point is an instance of
$\mathrm{CircuitEval}(\{\Delta\})$, so its value is supplied by that oracle.

The nodes $\alpha_{ab}$ have $O(g)$ bits, so the interpolation
coefficients have polynomial bit length and are computable in
polynomial time by the bounds following~\eqref{eq:lagrange}.
The oracle answers also have polynomial bit length: choose fixed
integers $d_0,b_0\ge1$ such that every entry of an available gate
or of $\Delta$ is $a/d_0$ with $|a|\le b_0$.
A circuit of length $\ell$ has at most $4^\ell$ state paths from
a fixed input, since each gate acts on at most two bits.
Thus each specified entry is $\nu/d_0^\ell$ with
$|\nu|\le(4b_0)^\ell$. The polynomial grid and the displayed
linear combination therefore give polynomial-time exact
postprocessing.
\end{proof}

\section{The algebraic counting reduction}\label{sec:reduction}

\subsection{Counting independent sets}

\begin{proposition}\label{prop:source}
$\IS\le_T\mathrm{CircuitEval}(\{\mathsf N,\mathrm{CZ}\})$.
\end{proposition}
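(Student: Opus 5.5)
Given a graph $G$ on vertices $v_0,\ldots,v_{k-1}$ with edge set $E$, we build a circuit on $k$ logical bits, one bit per vertex. The count we want is
\[
 \IS(G)=\sum_{x\in\{0,1\}^k}\prod_{\{i,j\}\in E}\mathsf N[x_ix_j,x_ix_j].
\]
Since $\mathsf N$ is diagonal, it acts as a constraint that multiplies each assignment row by the indicator that no edge is fully occupied. The plan is therefore threefold. First, prepare the uniform superposition. Second, apply $\mathsf N$ for every edge. Third, sum over final assignments.

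The first and third steps use Hadamards and fixed boundaries. With the unnormalized $H$ we have $e_0H=(1,1)$, so $e_{0^k}H^{\otimes k}=\sum_x e_x$. Likewise $(\sum_x e_x)H^{\otimes k}e_{0^k}^{\mathsf T}$ picks out the all-ones coefficient sum, since column $0$ of $H$ is all ones. Hence the entry $[0^k,0^k]$ of
\[
 H^{\otimes k}\,\Big(\prod_{e\in E}\mathsf N_e\Big)\,H^{\otimes k}
\]
equals $\IS(G)$ exactly, with no scalar correction. Here each $H^{\otimes k}$ is realized as $k$ one-bit gates, which is allowed since $H$ is an available one-bit operation.

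The main obstacle is that $\mathrm{CircuitEval}$ only allows two-bit gates on \emph{adjacent} bits, while $\mathsf N_e$ must act on arbitrary pairs $\{i,j\}$. I would route by swaps built from the allowed gates. The identity $\mathrm{SWAP}=\mathrm{CNOT}_{12}\mathrm{CNOT}_{21}\mathrm{CNOT}_{12}$ applies, with $\mathrm{CNOT}$ obtained as $(I\otimes H)\mathrm{CZ}(I\otimes H)$ up to the scalar $2$, because $HH=2I$. Each adjacent SWAP therefore costs three $\mathrm{CZ}$ gates and some Hadamards, with a known scalar factor $1/8$. To apply $\mathsf N$ on bits $i<j$, I would conjugate by the chain of adjacent swaps that brings bit $j$ next to bit $i$, apply $\mathsf N$ on the now-adjacent pair, and undo the swaps.

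This gives $O(k)$ swaps per edge and $O(k|E|)$ gates in total. The resulting value is a known power of $2$ times the desired entry, and we remove it by exact division. The steps to verify are:
\begin{enumerate}[label=(\alph*)]
\item the CNOT identity, which is a $4\times4$ check;
\item that a conjugated diagonal constraint equals $\mathsf N$ on the permuted positions, which holds since swaps are permutation matrices;
\item bit-length bounds, which follow from those already established in Lemma~\ref{lem:spectral}.
\end{enumerate}
With these, the reduction uses a single oracle query of polynomial size.
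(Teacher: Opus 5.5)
Your proposal is correct and follows essentially the same route as the paper's proof. Both use one bit per vertex, a uniform superposition, one $\mathsf N$ per edge routed by adjacent $\mathrm{SWAP}$s built from three $\mathrm{CZ}$ and six $H$ gates with factor $1/8$, a final coefficient sum, and a single oracle query. The differences are cosmetic: the paper prepares and sums with $\enc{\E_0}$ and $\enc{\R_0}$ where you use $H$, which works equally well since $e_0H=(1,1)$ and column $0$ of $H$ is all ones, and the paper leaves the bits permuted rather than undoing the swaps, because the final summation ignores variable order.
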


\begin{proof}
Let the source graph $\Gamma=(V,E)$ have $k$ vertices
and $m$ edges, with $V=[k]$.
For $m=0$, return $2^k$ directly. Otherwise assign one
logical bit to each vertex, initially zero. At each bit apply
$\enc{\E_0}$ from~\eqref{eq:one-bit}. With $e_x$ denoting the
coordinate row of assignment $x\in\{0,1\}^k$,
\[
 e_{0^k}(\enc{\E_0})^{\otimes k}
 =\sum_{x\in\{0,1\}^k}e_x.
\]
For each source edge, apply $\mathsf N$ to its two endpoint bits.
The resulting coefficient of an assignment, indexed by source
vertices, is
\[
 a(x)=\prod_{\{u,v\}\in E}(1-x_ux_v)
 =\ind{\{v:x_v=1\}\text{ is independent in }\Gamma}.
\]

To apply such an operation to nonadjacent bits, move them
together by adjacent exchanges, updating the correspondence
between source vertices and bit positions. The required
exchange is the permutation operation
$\mathrm{SWAP}:(a,b)\mapsto(b,a)$. Define
$\mathrm{CNOT}_{1\to2}:(a,b)\mapsto(a,a\oplus b)$ and
$\mathrm{CNOT}_{2\to1}:(a,b)\mapsto(a\oplus b,b)$, where
$\oplus$ is addition modulo two. Both are permutation matrices
under the row-vector convention. They are available because
\[
 (I\otimes H)\mathrm{CZ}(I\otimes H)=2\,\mathrm{CNOT}_{1\to2},
\]
and
\[
 (H\otimes I)\mathrm{CZ}(H\otimes I)=2\,\mathrm{CNOT}_{2\to1},
 \qquad
 \mathrm{SWAP}=\mathrm{CNOT}_{1\to2}\mathrm{CNOT}_{2\to1}
                   \mathrm{CNOT}_{1\to2}.
\]
Thus an exchange is implemented by
three $\mathrm{CZ}$ operations and six $H$ operations, with scalar
normalization $1/8$. Each edge requires at most $k-1$ adjacent
exchanges. No final restoration of the variable order is needed.

Finally apply $\enc{\R_0}$ at every bit. Since
$e_x(\enc{\R_0})^{\otimes k}=e_{0^k}$ for every $x$,
this sums the coefficients independently of the current variable
order. After the explicit scalar normalizations, the all-zero to
all-zero entry is
\[
 \sum_{x\in\{0,1\}^k}a(x)=\#\{I\subseteq V:I\text{ is independent in }\Gamma\}.
\]
The circuit has $O(km)$ gates, and the product of its
normalization factors has polynomial bit length.
\end{proof}

Figure~\ref{fig:independent-set-reduction} gives a three-vertex
instance, including the vertex order after an adjacent exchange.

\begin{figure}[htbp]
\centering
\begin{tikzpicture}[x=1cm,y=1cm,>=Latex,
  every node/.style={font=\small},
  gate/.style={draw=blue!65!black,fill=blue!4,line width=.7pt,
    minimum width=.72cm,minimum height=.62cm,inner sep=3pt},
  two/.style={gate,fill=blue!9,minimum height=1.62cm},
  vertex/.style={circle,draw=blue!65!black,fill=blue!4,
    minimum size=.46cm,inner sep=0pt},
  mapping/.style={font=\small,text=blue!65!black}]
\node at (1.05,3.4) {source graph $\Gamma$};
\node at (8.7,3.4) {logical circuit $C$: fixed boundary $000\to000$};
\node[vertex] (v0) at (.4,1.4) {$0$};
\node[vertex] (v1) at (1.7,2.1) {$1$};
\node[vertex] (v2) at (1.7,.7) {$2$};
\draw[line width=.8pt] (v0) -- (v1) (v0) -- (v2);
\draw[->,line width=.8pt,draw=blue!65!black] (2.35,1.4) -- (3.2,1.4);
\foreach \yy in {2.4,1.4,.4} {
  \node at (3.65,\yy) {$0$};
  \draw[draw=black!65,line width=.7pt] (3.9,\yy) -- (13.25,\yy);
  \node at (13.55,\yy) {$0$};
  \node[gate] at (4.8,\yy) {$\enc{\E_0}$};
  \node[gate] at (12.35,\yy) {$\enc{\R_0}$};
}
\node[two] at (6.55,1.9) {$\mathsf N$};
\node[mapping] at (6.55,2.96) {edge $01$};
\draw[draw=blue!65!black,line width=.8pt] (8.4,.4) -- (8.4,1.4);
\foreach \yy in {.4,1.4} {
  \draw[draw=blue!65!black,line width=.9pt]
    (8.27,\yy-.13) -- (8.53,\yy+.13)
    (8.27,\yy+.13) -- (8.53,\yy-.13);
}
\node[fill=white,inner sep=2pt] at (8.4,.9) {$\mathrm{SWAP}$};
\node[two] at (10.55,1.9) {$\mathsf N$};
\node[mapping] at (10.55,2.96) {edge $02$};
\foreach \v/\yy in {0/2.4,1/1.4,2/.4} {
  \node[mapping,above=2pt] at (5.55,\yy) {$v_{\v}$};
}
\foreach \v/\yy in {0/2.4,2/1.4,1/.4} {
  \node[mapping,above=2pt] at (9.4,\yy) {$v_{\v}$};
}
\node[mapping] at (5.55,-.18) {order $(0,1,2)$};
\node[mapping] at (9.4,-.18) {order $(0,2,1)$};
\node[align=center] at (6.95,-1.0)
  {$\mathcal I(\Gamma)=\{\varnothing,\{0\},\{1\},\{2\},\{1,2\}\}$
   \qquad $C[000,000]=|\mathcal I(\Gamma)|=5$};
\end{tikzpicture}
\caption{The reduction of Proposition~\ref{prop:source} for
$E(\Gamma)=\{01,02\}$. The blue labels identify the source vertex
carried by each wire. The first constraint forbids $x_0x_1=1$;
after exchanging the lower two wires, the second forbids $x_0x_2=1$.
The final one-bit gates sum the five surviving assignments.
Here SWAP is the normalized permutation operation: its realization
by six $H$ and three $\mathrm{CZ}$ gates has the scalar factor $1/8$.
Without this factor, the corresponding circuit entry is $40$.}
\label{fig:independent-set-reduction}
\end{figure}

\subsection{Algebraic hardness}

\begin{theorem}\label{thm:word-hard}
$\IS\le_T\Word\le_T\mathrm{PairEval}$.
\end{theorem}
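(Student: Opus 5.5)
The plan is to compose the reductions already established and check that the composition is still a polynomial-time Turing reduction. The chain is
\[
 \IS\le_T\mathrm{CircuitEval}(\{\mathsf N,\mathrm{CZ}\})
 \le_T\mathrm{CircuitEval}(\{\Delta\})
 \le_T\Word\le_T\mathrm{PairEval},
\]
using Proposition~\ref{prop:source}, Lemma~\ref{lem:spectral}, Lemma~\ref{lem:diagonal} and Lemma~\ref{lem:paired-transfer-interface} in that order. Since $\le_T$ is transitive, the theorem follows once each link is confirmed to meet the polynomial bounds on query length, number of queries, and postprocessing. The second inequality in the statement is exactly Lemma~\ref{lem:paired-transfer-interface}, so the substance lies in the first.

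Carried out in order: given a source graph $\Gamma$ with $k$ vertices and $m$ edges, Proposition~\ref{prop:source} produces a single CircuitEval instance with $O(km)$ gates, together with a known scalar of polynomial bit length. Lemma~\ref{lem:spectral} turns this instance into a grid of at most $J_{g_N}J_{g_{\mathrm{CZ}}}=O((km)^4)$ circuits. In each of them every target gate is replaced by $2r_*$ copies of $\Delta$ with $r_*$ polynomial, so every circuit has polynomial size. Lemma~\ref{lem:diagonal} then maps each such circuit on $k$ bits to $O(h)$ words on $N=4k$ tracks, where $h$ counts the occurrences of $\Delta$. Each word has polynomial length, because every $P_k$ contributes $O(k^3)$ letters and each $Q^u$ contributes $O(u)$ projected copies with $u\le 2h$. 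Each word also comes with a known rational scalar. Lemma~\ref{lem:word-bitlength} bounds the bit length of every $\Word$ answer, and the interpolation bounds following~\eqref{eq:lagrange} show that each stage's recombination runs in polynomial time. The stages are nested only a constant number of times, so the total number of oracle calls is a product of a constant number of polynomials, and every call has polynomial size.

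The one point needing care is exactly this nesting. Each lemma is stated as a reduction from its source problem, with parameters shared across all occurrences within one instance. Composing them gives a fixed-depth tree of queries, of total size polynomial in $k+m$, with exact rational postprocessing at each level. No level creates a separate parameter per position, so the warning in the preliminaries about per-position interpolation does not apply. I expect no real obstacle beyond bookkeeping, namely checking that the boundary states $\operatorname{code}(0^k)=(1010)^k$ are valid $N$-bit inputs and that the $m=0$ and empty-word cases are handled directly. Finally, since $\IS$ is $\sharpP$-complete by Provan--Ball, the chain makes $\Word$ and $\mathrm{PairEval}$ $\sharpP$-hard under $\le_T$.
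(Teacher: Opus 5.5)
Your proposal is correct and follows the paper's proof: the first reduction is the composition of Proposition~\ref{prop:source}, Lemma~\ref{lem:spectral}, and Lemma~\ref{lem:diagonal}, and the second is exactly Lemma~\ref{lem:paired-transfer-interface}. Your extra bookkeeping on query counts and sizes is consistent with those lemmas. It amounts to spelling out the standard transitivity of polynomial-time Turing reductions, and your grid bound $O((km)^4)$ is loose, since the lemmas give $O(k^2m^4)$, but valid.
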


\begin{proof}
The first reduction is the composition of Proposition~\ref{prop:source},
Lemma~\ref{lem:spectral}, and Lemma~\ref{lem:diagonal}.
The second is Lemma~\ref{lem:paired-transfer-interface}.
\end{proof}

Since $\IS$ is $\sharpP$-complete~\cite{PB83}, both $\Word$
and $\mathrm{PairEval}$ are $\sharpP$-hard. Each of the next three
sections proves $\mathrm{PairEval}\le_T\#\mathrm{PM}(\mathcal C)$
for one target graph class $\mathcal C$, using a shared probe parameter.

\section{Monotone graphs}\label{sec:monotone}

Our monotone construction keeps every layer independent. Paired bipartite
probes enforce the required cancellation within a bipartite graph. We produce
both a permutation diagram and a monotone ordering in the sense of
Section~\ref{sec:prelims}.

\begin{theorem}\label{thm:monotone}
$\#\mathrm{PM}(\mathrm{Monotone})$ is $\sharpP$-complete.
\end{theorem}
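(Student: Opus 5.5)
The plan is to prove membership and hardness separately. Membership in $\sharpP$ is the usual certificate argument from Section~\ref{sec:prelims}. For hardness, Theorem~\ref{thm:word-hard} gives $\IS\le_T\mathrm{PairEval}$, so it suffices to prove $\mathrm{PairEval}\le_T\#\mathrm{PM}(\mathrm{Monotone})$. Given $S,T$ and pairs $(M_1,H_1),\ldots,(M_h,H_h)\in\mathcal P_N$, I would build the $(2h+1)$-layer graph of~\eqref{eq:paired-value}, modify it so that it is monotone, and recover $Z$ from polynomially many matching counts.

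\emph{Step 1: each cut is individually monotone.} For $U$, row $i$ has neighborhood $[i,N-1]$. For $V_i$, rows $i$ and $i+1$ both have neighborhood $[i,N-1]$. For $Z_i$, row $i$ has neighborhood $[i+1,N-1]$. In every case both endpoints are nondecreasing. Transposes ($L$, $V_i^{\mathsf T}$, $Z_i^{\mathsf T}$) are monotone after reversing the order of both sides. The boundary deletions by $S$ and $\overline T$ only delete rows or columns, which preserves monotonicity. I would therefore fix a track order on each layer, reversing orders on alternate layers where necessary, so that every cut biadjacency matrix is a monotone staircase in the chosen orders.

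\emph{Step 2: globalize by filling and cancel by probes.} The whole graph is bipartite, with even layers on one side and odd layers on the other. Order each side layer by layer. Its biadjacency matrix is then block bidiagonal. A vertex of layer $j$ has a neighborhood consisting of a suffix-type interval in layer $j-1$ and a prefix-type interval in layer $j+1$. These two pieces are not contiguous, and the endpoint sequences jump between blocks. To repair this, I would add the missing edges: the rectangle completing each neighborhood to an interval, which amounts to complete bipartite pieces between a vertex's interval in one adjacent layer and its interval in the other (gap-filling), together with the staircase complements needed for nondecreasing endpoints. I would then write down an explicit permutation diagram, or equivalently the explicit monotone ordering, for the filled graph.

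To remove the extra edges, I would attach to each added complete bipartite block $K_{A,B}$ a paired bipartite probe of multiplicity $t$. This is a gadget (for instance, $t$ parallel pendant pairs inserted in the order next to $A$ and $B$) whose contribution to matchings using $A$–$B$ edges is a polynomial in $t$ that equals $-1$ times the plain edge contribution at $t=-1$, while preserving monotonicity for every nonnegative integer $t$. One shared $t$ is used for all probes. After dividing by a known normalization, the count is a polynomial in $t$ of polynomial degree. Its value at $t=-1$ equals $\PM$ of the graph in which every added block carries weight $0$, possibly up to a known uniform sign. That graph is exactly the layered graph, so this value is $Z$. Interpolating from $t=0,\ldots,\deg$ then gives the reduction.

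\emph{Main obstacle.} The hard step is designing the fill-in and the probe so that three things hold at once. First, every queried graph is monotone with an explicit diagram. Second, the probe contribution factors cleanly: matchings using filled edges must be cancelled exactly, and must not interact with other probes or create new matchings that survive at $t=-1$. Third, the sign and normalization are uniform across matchings. My first attempt would be to use, in place of each filled block, complete bipartite probes attached to the gap between consecutive layers. I would verify by a case analysis on how many filled edges a perfect matching uses per block that the generating polynomial is $(1+t)^{\#}$-type, so that it vanishes at $t=-1$ exactly on matchings that use filled edges.
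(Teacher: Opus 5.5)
There is a genuine gap. The step that carries the whole theorem is left as your ``main obstacle'': for every sampled parameter you need a monotone query graph with an explicit representation, together with an exact cancellation identity. The mechanism you sketch for it does not work as described.

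\begin{itemize}
\item \emph{Step~1 does not globalize.} In a run of ordinary pairs, a middle even layer $C$ receives the lower sets $\{c_0,\ldots,c_j\}$ from both sides: from the odd layer before it (via $L$) and from the one after it (via $U$). With contiguous layer blocks, $\{c_0\}$ would have to be both an initial and a final segment of $C$'s order, which is impossible for $N\ge2$.
\item \emph{Gap-filling breaks bipartiteness.} Joining a vertex's pieces in its two neighboring layers adds edges between two layers of the same color class, creating triangles.
\item \emph{Rectangle probes cannot cancel a staircase fill.} A probe attached to $A\times B$ lowers by one the weight of every pair in that rectangle, original edges included. So `add edges, then cancel them to weight $0$' requires the added edges to split into rectangles containing no original edge, each with a probe that still fits the monotone order. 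The repair a cut like $L$ needs is the staircase $\mathbf 1-L$, not a rectangle. Filling the cut to $\mathbf 1$ and subtracting one rectangle erases $H$ entirely, and you do not show how probes for many small rectangles could be placed.
\item \emph{The gadget matters quantitatively.} Take $t$ disjoint edges $p_\ell q_\ell$ with $p_\ell$ complete to $A$ and $q_\ell$ complete to $B$. When $a$ vertices of each side enter, it contributes $(t)_a\,a!$, which equals $(-1)^a(a!)^2$ at $t=-1$ instead of the required $(-1)^a a!$. Also, the cancellation is a signed sum over bijections, not a per-matching factor like $(1+t)$.
\end{itemize}

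The missing idea is to \emph{complement} rather than fill. Keep every layer independent and every first cut $M_r$, but replace each second cut $H_r$ by $\mathbf 1-H_r$. This graph has an explicit permutation diagram built from the interleaved blocks $T_r=(b_{N-1},a_{N-1},\ldots,b_0,a_0)$ and $D_r=(b_{N-1},c_{N-1},\ldots,b_0,c_0)$, with one adjacent transposition per special letter.

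Then add one probe pair per $r$: independent sets $P_r,Q_r$ of size $s$ with $N(P_r)=C_{2r}\cup Q_r$ and $N(Q_r)=C_{2r-1}\cup P_r$. Place them as $T_r,P_r,Q_r$ on the upper line and $Q_r,D_r,P_r$ on the lower line. If $a$ vertices enter each side, the pair contributes $(s)_a^2(s-a)!$, hence $(s)_a$ after dividing by $s!$. Since $(-1)_a=(-1)^a a!$, this is exactly a weight-$(-1)$ copy of the full rectangle $C_{2r-1}\times C_{2r}$.

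At $s=-1$ the second cut therefore becomes $(\mathbf 1-H_r)-\mathbf 1=-H_r$. Every perfect matching uses exactly $p$ edges at each of the $h$ second cuts, so all terms carry the same sign. The interpolating polynomial $\Psi(s)=\PM(G_s)/(s!)^h$ has degree at most $2ph$ and satisfies $(-1)^{ph}\Psi(-1)=Z$.

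Your outer skeleton is right: membership, $\IS\le_T\mathrm{PairEval}$ from Theorem~\ref{thm:word-hard}, and one shared parameter evaluated at a negative point. But without this complement-plus-full-rectangle construction, or a worked-out substitute, there is no reduction.
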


We construct an oracle for $\mathrm{PairEval}$. Fix $N=2p$, $h\ge1$,
$S,T\in\cS_{N,p}$, and pairs $(M_r,H_r)$ of the allowed types, and write
\begin{equation}\label{eq:monotone-target}
 Z=(\cA_{M_1}\cA_{H_1}\cdots\cA_{M_h}\cA_{H_h})[S,T].
\end{equation}
Let $C_a=\{v_{a,0},\ldots,v_{a,N-1}\}$, for $0\le a\le2h$.
Keep only $S$ in the first layer and $[N]\setminus T$ in the last.
The target graph has independent layers and consecutive cut matrices
$M_1,H_1,\ldots,M_h,H_h$, and its perfect-matching count
is~\eqref{eq:monotone-target}. Its $n_0=4ph$ vertices are evenly divided
between the even and odd layers.

\subsection{A bipartite probe identity}

We first give the cancellation identity used by the construction.
Let a bipartite graph $B$ on $n_0$ original vertices have color classes
$X,Y$. For each $r\in\{1,\ldots,h\}$ choose $A_r\subseteq X$ and
$B_r\subseteq Y$. Add pairwise disjoint new independent sets
$P_r,Q_r$, disjoint from $X\cup Y$, with $|P_r|=|Q_r|=s$.
The enlarged color classes are
\[
 X\mathbin{\dot\cup}\bigcup_{r=1}^h Q_r,\qquad
 Y\mathbin{\dot\cup}\bigcup_{r=1}^h P_r.
\]
Add all edges in
\[
 P_r\times Q_r,\qquad P_r\times A_r,\qquad Q_r\times B_r.
\]
No other edges are incident to probe vertices. In particular,
different probe pairs have no edges between them. Denote this graph
by $K_s$. For $x\in X$ and $y\in Y$, define the weight
\[
 w_B(xy)=\ind{xy\in E(B)}
              -\sum_{r=1}^{h}\ind{x\in A_r,\ y\in B_r}.
\]

\begin{lemma}\label{lem:bipartite-probe}
There exists $\Phi\in\mathbb Q[s]$ satisfying
\[
 \begin{aligned}
 \deg\Phi&\le\lfloor n_0/2\rfloor,\\
 \Phi(s)&=\frac{\PM(K_s)}{(s!)^h} &&(s\in\mathbb Z_{\ge0}),\\
 \Phi(-1)&=\PM(K_{X,Y};w_B),
 \end{aligned}
\]
where $K_{X,Y}$ is the complete bipartite graph with parts $X,Y$.
\end{lemma}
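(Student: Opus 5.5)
The plan is a direct combinatorial expansion of $\PM(K_s)$, organized by how the probe vertices are matched. The resulting sum will be a polynomial in $s$ whose value at $s=-1$ can be recognized term by term as the weighted count $\PM(K_{X,Y};w_B)$.

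\textbf{Step 1: local structure at one probe pair.} Fix a perfect matching of $K_s$ and a probe index $r$. Each vertex of $P_r$ is matched either into $Q_r$ or into $A_r$. Each vertex of $Q_r$ is matched either into $P_r$ or into $B_r$. Let $j_r$ be the number of $P_r$-vertices matched into $A_r$. Then exactly $s-j_r$ vertices of $P_r$ are matched into $Q_r$, so exactly $j_r$ vertices of $Q_r$ are matched into $B_r$. Let $X'_r\subseteq A_r$ and $Y'_r\subseteq B_r$ be the original vertices matched to probe $r$, so $|X'_r|=|Y'_r|=j_r$. The sets $X'_r$ are pairwise disjoint over $r$, and likewise the sets $Y'_r$. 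The remaining original vertices must be perfectly matched by edges of $B$.

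\textbf{Step 2: count the completions for fixed data.} Fix the data $(X'_r,Y'_r)_r$. The number of ways to complete the matching inside probe $r$ is
\[
 (s)_{j_r}\cdot(s)_{j_r}\cdot(s-j_r)!=s!\,(s)_{j_r},
\]
where $(s)_j=s(s-1)\cdots(s-j+1)$. The first factor is the number of injections $X'_r\to P_r$, the second the number of injections $Y'_r\to Q_r$, and the third the number of bijections between the leftover $s-j_r$ vertices of $P_r$ and of $Q_r$. Dividing by $(s!)^h$ gives
\[
 \frac{\PM(K_s)}{(s!)^h}=\sum_{(X'_r,Y'_r)_r}
 \PM\Bigl(B-\textstyle\bigcup_r(X'_r\cup Y'_r)\Bigr)\prod_{r=1}^h(s)_{j_r}.
\]
This sum has finitely many terms independent of $s$, so define $\Phi$ to be the right-hand side as a polynomial in $s$. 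Its degree is at most $\sum_r j_r$. Since the $X'_r$ are disjoint in $X$ and the $Y'_r$ are disjoint in $Y$, we get $\sum_r j_r\le\min(|X|,|Y|)\le\lfloor n_0/2\rfloor$. Note that $(s)_j=0$ when $0\le s<j$, so the identity holds for all nonnegative integers $s$, including small ones.

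\textbf{Step 3: evaluate at $s=-1$ and match with the weighted count.} Compute $(-1)_j=(-1)^j j!$. The factor $j_r!$ counts bijections $X'_r\to Y'_r$, each of which is a set of $j_r$ pairs in $A_r\times B_r$. Thus $\Phi(-1)$ is a sum over triples consisting of a matching of $B$ on the remaining vertices and, for each $r$, a perfect matching between $X'_r$ and $Y'_r$ weighted by $(-1)^{j_r}$.

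Now expand $\PM(K_{X,Y};w_B)$ by writing each edge weight $w_B(xy)$ as a sum of terms: the term $\ind{xy\in E(B)}$, or one of the terms $-\ind{x\in A_r,\,y\in B_r}$. Each perfect matching of $K_{X,Y}$ together with a choice of one term per edge corresponds bijectively to such a triple. The edges assigned the label $r$ form the bijection $X'_r\to Y'_r$, and the edges labeled $B$ form the matching of the remaining vertices in $B$. The signs agree on both sides. Hence $\Phi(-1)=\PM(K_{X,Y};w_B)$.

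The only delicate point is this bookkeeping in Steps 2--3: the factor $s!$ per probe must be extracted correctly, and the labeled-term expansion must be a bijection even though different probes' rectangles $A_r\times B_r$ may overlap each other and overlap $E(B)$. Labeling each edge by the term it chose, rather than by the edge itself, resolves this.
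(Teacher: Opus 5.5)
Your proposal is correct and follows the paper's proof essentially step by step. It uses the same decomposition by the original vertices matched into each probe pair, the same per-pair count $(s)_a^2(s-a)!=s!\,(s)_a$ (vanishing for $s<a$), and the same identification of $(-1)_a=(-1)^a a!$ with signed bijections in the term-by-term expansion of $w_B$. Your labeling of each matching edge by the chosen weight term is just a more explicit version of the paper's remark that overlapping probe sets are harmless.
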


\begin{proof}
Fix the original vertices matched into each probe class and the
matching on the remaining original vertices. If $a$ original vertices
are matched into $P_r$, exactly $a$ must be matched into $Q_r$:
the remaining probe vertices can only be matched to one another
within that pair. For prescribed sets of $a$ original vertices on
each side, the number of extensions through the probe pair is
$(s)_a^2(s-a)!$. After division by $s!$, its contribution is
\begin{equation}\label{eq:bipartite-probe-factor}
 \frac{(s)_a^2(s-a)!}{s!}=(s)_a,
 \qquad (s)_a=s(s-1)\cdots(s-a+1).
\end{equation}
The calculation initially applies when $s\ge a$. For nonnegative
$s<a$, the right-hand side is zero, as required. A product of these
factors has degree $\sum_r a_r\le n_0/2$, because it uses
$2\sum_r a_r$ distinct original vertices. Summing over the choices
proves the polynomial assertion.

At $s=-1$, the factor in~\eqref{eq:bipartite-probe-factor} becomes
$(-1)^a a!$. This is the total weight of all bijections between the
two prescribed original sets when every added edge has weight $-1$.
Expanding the sums of original and added edge weights therefore
gives exactly the stated weighted perfect matching count. This argument also
allows the sets attached to different probe pairs to overlap, since
the matching expansion uses each original vertex at most once.
\end{proof}

\subsection{The query graph and its endpoint orders}

For each pair $r$, keep the first cut equal to $M_r$ and change the
second cut to $\mathbf{1}-H_r$, where $\mathbf{1}$ is the all-ones
$N\times N$ matrix. All original layers remain independent, and there
are no edges between nonconsecutive layers. Add independent probe
classes $P_r,Q_r$, each of size $s$, with
\begin{equation}\label{eq:monotone-probe-neighborhoods}
 N(P_r)=C_{2r}\cup Q_r,\qquad
 N(Q_r)=C_{2r-1}\cup P_r.
\end{equation}
Here $N(P_r)$ denotes the common neighborhood of a vertex of $P_r$,
and likewise for $Q_r$. The layers in this display are restricted
at the boundaries as above. There are no edges between different
probe pairs. Write $G_s$ for the resulting graph. Its two color
classes are
\begin{equation}\label{eq:monotone-bipartition}
 \left(\bigcup_{a\text{ even}}C_a\right)\cup\bigcup_r Q_r,
 \qquad
 \left(\bigcup_{a\text{ odd}}C_a\right)\cup\bigcup_r P_r,
\end{equation}
and each has $2ph+hs$ vertices.

We specify two endpoint orders for $G_s$. Temporarily, within pair
$r$, abbreviate
\[
 a_i=v_{2r-2,i},\qquad b_i=v_{2r-1,i},\qquad c_i=v_{2r,i}.
\]
For ordinary cuts $(U,L)$, define the upper and lower blocks
\begin{align}
 T_r&=(b_{N-1},a_{N-1},\ b_{N-2},a_{N-2},\ldots,b_0,a_0),
       \label{eq:monotone-top-block}\\
 D_r&=(b_{N-1},c_{N-1},\ b_{N-2},c_{N-2},\ldots,b_0,c_0).
       \label{eq:monotone-bottom-block}
\end{align}
The first block realizes $U$; the second realizes
$\mathbf{1}-L$, the strict upper triangular matrix. Special cuts
require the following adjacent transpositions:
\begin{equation}\label{eq:monotone-swaps}
\begin{array}{c|c|c|c}
 \text{specified matrix}&\text{block}&\text{labels to interchange}
       &\text{change in the realized cut}\\\hline
 M_r=V_i&T_r&a_{i+1},\ b_i&\text{add }(i+1,i)\\
 M_r=Z_i&T_r&b_i,\ a_i&\text{delete }(i,i)\\
 H_r=V_i^{\mathsf T}&D_r&c_{i+1},\ b_i&\text{delete }(i,i+1)\\
 H_r=Z_i^{\mathsf T}&D_r&b_i,\ c_i&\text{add }(i,i).
\end{array}
\end{equation}
The last two changes implement $\mathbf{1}-H_r$, rather than $H_r$.
Each transposition preserves the decreasing track order within every
original layer.

Regard $P_r,Q_r$ as blocks of $s$ labeled vertices, using the same
internal order for each such block on both lines. The complete upper
order is
\begin{equation}\label{eq:monotone-top-order}
 T_1,P_1,Q_1,\ T_2,P_2,Q_2,\ldots,T_h,P_h,Q_h,\
 v_{2h,N-1},\ldots,v_{2h,0}.
\end{equation}
The complete lower order is
\begin{equation}\label{eq:monotone-bottom-order}
 v_{0,N-1},\ldots,v_{0,0},\
 Q_1,D_1,P_1,\ Q_2,D_2,P_2,\ldots,Q_h,D_h,P_h.
\end{equation}
Finally delete the unwanted boundary vertices from both orders.
When $s=0$, all probe blocks are empty.

\begin{lemma}\label{lem:monotone-representation}
The orders~\eqref{eq:monotone-top-order} and
\eqref{eq:monotone-bottom-order} give a bipartite permutation
representation of $G_s$. Ordering each part by its upper endpoints
gives a monotone ordering.
\end{lemma}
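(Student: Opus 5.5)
We verify directly that two vertices are adjacent in $G_s$ exactly when their relative orders in \eqref{eq:monotone-top-order} and \eqref{eq:monotone-bottom-order} disagree. Both lists consist of the same labels, each once, so they form a permutation diagram. Bipartiteness of $G_s$ is already recorded in \eqref{eq:monotone-bipartition}. The monotone ordering then follows from \eqref{eq:monotone-bipartite-permutation} and the explicit ``sort by upper endpoints'' rule in Section~\ref{sec:prelims}. That rule only requires that each color class be ordered identically on both lines. We check this directly: every original layer is listed in decreasing track order on both lines (the swaps in \eqref{eq:monotone-swaps} preserve this), each probe block keeps its internal order, and the relative order of whole layers and probe blocks within one color class is the same on both lines.

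The adjacency check is organized by the block structure. Upper blocks $T_1,P_1,Q_1,\ldots$ advance from layer $C_{2r-2}$ toward $C_{2r}$, while the lower list starts with $C_0$ and then the groups $Q_r,D_r,P_r$. First we dispose of pairs of labels in different "pair indices" or nonconsecutive layers. If $u\in C_a$ and $v\in C_{a'}$ with $|a-a'|\ge2$, or $u,v$ lie in different probe pairs, or a probe vertex lies far from its designated layers, then both lists put them in the same order. This is seen by locating each label's block index on each line: $C_{2r-1}$ appears in $T_r$ and $D_r$, $C_{2r}$ appears in $T_{r+1}$ and $D_r$, $P_r$ sits after $T_r$ above and after $D_r$ below, and $Q_r$ similarly.

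For the required edges we treat three cases.
(i) Within $T_r$ versus the lower list, the interleaving $b_{N-1},a_{N-1},\ldots,b_0,a_0$ against $a$'s listed earlier (in $D_{r-1}$ or the initial $C_0$ segment) and $b$'s in $D_r$ yields that $a_i,b_j$ cross iff $a_i$ precedes $b_j$ above, i.e.\ iff $i\le j$. This is $U$, and the swap rows in \eqref{eq:monotone-swaps} add or remove exactly the single entry listed.
(ii) Similarly, $D_r$ below against $b$'s in $T_r$ and $c$'s in $T_{r+1}$ above yields that $b_j,c_i$ cross iff $c_i$ precedes $b_j$ below, i.e.\ iff $i>j$ or the swap applies. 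This gives $\mathbf 1-L$ modified to $\mathbf 1-H_r$.
(iii) For probes, $P_r$ is placed before $C_{2r}$ above ($C_{2r}$ lies in $T_{r+1}$) but after all of $D_r$ below, so it crosses exactly $C_{2r}$ and $Q_r$ and nothing else. Likewise $Q_r$ sits after $T_r$ above but before $D_r$ below, so it crosses exactly $C_{2r-1}$ and $P_r$. This matches \eqref{eq:monotone-probe-neighborhoods}. Vertices inside one block are ordered identically on both lines, hence nonadjacent, giving independent probe classes.

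Deleting boundary vertices from both lists preserves all relative orders, so the representation restricts correctly. The main obstacle is the bookkeeping in (i) and (ii): a single adjacent transposition must change exactly one crossing. Since swapping two adjacent labels changes only their mutual order, it toggles exactly the one listed pair. That pair is opposite-layer and consecutive, so the toggle matches the intended edge change. The remaining work is a careful but routine check that all other cross-block pairs are consistently ordered.
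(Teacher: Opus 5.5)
Your proposal is correct and follows essentially the same route as the paper. Both arguments compare relative orders block by block, observe that each adjacent transposition toggles exactly its one listed pair, locate $P_r,Q_r$ after the $b$-layer and before the $c$-layer on the upper line and on opposite sides of $D_r$ on the lower line, and then invoke \eqref{eq:monotone-bipartite-permutation} with the sorting rule.

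There is one small slip in (i). The lower list puts all of $C_{2r-2}$ before $C_{2r-1}$, so $a_i$ and $b_j$ cross iff $b_j$ precedes $a_i$ in $T_r$, which holds iff $i\le j$. As written, ``$a_i$ precedes $b_j$'' would give $i>j$, so your conclusion $U$ is right but the stated criterion is reversed.
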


\begin{proof}
Every original layer has decreasing track order on both lines, so
it is independent. Each probe class is also independent, since its
internal orders agree. Nonconsecutive original layers occur in the
same relative order on both lines.

Within pair $r$, all lower endpoints of the $a$-layer precede those
of the $b$-layer. The order in~\eqref{eq:monotone-top-block} makes
$a_u,b_v$ intersect exactly when $u\le v$. Thus the first cut is
$U$ before its possible adjacent transposition. All upper endpoints
of the $b$-layer precede those of the $c$-layer. In
\eqref{eq:monotone-bottom-block}, $b_u$ follows $c_v$ exactly when
$u<v$, giving $\mathbf{1}-L$ on the second cut. The four adjacent
transpositions change only their indicated pairwise orders, so
\eqref{eq:monotone-swaps} yields the prescribed matrices
$M_r,\mathbf{1}-H_r$ and no other changes.

In the upper order, both $P_r$ and $Q_r$ lie after the $b$-layer
and before the $c$-layer. In the lower order, $P_r$ follows the
entire $b,c$ block while $Q_r$ precedes it. Hence $P_r$ intersects
all of the $c$-layer, $Q_r$ intersects all of the $b$-layer, and
neither intersects its own color class. Moreover, $P_r$ precedes
$Q_r$ on the upper line and follows it on the lower line, producing
all edges of $P_r\times Q_r$. Earlier and later blocks have consistent
orders on both lines. Thus there are no other probe edges, including
edges between different probe pairs. This proves the claimed
representation and the bipartition~\eqref{eq:monotone-bipartition}.

By~\eqref{eq:monotone-bipartite-permutation} and the explicit sorting
rule in Section~\ref{sec:prelims}, ordering each color class by its
upper endpoints gives the claimed monotone ordering.
\end{proof}

Figure~\ref{fig:monotone-construction} displays the bipartite
permutation representation and the effect of its paired probes.
\begin{figure}[htbp]
\centering
\begin{tikzpicture}[x=1cm,y=1cm,font=\small,
  xend/.style={circle,inner sep=1.40pt,fill=white,line width=.6pt},
  yend/.style={rectangle,inner sep=1.35pt,fill=white,line width=.6pt},
  track/.style={line width=.7pt},
  probe/.style={draw=black!65,line width=1pt,densely dashed}]
\definecolor{geomCzero}{HTML}{0072B2}
\definecolor{geomCone}{HTML}{D55E00}
\definecolor{geomCtwo}{HTML}{009E73}
\node[font=\small\bfseries] at (4.60,3.65)
 {(a) Independent layers and paired probes};
\draw[black!35] (.05,2.65)--(9.15,2.65);
\draw[black!35] (.05,.25)--(9.15,.25);
% Upper: b1,a1,b0,a0,P1,Q1,c1,c0.
% Lower: a1,a0,Q1,b1,c1,b0,c0,P1.
% This is the pre-boundary-deletion representation for N=2,h=1,s=1.
\foreach \name/\xt/\xb/\col/\shape in {
 a_0/4.02/1.61/geomCzero/xend,a_1/1.52/.25/geomCzero/xend,
 b_0/2.77/6.42/geomCone/yend,b_1/.27/4.08/geomCone/yend,
 c_0/9.02/7.79/geomCtwo/xend,c_1/7.77/5.29/geomCtwo/xend}{
  \draw[track,draw=\col] (\xt,2.65)--(\xb,.25);
  \node[\shape,draw=\col] at (\xt,2.65) {};
  \node[\shape,draw=\col] at (\xb,.25) {};
  \node[above=4pt,text=\col] at (\xt,2.65) {$\name$};
  \node[below=4pt,text=\col] at (\xb,.25) {$\name$};
}
\foreach \name/\xt/\xb/\shape in {P_1/5.27/9.00/yend,Q_1/6.52/2.81/xend}{
  \draw[probe] (\xt,2.65)--(\xb,.25);
  \node[\shape,draw=black!65] at (\xt,2.65) {};
  \node[\shape,draw=black!65] at (\xb,.25) {};
  \node[above=4pt,text=black!75] at (\xt,2.65) {$\name$};
  \node[below=4pt,text=black!75] at (\xb,.25) {$\name$};
}
\node[font=\footnotesize,text=geomCzero] at (1.80,-.53) {$a_i\in C_0$};
\node[font=\footnotesize,text=geomCone] at (4.57,-.53) {$b_i\in C_1$};
\node[font=\footnotesize,text=geomCtwo] at (7.35,-.53) {$c_i\in C_2$};
\draw[black!18] (9.65,-.70)--(9.65,3.40);
\node[font=\small\bfseries] at (12.15,3.65) {Bipartition};
\node[xend,draw=black] at (10.12,2.66) {};
\node[anchor=west] at (10.38,2.66) {$X=C_0\cup C_2\cup Q_1$};
\node[yend,draw=black] at (10.12,1.96) {};
\node[anchor=west] at (10.38,1.96) {$Y=C_1\cup P_1$};
\node[align=center,font=\footnotesize] at (12.12,.91)
 {Segments with the same endpoint\\shape never cross.};
\node[align=center,font=\footnotesize] at (12.12,-.31)
 {Cuts: $U$ and $\mathbf 1-L$.\\Dashed segments: the probe pair.};
\end{tikzpicture}

\par\medskip
\begin{tikzpicture}[x=1cm,y=1cm,font=\small,
      layer/.style={ellipse,draw,minimum width=1.03cm,minimum height=.65cm,fill=white},
      probe/.style={rectangle,rounded corners=2pt,draw=black!65,fill=black!4,
                    minimum width=.77cm,minimum height=.5cm,font=\scriptsize},
      cut/.style={line width=.8pt},
      complete/.style={draw=black!60,line width=1.2pt},
      lab/.style={font=\scriptsize,below=5pt,inner sep=1pt}]
\definecolor{geomCzero}{HTML}{0072B2}
\definecolor{geomCone}{HTML}{D55E00}
\definecolor{geomCtwo}{HTML}{009E73}
\node[anchor=west,font=\small\bfseries] at (-.6,2.2)
  {(b) A paired transfer and its probes ($h=1$)};
\node[font=\scriptsize] at (2,1.82) {unweighted query graph};
\node[font=\scriptsize] at (10.9,1.82) {equivalent weighted layer model};

\node[layer,draw=geomCzero] (l0) at (0,.15) {$\overline K_p$};
\node[font=\scriptsize,below=4pt] at (l0.south) {$C_0$};
\node[layer,draw=geomCone] (l1) at (2,.15) {$\overline K_N$};
\node[font=\scriptsize,below=4pt] at (l1.south) {$C_1$};
\node[layer,draw=geomCtwo] (l2) at (4,.15) {$\overline K_p$};
\node[font=\scriptsize,below=4pt] at (l2.south) {$C_2$};
\node[layer,draw=geomCzero] (r0) at (8.9,.15) {$\overline K_p$};
\node[font=\scriptsize,below=4pt] at (r0.south) {$C_0$};
\node[layer,draw=geomCone] (r1) at (10.9,.15) {$\overline K_N$};
\node[font=\scriptsize,below=4pt] at (r1.south) {$C_1$};
\node[layer,draw=geomCtwo] (r2) at (12.9,.15) {$\overline K_p$};
\node[font=\scriptsize,below=4pt] at (r2.south) {$C_2$};
\draw[cut] (l0)--node[lab]{$M$}(l1);
\draw[cut] (l1)--node[lab]{$\mathbf1-H$}(l2);
\draw[cut,black] (r0)--node[lab]{$M$}(r1);
\draw[cut,red!75!black] (r1)--node[lab]{$-H$}(r2);
\node[probe] (q) at (2,1.15) {$Q_1:\overline K_s$};
\node[probe] (p) at (4,1.15) {$P_1:\overline K_s$};
\draw[complete] (q)--(p); \draw[complete] (q)--(l1);
\draw[complete] (p)--(l2);
\draw[-{Stealth[length=2mm]},line width=.9pt] (4.95,.5)--(7.82,.5);
\node[font=\scriptsize,align=center] at (6.4,1.12) {normalize, then\\interpolate to $s=-1$};
\node[font=\scriptsize,align=center] at (6.4,-.02) {divide by $s!$};
\end{tikzpicture}

\caption{Independent layers and paired bipartite probes.
Panel (a) is the ordinary $N=2,h=1,s=1$ permutation diagram before
boundary deletion. Circles and squares mark the two independent color
classes; layer colors distinguish $C_0,C_1,C_2$, and dashed segments
are the probes. Equal internal endpoint orders expand each probe to
an independent set of size $s$. In (b), the probe pair forms $K_{s,s}$,
with complete connections $Q_1$--$C_1$ and $P_1$--$C_2$;
the boundary layers have been restricted to $p$ vertices, with the
corresponding cut restrictions suppressed in the labels.
Normalization by $s!$ and interpolation to $s=-1$ subtract the
all-ones matrix from the second cut, giving $-H$, with the first cut
unchanged. The right-hand side represents a weighted-count identity;
every oracle query remains unweighted.}
\label{fig:monotone-construction}
\end{figure}

\subsection{Interpolation and the oracle reduction}

For every nonnegative integer $s$, let
\begin{equation}\label{eq:monotone-probe-polynomial}
 \Psi(s)=\frac{\PM(G_s)}{(s!)^h}.
\end{equation}
Apply Lemma~\ref{lem:bipartite-probe} with
$A_r=C_{2r}$ and $B_r=C_{2r-1}$. It gives a polynomial of degree
at most $n_0/2=2ph$. Evaluating at $s=-1$ subtracts the all-ones
matrix from each second cut, so its weight matrix becomes
\[
 (\mathbf{1}-H_r)-\mathbf{1}=-H_r.
\]
No first-cut edge changes, and no intralayer edge is introduced.
The remaining graph is consequently the target independent-layer
graph, with weight $-1$ on every edge of a second cut and weight
$1$ on every edge of a first cut.

Every target perfect matching uses exactly $p$ edges at each cut.
There are $h$ second cuts, so every surviving matching has the same
sign $(-1)^{ph}$. Therefore
\begin{equation}\label{eq:monotone-cancellation}
 (-1)^{ph}\Psi(-1)=Z.
\end{equation}
All oracle queries remain
simple and unweighted; the negative weights only describe the
polynomial identity evaluated in the reduction.

The $2ph+1$ samples $s=0,1,\ldots,2ph$ suffice for interpolation.
The same parameter is used for all probe pairs, so the number of
samples does not grow exponentially with the number of pairs. Each
query has at most
\begin{equation}\label{eq:monotone-query-size}
 n_0+2h(2ph)=4ph(h+1)
\end{equation}
vertices. Its endpoint orders and its monotone ordering are explicit;
the endpoint coordinates can be their integer ranks.

\begin{proof}[Proof of Theorem~\ref{thm:monotone}]
The preceding construction computes the paired-transfer value in
time polynomial in $N+h$ from perfect-matching counts on monotone
graphs. For a query with $v$ vertices, its answer and the normalizing
factor $(s!)^h$ have $O(v\log v)$ bits. The interpolation nodes
have polynomial magnitude, and the rational interpolation
coefficients and intermediate values also have polynomial bit
length. Thus the normalization, interpolation, and multiplication
by $(-1)^{ph}$ are polynomial-time exact computations.

We have proved $\mathrm{PairEval}\le_T
\#\mathrm{PM}(\mathrm{Monotone})$.
Theorem~\ref{thm:word-hard} therefore proves
$\sharpP$-hardness on monotone graphs. Membership in $\sharpP$
follows by guessing a canonically encoded perfect matching and
checking its edges and vertex coverage. Both a monotone ordering
and a permutation diagram are produced by the construction.
\end{proof}

\section{Unit interval graphs}\label{sec:unitinterval}

We now realize the paired-transfer interface geometrically. In this
section the original layers are cliques, and auxiliary clique probes
remove their unwanted edges by a single interpolation parameter.

\begin{proposition}\label{prop:oracle}
$\mathrm{PairEval}\le_T\UIPM$.
\end{proposition}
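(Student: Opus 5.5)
The plan mirrors the monotone reduction of Section~\ref{sec:monotone}. Given a $\mathrm{PairEval}$ instance $(N=2p,h,S,T,(M_r,H_r))$, I will build a family of unit interval graphs $G_s$, indexed by one shared probe multiplicity $s\ge0$. After a known normalization, $\PM(G_s)$ should be a polynomial $\Psi(s)$ of degree polynomial in $ph$ with $\pm\Psi(-1)=Z$. Then I query $s=0,\ldots,\deg\Psi$ and interpolate via~\eqref{eq:lagrange}.

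\textbf{Layers and cuts.} I use the integer model $|x_u-x_v|<r$. Layer $a$ sits near $ar$, with track offsets small compared with $r$, so each layer is a clique. Layers $a$ and $a+2$ are then nonadjacent. The cut between layers $a$ and $a+1$ is a threshold (staircase) relation determined by how the two offset sequences interleave.

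A single offset order per layer cannot produce the orientation $U$ on one cut and $L$ on the next. I therefore expect, as in the monotone case, to realize each pair $(M_r,H_r)$ as $M_r$ together with a complemented second cut $\mathbf 1-H_r$, or a variant of this. The special letters $V_i,Z_i,V_i^{\mathsf T},Z_i^{\mathsf T}$ are obtained by perturbing one offset so that exactly one adjacent pair in the interleaving is exchanged, as in~\eqref{eq:monotone-swaps}. I will check explicitly that each such exchange changes exactly one cut entry and creates no other adjacency. The first step is to write all coordinates explicitly as polynomially bounded integers and verify the resulting adjacency table.

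\textbf{Clique probes.} Next I attach probes: cliques of $s$ vertices (possibly a pair of cliques, as in Lemma~\ref{lem:bipartite-probe}) whose intervals cover exactly a prescribed set $A$. There is one probe for every layer $C_a$, to cancel its intralayer clique. There is one probe for $C_{2r-1}\cup C_{2r}$, to subtract $\mathbf 1$ from each complemented cut. The weights on the doubly covered layers will be balanced by choosing probes with the appropriate supports.

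I will then prove an analogue of Lemma~\ref{lem:bipartite-probe}. Fix the set of $a$ original vertices matched into a probe. The number of completions, divided by a fixed normalization (for example the number of perfect matchings of the probe itself, with $s$ restricted to one parity class, or via a paired construction), is a polynomial in $s$. Its value at $s=-1$ must equal the weight-$(-1)$ sum over perfect matchings of $A$-vertices among themselves, namely $(-1)^{a/2}(a-1)!!$ for $a$ even, so that the probe acts as weight $-1$ on every pair inside $A$. Expanding edge weights then yields a weighted layered graph with no intralayer edges and cut weights $M_r$ and $-H_r$. This gives $Z$ up to the uniform sign $(-1)^{ph}$, exactly as in~\eqref{eq:monotone-cancellation}.

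\textbf{Main obstacle.} I expect the hardest step to be this clique-probe identity: designing a probe whose normalized count is polynomial in $s$ with the correct value at $-1$, while its intervals hit exactly the intended set $A$ in a unit interval layout. The probe must not touch neighbouring layers or other probes; where one would, I plan to shift coordinates by fractions of $r$. The geometric verification of the full interval representation is the other delicate point; the bit-length and interpolation bookkeeping is routine and parallels the proof of Theorem~\ref{thm:monotone}.
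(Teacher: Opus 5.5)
Your overall framework matches the paper's. Layers are cliques built from equal-length intervals, and one cut of each pair is complemented so that both cuts of a pair are staircases of the same orientation; your $(M_r,\mathbf 1-H_r)$ works if track offsets decrease with the track index. Probes are cliques of $s$ identical copies, normalized by $(s-1)!!$ with $s$ even, followed by interpolation to $s=-1$ and the uniform sign $(-1)^{ph}$. The gap is in the probe layout, which you leave to ``balancing''. As written, that layout cannot be built, for two reasons.

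First, no clique probe can have original neighbourhood exactly an internal layer $C_a$ in a unit interval graph. Suppose $v\in C_a$ has neighbours $u\in C_{a-1}$ and $w\in C_{a+1}$. Then $u$, $w$ and any probe copy $q$ are pairwise nonadjacent, so $v$ is the centre of an induced claw $K_{1,3}$. Unit interval graphs are claw-free. Such $v$ are unavoidable in your cuts: under an ordinary pair $(U,\mathbf 1-L)$ with full neighbouring layers, track $0$ of the middle layer is adjacent to track $0$ of the previous layer and to track $1$ of the next. So no shifting of coordinates by fractions of $r$ can fix this.

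Second, even if such probes existed, superimposing them with probes on $C_{2r-1}\cup C_{2r}$ fails. Each intralayer pair in a doubly covered layer gets total weight $1-1-1=-1$ at $s=-1$. A clique probe only ever contributes $-1$, so nothing can restore the balance.

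The missing idea is that the probe on the two layers across the complemented cut already does all the work. At $s=-1$ it puts weight $-1$ on every pair inside each of the two layers and on every pair across their cut. This cancels both cliques and turns $\mathbf 1-H_r$ into $-H_r$. So drop the per-layer probes, and pair up the layers so that each lies in exactly one probe neighbourhood. The one unpaired layer is a boundary layer with neighbours on only one side, so a private probe there creates no claw.

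In your orientation, this means:
\begin{itemize}
\item probes on $C_{2r-1}\cup C_{2r}$ for $1\le r\le h$, realized by the interval $[2r\Lambda,(2r+1)\Lambda]$ when layer $j$ occupies $[j\Lambda+o,(j+1)\Lambda+o]$ with $0<o<\Lambda$;
\item one private probe for $C_0$, for example $[0,\Lambda]$.
\end{itemize}
The paper does the mirror image. It complements the first cut and realizes $\mathbf 1-M$ and $H$ with increasing profiles. It places $Q_j=[(2j+1)\Lambda,(2j+2)\Lambda]$ on $C_{2j}\cup C_{2j+1}$, and gives only the last layer $C_{2h}$ a private probe. The clique-probe identity, which you flagged as the main obstacle, is in fact routine. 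The real difficulty is this geometric one.
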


\subsection{An equal-length realization of the pairs}

Fix $N=2p$ and a paired layered instance with $h\ge1$ pairs from $\mathcal P_N$,
and write the two matrices of a pair as $(M,H)$. We first build an unweighted graph whose
layers are cliques, whose first cut in a pair is the entrywise
complement $\mathbf 1-M$, and whose second cut is $H$.
Here $\mathbf 1$ is the $N\times N$ all-ones matrix. A
\emph{profile} is the increasing vector of offsets assigned to
the tracks of a layer; the layer's global position is added to
these offsets in~\eqref{eq:intervals}.

Set
\begin{equation}\label{eq:geometric-scale}
 \delta=1000(h+1),\qquad \Lambda=(N+2)\delta,
 \qquad a_j=(j+1)\delta\quad(0\le j<N)
\end{equation}
and take $a=(a_0,\ldots,a_{N-1})$ as the initial profile.
A pair has input, middle, and output profiles $a,b,c$. Normally put $c=a$ and
\[
 b_j=\left\lfloor\frac{a_j+a_{j+1}}2\right\rfloor\quad(j<N-1),
 \qquad b_{N-1}=a_{N-1}+\delta/2.
\]
For a special pair make only the changes in
Table~\ref{tab:geometry}. The output $c$ becomes the next pair's
input. Associate profile $a^{(j)}$ at layer $j$ with intervals
\begin{equation}\label{eq:intervals}
 I_{j,v}=[j\Lambda+a^{(j)}_v,(j+1)\Lambda+a^{(j)}_v].
\end{equation}

\begin{table}[htbp]
\centering
\caption{Changes from the ordinary middle and output profiles.
Unlisted coordinates retain their ordinary values.}
\label{tab:geometry}
\begin{tabular}{llll}
\toprule
Type & Pair $(M,H)$ & Middle profile & Output profile\\
\midrule
$\R_i$ & $(V_i,L)$ & $b_i=a_{i+1}+1$ & $c_{i+1}=a_{i+1}+2$\\
$\D_i$ & $(Z_i,L)$ & $b_i=a_i-1$ & $c_i=a_i-2$\\
$\B_i$ & $(U,V_i^{\mathsf T})$ & $b_i=a_{i+1}-1$ & $c_{i+1}=a_{i+1}-2$\\
$\E_i$ & $(U,Z_i^{\mathsf T})$ & $b_i=a_i+1$ & $c_i=a_i+2$\\
\bottomrule
\end{tabular}
\end{table}

Every profile is strictly increasing, and all its coordinates lie in $(0,\Lambda)$.
Indeed, any input coordinate differs from its initial value by at
most $2h$, while adjacent initial coordinates differ by $\delta$.
These inequalities also show that all entries of
Table~\ref{tab:geometry} have the claimed comparisons with their
neighbors. Two intervals in one layer intersect, and nonconsecutive
layers do not meet. The consecutive cuts are determined by
\[
 \ind{a_u\ge b_v}=1-M_{uv},\qquad
 \ind{b_u\ge c_v}=H_{uv}.
\]
For ordinary profiles these are the strict lower and weak lower
triangular relations. Moving the single middle coordinate in each
row of the table changes exactly the prescribed comparison; the
output change makes the second cut equal to the listed $H$.
This verifies the entire interval representation. Delete first-
and last-layer vertices according to the boundary convention in
\eqref{eq:transfer}.

\subsection{Clique cancellation using equal-length probes}

There are $2h+1$ layers. For $j=0,\ldots,h$, introduce a probe
interval
\begin{equation}\label{eq:probes}
 Q_j=[(2j+1)\Lambda,(2j+2)\Lambda].
\end{equation}
Its original neighbors are exactly layers $2j$ and $2j+1$,
with the latter absent for $j=h$. Different probe intervals are
disjoint. Replace each probe by $s$ labeled identical copies,
where $s$ is a common nonnegative even integer. Copies of a probe
form a clique; different probe classes have no edges between them.
Denote the resulting unit interval graph by $G_s$.
For $a\ge0$, the odd double factorial
$(2a-1)!!=\prod_{j=1}^{a}(2j-1)$, with $(-1)!!=1$, counts
perfect matchings of a clique on $2a$ labeled vertices.

Let $B$ be any graph on $n_0$ original vertices. For each of
$b$ probe classes add $s$ vertices forming a clique, with one common
original neighborhood $A_r\subseteq V(B)$ for class $r$, and no
edges between different probe classes. Write $B_s$ for the resulting
graph and $K_{V(B)}$ for the complete graph on the original vertices.
For distinct original vertices $u,v$, define
\[
 w_B(uv)=\ind{uv\in E(B)}-\sum_{r=1}^{b}\ind{u,v\in A_r}.
\]

\begin{lemma}\label{lem:clones}\label{lem:clique-probe}
There exists $\Psi\in\mathbb Q[s]$ satisfying
\[
 \begin{aligned}
 \deg\Psi&\le\lfloor n_0/2\rfloor,\\
 \Psi(s)&=\frac{\PM(B_s)}{((s-1)!!)^{b}}
                  &&(s\in2\mathbb Z_{\ge0}),\\
 \Psi(-1)&=\PM(K_{V(B)};w_B).
 \end{aligned}
\]
\end{lemma}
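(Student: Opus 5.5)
We follow the proof of Lemma~\ref{lem:bipartite-probe}, replacing the paired bipartite probes by single clique probes. Classify a perfect matching of $B_s$ by its restriction to the original vertices. For each probe class $r$, let $D_r\subseteq A_r$ be the set of original vertices matched into class $r$, with $|D_r|=a_r$; the sets $D_r$ are pairwise disjoint. The remaining original vertices carry a perfect matching $\mu$ of $B-\bigcup_rD_r$ using only edges of $B$. Fix $(D_1,\ldots,D_b,\mu)$. The completions inside class $r$ choose an injection of $D_r$ into the $s$ probe vertices, in $(s)_a$ ways with $a=a_r$, and then perfectly match the remaining $s-a$ probe vertices among themselves. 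This gives
\[
 (s)_a\,(s-a-1)!!
\]
completions when $s-a$ is even and zero otherwise. The even case requires $a$ even, since $s$ is even.

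Next we normalize by $(s-1)!!$ for each class. For even $a=2c$ and even $s\ge a$,
\[
 \frac{(s)_{2c}\,(s-2c-1)!!}{(s-1)!!}
 =\frac{s!\,/\,(s-2c)!\cdot(s-2c-1)!!}{(s-1)!!}
 =\frac{s!!\,(s-1)!!\,/\,\bigl((s-2c)!!\,(s-2c-1)!!\bigr)\cdot(s-2c-1)!!}{(s-1)!!}
 =\frac{s!!}{(s-2c)!!}.
\]
Writing $s=2t$, this equals $2^c\,t(t-1)\cdots(t-c+1)=s(s-2)\cdots(s-2c+2)$. Define
\[
 \pi_c(s)=\prod_{j=0}^{c-1}(s-2j).
\]
This is a polynomial of degree $c=a/2$. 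It also vanishes for even $0\le s<a$, so the formula holds for every even $s\ge0$ without a case split. For odd $a$ the factor is zero.

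Summing over the choices, we set
\[
 \Psi(s)=\sum_{(D_r),\,\mu}\ \prod_{r:\,|D_r|\text{ even}}\pi_{|D_r|/2}(s),
\]
where the sum is restricted to tuples in which every $|D_r|$ is even. Its degree is $\sum_r a_r/2\le n_0/2$, because the sets $D_r$ are disjoint subsets of the $n_0$ original vertices. This proves the first two assertions.

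For the evaluation at $s=-1$ we have
\[
 \pi_c(-1)=(-1)(-3)\cdots(-(2c-1))=(-1)^c(2c-1)!!.
\]
This is exactly the total weight of the perfect matchings of the complete graph on $D_r$ when every edge has weight $-1$: there are $(2c-1)!!$ such matchings, each using $c$ edges. A set $D_r$ of odd size has no perfect matching, which is consistent with the zero factor. Hence $\Psi(-1)$ is the sum, over decompositions of the original vertex set, of the weight of an unweighted matching of $B$ on one part times, for each $r$, a matching with weight $-1$ per edge inside $D_r\subseteq A_r$.

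Expanding $\PM(K_{V(B)};w_B)$ multilinearly in the summands of $w_B$ gives the same terms. Each edge of a perfect matching independently chooses either the term $\ind{uv\in E(B)}$ or one term $-\ind{u,v\in A_r}$. Grouping the edges that chose class $r$ recovers $D_r$ as the set of their endpoints, together with a matching of $D_r$. The sets $A_r$ may overlap; this causes no difficulty because each edge selects exactly one summand, just as remarked in Lemma~\ref{lem:bipartite-probe}.

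The main point to check carefully is the normalization identity above: the ratio must be a genuine polynomial in $s$ of degree $a/2$, not of degree $a$, and it must evaluate at $s=-1$ to $(-1)^{a/2}(a-1)!!$. Both follow from the product form of $\pi_c$. It remains only to note that $\Psi$ is uniquely determined as a polynomial, since it is fixed on infinitely many even integers.
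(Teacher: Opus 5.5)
Your proposal is correct and follows essentially the same route as the paper. You decompose by the sets $D_r\subseteq A_r$ matched into each probe clique, normalize each class factor to $\prod_{j=0}^{c-1}(s-2j)$, identify its value $(-1)^c(2c-1)!!$ at $s=-1$ with the weight of the $(-1)$-weighted perfect matchings on $D_r$, and expand $w_B$ multilinearly. The differences are only cosmetic: you keep the matching $\mu$ inside the sum rather than writing $\PM(B-\bigcup_r D_r)$, and you spell out the double-factorial computation of the normalization that the paper states directly.
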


\begin{proof}
Fix a nonnegative even integer $s$. Every perfect matching of $B_s$
uniquely determines the sets $X_r\subseteq A_r$ of original vertices
matched to probe class $r$. These sets are pairwise disjoint, and
$|X_r|=2a_r$ is even because the remaining vertices of each probe
class must be matched internally. For a fixed tuple $(X_1,\ldots,X_b)$
with $s\ge2a_r$ for every $r$, the number of such matchings is
\[
 \PM\!\left(B-\bigcup_{r=1}^{b}X_r\right)
 \prod_{r=1}^{b}(s)_{2a_r}(s-2a_r-1)!!.
\]
Here $(s)_{2a}=s(s-1)\cdots(s-2a+1)$ counts injections of the
$2a$ distinct original vertices into the $s$ labeled probe vertices;
the double factorial counts the internal matching on the remaining
probe vertices. The other factor counts the matching on the remaining
original vertices.

Define $f_a(s)=\prod_{j=0}^{a-1}(s-2j)$, with $f_0(s)=1$.
Dividing the contribution of a probe class by $(s-1)!!$ gives
\begin{equation}\label{eq:clone-factor}
 \frac{(s)_{2a}(s-2a-1)!!}{(s-1)!!}
 =f_a(s) \qquad(s\ge2a,\ s\text{ even}).
\end{equation}
If $s<2a$, there are no such injections, and $f_a(s)=0$ for every
nonnegative even $s$ in this range. Consequently the polynomial
\[
 \Psi(s)=
 \sum_{\substack{X_r\subseteq A_r\ (1\le r\le b)\\
                  X_r\cap X_t=\varnothing\ (r\ne t)\\
                  |X_r|\ \mathrm{even}}}
 \PM\!\left(B-\bigcup_{r=1}^{b}X_r\right)
 \prod_{r=1}^{b}f_{|X_r|/2}(s)
\]
equals $\PM(B_s)/((s-1)!!)^{b}$ at every nonnegative even integer.
Each summand has degree at most $\sum_r a_r\le\lfloor n_0/2\rfloor$.

At $s=-1$, we have $f_a(-1)=(-1)^a(2a-1)!!$. This is the total
weight of perfect matchings on $2a$ vertices with every edge assigned
weight $-1$. Thus $\Psi(-1)$ counts weighted perfect matchings obtained
by retaining the edges of $B$ with weight $1$ and, for each $r$, adding
a separately labeled edge of weight $-1$ between every pair in $A_r$.
Grouping such matchings by the vertices using edges labeled $r$
gives exactly the displayed sum. Expanding the sum of the edge weights
at each pair of endpoints yields $w_B(uv)$, including when several
probe neighborhoods overlap or an original edge has the same
endpoints. Hence $\Psi(-1)=\PM(K_{V(B)};w_B)$.
\end{proof}

Figure~\ref{fig:unit-interval-construction} combines a small interval
realization with the effect of the probe identity.
\begin{figure}[htbp]
\centering
\begin{tikzpicture}[x=.52cm,y=1cm,font=\small,
  c0/.style={draw=geomCzero},
  c1/.style={draw=geomCone},
  c2/.style={draw=geomCtwo},
  interval/.style={line width=1.4pt},
  probe/.style={draw=black!65,line width=1.5pt,dashed}]
\definecolor{geomCzero}{HTML}{0072B2}
\definecolor{geomCone}{HTML}{D55E00}
\definecolor{geomCtwo}{HTML}{009E73}
\node[anchor=west,font=\small\bfseries] at (-1.3,.65)
  {(a) Equal-length intervals for an ordinary pair};
\foreach \x/\lab in {0/0,6/\Lambda,12/2\Lambda,18/3\Lambda,24/4\Lambda}{
  \draw[gray!25] (\x,.12)--(\x,-3.22);
  \node[font=\scriptsize,above] at (\x,.12) {$\lab$};
}
\foreach \a/\b/\y/\lab/\col in {
  1/7/-.35/a_0/c0,3/9/-.68/a_1/c0,
  8/14/-1.15/b_0/c1,10/16/-1.48/b_1/c1,
  13/19/-1.95/c_0/c2,15/21/-2.28/c_1/c2}{
  \draw[interval,\col] (\a,\y)--(\b,\y);
  \draw[\col] (\a,{\y-.065})--(\a,{\y+.065});
  \draw[\col] (\b,{\y-.065})--(\b,{\y+.065});
  \node[anchor=east,font=\scriptsize] at (-.4,\y) {$\lab$};
}
\foreach \a/\b/\y/\lab in {6/12/-2.75/Q_0,18/24/-3.12/Q_1}{
  \draw[probe] (\a,\y)--(\b,\y);
  \draw[black!65] (\a,{\y-.065})--(\a,{\y+.065});
  \draw[black!65] (\b,{\y-.065})--(\b,{\y+.065});
  \node[anchor=east,font=\scriptsize] at (-.4,\y) {$\lab$};
}
\end{tikzpicture}

\par\medskip
\begin{tikzpicture}[x=1cm,y=1cm,font=\small,
      layer/.style={ellipse,draw,minimum width=1.03cm,minimum height=.65cm,fill=white},
      probe/.style={rectangle,rounded corners=2pt,draw=black!65,fill=black!4,
                    minimum width=.77cm,minimum height=.5cm,font=\scriptsize},
      cut/.style={line width=.8pt},
      complete/.style={draw=black!60,line width=1.2pt},
      lab/.style={font=\scriptsize,below=5pt,inner sep=1pt}]
\definecolor{geomCzero}{HTML}{0072B2}
\definecolor{geomCone}{HTML}{D55E00}
\definecolor{geomCtwo}{HTML}{009E73}
\node[anchor=west,font=\small\bfseries] at (-.6,2.2)
  {(b) A paired transfer and its probes ($h=1$)};
\node[font=\scriptsize] at (2,1.82) {unweighted query graph};
\node[font=\scriptsize] at (10.9,1.82) {equivalent weighted layer model};

\node[layer,draw=geomCzero] (l0) at (0,.15) {$K_p$};
\node[font=\scriptsize,below=4pt] at (l0.south) {$C_0$};
\node[layer,draw=geomCone] (l1) at (2,.15) {$K_N$};
\node[font=\scriptsize,below=4pt] at (l1.south) {$C_1$};
\node[layer,draw=geomCtwo] (l2) at (4,.15) {$K_p$};
\node[font=\scriptsize,below=4pt] at (l2.south) {$C_2$};
\node[layer,draw=geomCzero] (r0) at (8.9,.15) {$\overline K_p$};
\node[font=\scriptsize,below=4pt] at (r0.south) {$C_0$};
\node[layer,draw=geomCone] (r1) at (10.9,.15) {$\overline K_N$};
\node[font=\scriptsize,below=4pt] at (r1.south) {$C_1$};
\node[layer,draw=geomCtwo] (r2) at (12.9,.15) {$\overline K_p$};
\node[font=\scriptsize,below=4pt] at (r2.south) {$C_2$};
\draw[cut] (l0)--node[lab]{$\mathbf1-M$}(l1);
\draw[cut] (l1)--node[lab]{$H$}(l2);
\draw[cut,red!75!black] (r0)--node[lab]{$-M$}(r1);
\draw[cut,black] (r1)--node[lab]{$H$}(r2);
\node[probe] (q0) at (1,1.15) {$Q_0:K_s$};
\node[probe] (q1) at (4,1.15) {$Q_1:K_s$};
\draw[complete] (q0)--(l0); \draw[complete] (q0)--(l1);
\draw[complete] (q1)--(l2);
\draw[-{Stealth[length=2mm]},line width=.9pt] (4.95,.5)--(7.82,.5);
\node[font=\scriptsize,align=center] at (6.4,1.12) {normalize, then\\interpolate to $s=-1$};
\node[font=\scriptsize,align=center] at (6.4,-.02) {divide by $((s-1)!!)^{2}$};
\end{tikzpicture}

\caption{Equal-length realization and clique-probe cancellation.
In (a), $a_i,b_i,c_i$ are the two tracks of $C_0,C_1,C_2$.
The ordinary pair uses offsets $(1,3),(2,4),(1,3)$ and interval length
$\Lambda=6$; only the relative endpoint comparisons matter in this
illustration. Each dashed probe interval is replaced by $s$ identical
copies. In (b), a line incident to a probe denotes all edges to the
indicated layer; the other connections have their displayed cut matrices.
After normalization and interpolation, the probes cancel every
intralayer edge, and $\mathbf1-M$ becomes $-M$.
Panel (a) precedes boundary deletion; (b) uses the retained boundary
layers of size $p$, with the boundary restrictions implicit in the cut labels.
The right-hand picture expresses the weighted-count identity at $s=-1$.}
\label{fig:unit-interval-construction}
\end{figure}

Apply the lemma with $b=h+1$ and
$\Psi(s)=\PM(G_s)/((s-1)!!)^{h+1}$.
In our construction, each layer belongs to exactly one probe
neighborhood. Thus all intralayer edges cancel at $s=-1$.
At the first cut of a pair, the matrix $\mathbf 1-M$ becomes
$-M$, while the second cut $H$ is unchanged. Every remaining
perfect matching has $p$ edges at each cut, so it has the same
sign $(-1)^{ph}$. Consequently
\begin{equation}\label{eq:sign}
 (-1)^{ph}\Psi(-1)
 = (\cA_{M_1}\cA_{H_1}\cdots\cA_{M_h}\cA_{H_h})[S,T],
\end{equation}
where the matrices on the right are the desired uncomplemented
paired cuts in~\eqref{eq:paired-value}.

\begin{proof}[Completion of the proof of Proposition~\ref{prop:oracle}]
For an instance with $h$ pairs, the original number of vertices is
$n_0=4ph$: the first and last layers each have $p$ vertices and
all $2h-1$ internal layers have $2p$ vertices. The $n_0/2+1$ queries
$s=0,2,\ldots,n_0$ determine $\Psi(-1)$ by interpolation, and
\eqref{eq:sign} recovers the desired paired-transfer value.
Each query graph has at most
\[
 n_0+(h+1)n_0=4ph(h+2)
\]
vertices, and the absolute values of all interval endpoints are
$O(Nh^2)$. Thus both the number and sizes of the queries are
polynomial in $N+h$. A query on $v$ vertices has at most $(v-1)!!$
perfect matchings, whose bit length is $O(v\log v)$; the
normalizing double factorials also have polynomial bit length.
The exact interpolation facts in Section~\ref{sec:prelims} therefore
give polynomial-time normalization and recovery of~\eqref{eq:sign}.
This proves the proposition.
\end{proof}

Every query in Proposition~\ref{prop:oracle} comes with the
equal-length interval representation constructed above, whose
endpoints are integers.

\begin{theorem}\label{thm:unitinterval}
$\UIPM$ is $\sharpP$-complete.
\end{theorem}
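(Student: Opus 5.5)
The proof chains reductions that are already established and then adds a membership argument.

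For hardness, we compose three results. Theorem~\ref{thm:word-hard} gives $\IS\le_T\Word\le_T\mathrm{PairEval}$, and Proposition~\ref{prop:oracle} gives $\mathrm{PairEval}\le_T\UIPM$. Polynomial-time Turing reductions compose, because each oracle query of an outer reduction is answered by running an inner reduction of polynomial size. Hence $\IS\le_T\UIPM$. Since $\IS$ is $\sharpP$-complete (Provan and Ball, as recalled in Section~\ref{sec:prelims}), every $f\in\sharpP$ satisfies $f\le_T\IS\le_T\UIPM$, so $\UIPM$ is $\sharpP$-hard.

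Two points need checking. First, every query produced in Proposition~\ref{prop:oracle} must be a legitimate input of $\UIPM$. Each query graph $G_s$ is unit interval: it comes with the explicit equal-length representation \eqref{eq:intervals}--\eqref{eq:probes}, all of length $\Lambda$, with integer endpoints of polynomial magnitude. We can therefore supply either the edge list or the representation, and Section~\ref{sec:prelims} notes that these input models are polynomial-time equivalent. Second, the sizes and numbers of queries must stay polynomial across the composition. The word lengths from Proposition~\ref{prop:source} and Lemmas~\ref{lem:spectral}, \ref{lem:diagonal}, \ref{lem:paired-transfer-interface} are polynomial, and the $h$ and $N$ values passed to Proposition~\ref{prop:oracle} are polynomial in the source graph size. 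All exact rational postprocessing was already shown to have polynomial bit complexity in each step.

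For membership, a certificate is a canonically encoded set of edges of $G$, for instance each pair listed as an ordered pair in increasing vertex order and the pairs sorted. We check in polynomial time that these are edges, pairwise disjoint, and covering all vertices. Each perfect matching then has exactly one accepted certificate, so $\UIPM$ counts certificates of a polynomial-time predicate and lies in $\sharpP$. If the input convention admits arbitrary graph encodings, recognition by Corneil et al.\ is run first and the output is $0$ on rejection, as stipulated in Section~\ref{sec:prelims}; this keeps the function in $\sharpP$.

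There is no real obstacle, since all substantive work lies in the earlier results. The step needing the most care is confirming that every intermediate object in the composition has polynomial size, including nested interpolations and scalar normalizations. Each is bounded explicitly in the cited lemmas, and a constant number of compositions preserves polynomiality.
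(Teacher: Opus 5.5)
Your proposal is correct and follows the paper's proof: hardness comes from composing Theorem~\ref{thm:word-hard} with Proposition~\ref{prop:oracle} to get $\IS\le_T\mathrm{PairEval}\le_T\UIPM$, and membership in $\sharpP$ comes from canonically encoded perfect matchings as polynomial-time verifiable certificates. Your extra checks on query validity, polynomial sizes, and the recognition convention are consistent with what the paper establishes in the cited lemmas and in Section~\ref{sec:prelims}.
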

\begin{proof}
Theorem~\ref{thm:word-hard} and Proposition~\ref{prop:oracle} give
\[
 \IS\le_T\mathrm{PairEval}\le_T\UIPM.
\]
A canonical list of the pairs in a perfect matching is a polynomial-time
verifiable certificate, giving membership in $\sharpP$.
\end{proof}
\begin{corollary}\label{cor:integer}
Given integers $x_v$ and a positive integer $r$ as part of the input,
computing
the number of perfect matchings in the graph
\[
 uv\in E\quad\Longleftrightarrow\quad |x_u-x_v|<r
 \qquad(u\ne v)
\]
is $\sharpP$-complete.
\end{corollary}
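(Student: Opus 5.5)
The plan is to reduce from $\UIPM$, which is $\sharpP$-complete by Theorem~\ref{thm:unitinterval}, using the integer model already described in Section~\ref{sec:prelims}. Membership in $\sharpP$ is immediate: a canonically encoded perfect matching is a certificate, checked by comparing $|x_u-x_v|$ with $r$ for each listed pair.

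For hardness, take a query graph produced by Proposition~\ref{prop:oracle}. It comes with an equal-length interval representation $I_v=[x_v,x_v+\Lambda]$ with integer $x_v$ and integer $\Lambda$. Here $x_v=j\Lambda+a^{(j)}_v$ for original vertices and $x_v=(2j+1)\Lambda$ for probe copies, with $\Lambda$ from~\eqref{eq:geometric-scale}. Two closed intervals of common length $\Lambda$ intersect exactly when $|x_u-x_v|\le\Lambda$. For integers this is equivalent to $|x_u-x_v|<r$ with $r=\Lambda+1$. Hence the graph defined in the corollary from the data $(x_v)_v$ and $r$ coincides with the query graph, including repeated coordinates for identical probe copies, which become adjacent since $0<r$.

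The reduction therefore replaces each $\UIPM$ oracle call in the proof of Proposition~\ref{prop:oracle} by a call on the instance $((x_v)_v,\Lambda+1)$. The coordinates are $O(Nh^2)$ in absolute value, so they have polynomial bit length. The chain $\IS\le_T\mathrm{PairEval}$ from Theorem~\ref{thm:word-hard}, composed with this modified oracle, gives $\sharpP$-hardness. Alternatively, one can reduce any $\UIPM$ instance directly: recover an integer representation in polynomial time as explained in Section~\ref{sec:prelims}, then apply the same translation.

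There is no real obstacle. The only point needing care is the strict-versus-weak inequality conversion, which requires integer coordinates and radius. It is also worth remarking that $r$ is part of the input, not a constant, which matches the statement.
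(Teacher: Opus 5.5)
Your proposal is correct and takes essentially the same approach as the paper. Both use the explicit integer equal-length representation of every oracle graph in Proposition~\ref{prop:oracle}, rewrite $|x_u-x_v|\le\Lambda$ as $|x_u-x_v|<\Lambda+1$, set $r=\Lambda+1$, and note that membership in $\sharpP$ is immediate.
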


\begin{proof}
Every oracle graph in Proposition~\ref{prop:oracle}, and hence
in the reduction proving Theorem~\ref{thm:unitinterval}, has an explicitly
constructed integer equal-length representation. In that model,
closed-interval intersection means $|x_u-x_v|\le\Lambda$.
This is equivalent to $|x_u-x_v|<\Lambda+1$. Set
$r=\Lambda+1$. Membership in $\sharpP$ is immediate.
\end{proof}

\section{Chordal permutation graphs}\label{sec:permutation}

We use the two-order convention for permutation diagrams from
Section~\ref{sec:prelims}. A private probe for each layer permits clique
cancellation while preserving both the permutation representation and
chordality.

\begin{theorem}\label{thm:chordal-permutation}
$\#\mathrm{PM}(\mathrm{ChordalPermutation})$ is $\sharpP$-complete.
\end{theorem}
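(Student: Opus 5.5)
Membership in $\sharpP$ is immediate, so the task is to prove $\mathrm{PairEval}\le_T\#\mathrm{PM}(\mathrm{ChordalPermutation})$ and then invoke Theorem~\ref{thm:word-hard}. The plan follows the unit interval construction: every original layer becomes a clique, and the cuts are realized with the complemented form $\mathbf 1-M$ or $\mathbf 1-H$ as the geometry dictates. Each layer $C_a$ gets a \emph{private} clique probe $Q_a$ of $s$ labeled copies whose neighborhood is exactly $C_a$. Lemma~\ref{lem:clique-probe} with $b=2h+1$ and $A_a=C_a$ then gives a polynomial $\Psi$ of degree at most $n_0/2$. At $s=-1$, every intralayer edge has weight $1-1=0$; no cross-layer edge is touched, because the probe neighborhoods are single layers. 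Complemented cuts become $-M$ or $-H$, and every surviving matching has $p$ edges at each cut, so a uniform sign $(-1)^{ph}$ or $(-1)^{2ph}$ appears, exactly as in~\eqref{eq:sign}. Interpolating from the even samples $s=0,2,\ldots,n_0$ then recovers~\eqref{eq:paired-value}.

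For the permutation diagram I would take the two lines with every layer a clique, so tracks appear in increasing order on one line and decreasing order on the other. Consecutive layers $a,a+1$ are arranged so that their relative block order is partially interleaved, as in~\eqref{eq:monotone-top-block}. Within a pair $(a_u,b_v)$, a segment pair crosses iff the orders disagree, and an appropriate interleaving gives the relation $\ind{u\le v}$ or its strict/complement variant. The special letters are handled by a single adjacent transposition, as in~\eqref{eq:monotone-swaps}. Nonconsecutive layers must appear in the same relative order on both lines, so that there are no edges between them. I expect cut $U$ to be realized either directly or as $\mathbf 1-L$, and I would choose per cut whichever form falls out. The probe $Q_a$ is a block of $s$ segments with internal orders reversed on the two lines, so it is a clique. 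On one line it sits adjacent to $C_a$ on the correct side, and on the other line it sits on the opposite side of $C_a$, so it crosses all of $C_a$. It must not cross anything else, so it has to sit in the gap between $C_a$ and its neighbors consistently on both lines. Checking that such a gap exists, given the interleaving of $C_a$ with $C_{a\pm1}$, is delicate. If needed, I would instead reuse the interval geometry: since $Q_a$ sees only $C_a$ and $C_a$ is a clique, the probe can be treated as an extension of that clique.

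Chordality I would prove by exhibiting a perfect elimination ordering via~\eqref{eq:chordal-peo}. Alternatively, I would give an interval representation: a probe for $C_a$ is a short interval inside the common intersection of the intervals of $C_a$, avoiding every other interval. Such an intersection point exists in the construction of Section~\ref{sec:unitinterval}, because within a layer all intervals share the point $(j+1)\Lambda$ up to offsets. One would need to perturb the layout so that this point meets only layer $j$. The simultaneous realization is the main obstacle: one graph must carry both an interval model and a permutation diagram, with the special-letter perturbations and the private probes. I would first try building the permutation diagram, then verifying chordality directly. The graph is a path of cliques joined by near-threshold (staircase) cuts, plus simplicial probes. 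Eliminating the probes first and then eliminating layer by layer, in track order, along the staircase should give a perfect elimination ordering. For the special transpositions I would check locally that no induced $C_4$ is created, possibly adjusting which cut is complemented in those cases. Finally, I would bound the query sizes by $4ph(2h+2)$ vertices and apply the exact interpolation bounds from Section~\ref{sec:prelims}.
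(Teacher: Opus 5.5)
There is a genuine gap in the cancellation step. With private probes, $A_a=C_a$, Lemma~\ref{lem:clique-probe} gives $w_B(uv)=\ind{uv\in E}$ for every pair $u,v$ lying in different layers. So at $s=-1$ only intralayer edges cancel, and a cut realized as $\mathbf 1-M$ stays $\mathbf 1-M$; it does \emph{not} become $-M$. Your own sentence ``no cross-layer edge is touched'' already contradicts ``complemented cuts become $-M$ or $-H$''. The sign identity \eqref{eq:sign} works in Section~\ref{sec:unitinterval} only because each probe $Q_j$ sees $C_{2j}\cup C_{2j+1}$, and therefore also subtracts $1$ across the first cut of its pair. Consequently you cannot let ``the geometry dictate'' complemented cuts. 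Nor can you complement a cut to repair chordality. Every $M_r$ and $H_r$ must be realized exactly, and then $\Psi(-1)=Z$ holds with no sign at all, as in the paper. (Also, $\mathbf 1-L$ is the strictly upper triangular matrix, not $U$.)

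The second gap is the diagram itself, which you leave as ``delicate''. The interval fallback gives at most chordality, not a permutation representation. The missing idea is to stagger the interleavings across the two lines, with every layer in decreasing track order on the upper line and increasing order on the lower line.
\begin{itemize}
\item On the upper line, interleave $C_a$ with $C_{a+1}$ for even $a$ as $v_{a+1,N-1},v_{a,N-1},\ldots,v_{a+1,0},v_{a,0}$. Since $C_a$ precedes $C_{a+1}$ on the lower line, this gives $U$.
\item On the lower line, interleave $C_a$ with $C_{a+1}$ for odd $a$ as $v_{a+1,0},v_{a,0},\ldots,v_{a+1,N-1},v_{a,N-1}$. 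Since $C_a$ precedes $C_{a+1}$ on the upper line, this gives $L$.
\end{itemize}
One adjacent transposition in the relevant block then produces each of $V_i,Z_i,V_i^{\mathsf T},Z_i^{\mathsf T}$.

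Because $C_a$ is interleaved with $C_{a+1}$ on one line and with $C_{a-1}$ on the other, the probe has room. For even $a$, put $q_a$ immediately before its upper two-layer block and immediately after its lower one; mirror this for odd $a$. The end layers $C_0,C_{2h}$ form singleton blocks. The two blocks of $q_a$ share only $C_a$, so $q_a$ crosses exactly $C_a$. The probes occur in the order $q_0,\ldots,q_{2h}$ on both lines, so they are pairwise nonadjacent. Expanding $q_a$ into $s$ consecutive labels with opposite internal orders gives the clique $Q_a$.

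With this diagram your elimination order works uniformly, with no local check for induced $C_4$'s. Every allowed cut matrix, including the special ones and their transposes, has row neighborhoods nested by inclusion, and this survives boundary deletion. Hence eliminating all probes first, then each layer from left to right in increasing inclusion order of its next-layer neighborhoods, is a perfect elimination ordering.
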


We construct an oracle for $\mathrm{PairEval}$. Fix $N=2p$, $h\ge1$, boundary
states $S,T\in\cS_{N,p}$, and pairs
$(M_1,H_1),\ldots,(M_h,H_h)$, where
\[
 M_r\in\{U,V_i,Z_i:0\le i\le N-2\},\qquad
 H_r\in\{L,V_i^{\mathsf T},Z_i^{\mathsf T}:0\le i\le N-2\}.
\]
The required value is
\begin{equation}\label{eq:permutation-target}
 Z=(\cA_{M_1}\cA_{H_1}\cdots\cA_{M_h}\cA_{H_h})[S,T].
\end{equation}
Use layers $C_a=\{v_{a,0},\ldots,v_{a,N-1}\}$, for $0\le a\le2h$.
The cuts from $C_{2r-2}$ to $C_{2r-1}$ and from $C_{2r-1}$ to
$C_{2r}$ have matrices $M_r$ and $H_r$, respectively, with no edges
between nonconsecutive layers. Retain only the tracks in $S$ in $C_0$
and only the tracks in $[N]\setminus T$ in $C_{2h}$. There are
$n_0=4ph$ retained vertices. When all layers are independent, the
perfect-matching count is~\eqref{eq:permutation-target}.

\subsection{A representation with one private probe per layer}

For the oracle construction, make every original layer a clique and
add a vertex $q_a$ whose original neighborhood is exactly $C_a$.
The vertices $q_0,\ldots,q_{2h}$ are pairwise nonadjacent. We first
give the orders before deleting the unwanted boundary vertices.

For ordinary cuts $(U,L)$, the upper order is obtained by concatenating,
for $a=0,2,\ldots,2h-2$, the blocks
\begin{equation}\label{eq:permutation-top-block}
 q_a,\ v_{a+1,N-1},v_{a,N-1},\
 v_{a+1,N-2},v_{a,N-2},\ldots,
 v_{a+1,0},v_{a,0},\ q_{a+1},
\end{equation}
and then appending
\begin{equation}\label{eq:permutation-top-end}
 q_{2h},\ v_{2h,N-1},\ldots,v_{2h,0}.
\end{equation}
The lower order starts with
\begin{equation}\label{eq:permutation-bottom-start}
 v_{0,0},\ldots,v_{0,N-1},\ q_0,
\end{equation}
and then concatenates, for $a=1,3,\ldots,2h-1$, the blocks
\begin{equation}\label{eq:permutation-bottom-block}
 q_a,\ v_{a+1,0},v_{a,0},\
 v_{a+1,1},v_{a,1},\ldots,
 v_{a+1,N-1},v_{a,N-1},\ q_{a+1}.
\end{equation}

Each special cut is realized by one adjacent transposition in its
corresponding block. The complete list is
\begin{equation}\label{eq:permutation-swaps}
\begin{array}{c|c|c}
 \text{cut and its left-layer index}&\text{order}&
       \text{labels to interchange}\\\hline
 V_i\quad(a\text{ even})&\text{upper}&v_{a,i+1},\ v_{a+1,i}\\
 Z_i\quad(a\text{ even})&\text{upper}&v_{a+1,i},\ v_{a,i}\\
 V_i^{\mathsf T}\quad(a\text{ odd})&\text{lower}&
       v_{a,i},\ v_{a+1,i+1}\\
 Z_i^{\mathsf T}\quad(a\text{ odd})&\text{lower}&
       v_{a+1,i},\ v_{a,i}.
\end{array}
\end{equation}
Different cuts use different blocks, so these modifications can be
performed independently.

\begin{lemma}\label{lem:permutation-private-probes}
The two orders above represent precisely the graph with clique layers,
cut matrices $M_r,H_r$, and
\[
 N(q_a)=C_a\quad(0\le a\le2h),\qquad
 E(C_a,C_b)=\varnothing\quad(|a-b|>1).
\]
\end{lemma}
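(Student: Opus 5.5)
The plan is to verify the lemma pair by pair, using the criterion from Section~\ref{sec:prelims}: two vertices are adjacent exactly when their relative orders on the upper and lower lines disagree. I would first treat ordinary cuts $(U,L)$ before boundary deletion. Then I would check that each transposition in~\eqref{eq:permutation-swaps} exchanges two labels that are adjacent in a single block, so it flips exactly one pairwise comparison. Deleting boundary vertices from both orders yields the induced subgraph, which is the claimed graph with the boundary restrictions applied.

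\emph{Within a layer.} Every layer $C_a$ appears with decreasing track order in the upper order: in block~\eqref{eq:permutation-top-block} for $a$ even or $a$ odd, and in~\eqref{eq:permutation-top-end} for $a=2h$. It appears with increasing track order in the lower order, in~\eqref{eq:permutation-bottom-start} or~\eqref{eq:permutation-bottom-block}. Every pair in a layer therefore crosses, so each layer is a clique.

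\emph{Cuts between consecutive layers.} For a first cut, take $a$ even. In the lower order all of $C_a$ precedes $C_{a+1}$: $C_a$ lies in~\eqref{eq:permutation-bottom-start} or in the block indexed by $a-1$, and $C_{a+1}$ lies in the block indexed by $a+1$. In the upper interleaving $v_{a+1,N-1},v_{a,N-1},\ldots$, the vertex $v_{a+1,v}$ precedes $v_{a,u}$ exactly when $v\ge u$. So $v_{a,u}$ and $v_{a+1,v}$ cross iff $u\le v$, which is $U$. For a second cut, take $a$ odd. In the upper order $C_a$ precedes $C_{a+1}$. In the lower interleaving $v_{a+1,0},v_{a,0},\ldots$, the vertex $v_{a+1,v}$ precedes $v_{a,u}$ iff $v\le u$, so the cut is $L$.

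Each transposition in~\eqref{eq:permutation-swaps} flips one such comparison: adding $(i+1,i)$ gives $V_i$, deleting $(i,i)$ gives $Z_i$, and the lower-order swaps give the transposes. Distinct cuts use distinct blocks, so these modifications are independent. For nonconsecutive layers $|a-b|>1$, both lines list $C_a$ entirely before $C_b$ whenever $a<b$, so there are no edges between them.

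\emph{Probes.} This is the step I expect to need the most care, since each $q_a$ sits at a block boundary and must be checked on both lines against every layer and every other probe. For $a$ even, $q_a$ opens block $a$ in the upper order, so it lies immediately before $C_a$ and $C_{a+1}$ and after all earlier layers. In the lower order it lies after $C_a$ (closing~\eqref{eq:permutation-bottom-start} or block $a-1$) and before $C_{a+1}$. Hence it disagrees only with $C_a$. For $a$ odd, $q_a$ closes upper block $a-1$, so it lies after $C_{a-1}$ and $C_a$ and before $C_{a+1}$. In the lower order it opens block $a$, so it lies before $C_a$ and $C_{a+1}$ and after $C_{a-1}$. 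Again it disagrees only with $C_a$. The case $q_{2h}$ is handled the same way using~\eqref{eq:permutation-top-end}. Finally, the probes appear in the order $q_0,q_1,\ldots,q_{2h}$ on both lines, so they are pairwise nonadjacent. The transpositions involve only layer vertices, so none of these probe comparisons change.
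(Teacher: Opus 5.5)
Your proposal is correct and follows essentially the same route as the paper's proof. Both arguments compare pairwise orders on the two lines, derive $U$ and $L$ from the interleaved blocks, note that each adjacent transposition flips a single cut comparison, and read off the probe neighborhoods from the probes' positions at block boundaries. Both also use the common order $q_0,\ldots,q_{2h}$ on both lines for probe independence and conclude that deletion from both orders yields an induced subgraph.
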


\begin{proof}
Within each layer, the upper track order is decreasing and the lower
track order is increasing. Hence each layer is a clique. Nonconsecutive
layers appear in the same relative order on both lines and have no
edges between them.

At an even cut, all lower endpoints of the earlier layer precede
those of the later layer. In~\eqref{eq:permutation-top-block}, the
upper endpoint of $v_{a,u}$ follows that of $v_{a+1,v}$ exactly when
$u\le v$. Thus this cut is $U$. At an odd cut, the earlier layer
precedes the later layer in the upper order, and
\eqref{eq:permutation-bottom-block} reverses the order of
$v_{a,u},v_{a+1,v}$ exactly when $u\ge v$. Thus this cut is $L$.
The adjacent transpositions in~\eqref{eq:permutation-swaps} change
only the specified pairwise order: the two $V_i$ modifications add
the required edge, and the two $Z_i$ modifications delete the required
diagonal edge. No within-layer order changes.

For even $a$, the probe $q_a$ precedes its upper two-layer block
and follows its lower two-layer block; these blocks share only layer
$C_a$. It therefore intersects all of $C_a$ and none of the adjacent
layers. For odd $a$, it follows the upper block and precedes the
lower block, with the same conclusion. The singleton blocks
\eqref{eq:permutation-top-end} and
\eqref{eq:permutation-bottom-start} give the assertion at the two
boundaries. Earlier and later blocks cannot create any additional
probe edges. Finally, the probes occur in the order
$q_0,q_1,\ldots,q_{2h}$ on both lines, so no two probes are adjacent.
Deleting a vertex from both orders takes an induced subgraph, proving
the boundary assertion.
\end{proof}

For a nonnegative even integer $s$, replace every probe $q_a$ by a
clique $Q_a$ of $s$ labeled vertices, each with the same original
neighborhood $C_a$. Different probe cliques have no edges between
them. In the representation, replace the probe label by $s$
consecutive labels, in opposite internal orders on the two lines.
All external relative orders are preserved, so the resulting graph
$G_s$ is a permutation graph. When $s=0$, the probe label is simply
deleted.

Figure~\ref{fig:permutation-construction} illustrates the endpoint
orders, a single local modification, and the cancellation used below.
\begin{figure}[htbp]
\centering
\begin{tikzpicture}[x=1cm,y=1cm,font=\small,
  endpoint/.style={circle,inner sep=1.25pt,fill=white,line width=.55pt},
  track/.style={line width=.65pt},
  probe/.style={draw=black!65,line width=.95pt,densely dashed}]
\definecolor{geomCzero}{HTML}{0072B2}
\definecolor{geomCone}{HTML}{D55E00}
\definecolor{geomCtwo}{HTML}{009E73}
\node[font=\small\bfseries] at (4.1,3.65) {(a) Clique layers and private probes};
\node[font=\small\bfseries] at (11.6,3.65) {One adjacent swap};
\draw[black!35] (.08,2.65)--(8.20,2.65);
\draw[black!35] (.08,.25)--(8.20,.25);
% Upper: q0,b1,a1,b0,a0,q1,q2,c1,c0.
% Lower: a0,a1,q0,q1,c0,b0,c1,b1,q2.
% Nonuniform lower coordinates preserve these orders and avoid triple crossings.
\foreach \name/\xt/\xb/\col in {
 a_0/4.22/.25/geomCzero,a_1/2.25/1.25/geomCzero,
 b_0/3.24/5.18/geomCone,b_1/1.25/7.19/geomCone,
 c_0/8.00/4.13/geomCtwo,c_1/7.04/6.23/geomCtwo}{
  \draw[track,draw=\col] (\xt,2.65)--(\xb,.25);
  \node[endpoint,draw=\col] at (\xt,2.65) {};
  \node[endpoint,draw=\col] at (\xb,.25) {};
  \node[above=4pt,text=\col] at (\xt,2.65) {$\name$};
  \node[below=4pt,text=\col] at (\xb,.25) {$\name$};
}
\foreach \name/\xt/\xb in {q_0/.27/2.21,q_1/5.15/3.15,q_2/6.10/7.99}{
  \draw[probe] (\xt,2.65)--(\xb,.25);
  \node[endpoint,draw=black!65] at (\xt,2.65) {};
  \node[endpoint,draw=black!65] at (\xb,.25) {};
  \node[above=4pt,text=black!75] at (\xt,2.65) {$\name$};
  \node[below=4pt,text=black!75] at (\xb,.25) {$\name$};
}
\node[font=\footnotesize,text=geomCzero] at (1.30,-.53) {$a_i\in C_0$};
\node[font=\footnotesize,text=geomCone] at (3.17,-.53) {$b_i\in C_1$};
\node[font=\footnotesize,text=geomCtwo] at (5.07,-.53) {$c_i\in C_2$};
\node[font=\footnotesize,text=black!75] at (7.13,-.53) {$N(q_j)=C_j$};
\draw[black!18] (8.65,-.70)--(8.65,3.40);
% The zoom retains the order of the only two transposed upper endpoints.
\node at (9.97,3.12) {$U$};
\node at (13.23,3.12) {$V_0$};
\foreach \l/\r in {9.22/10.72,12.48/13.98}{
  \draw[black!35] (\l,2.30)--(\r,2.30);
  \draw[black!35] (\l,.58)--(\r,.58);
}
\draw[track,draw=geomCzero] (9.35,2.30)--(9.35,.58);
\draw[track,draw=geomCone] (10.59,2.30)--(10.59,.58);
\draw[track,draw=geomCzero] (13.85,2.30)--(12.61,.58);
\draw[track,draw=geomCone] (12.61,2.30)--(13.85,.58);
\foreach \name/\xt/\xb/\col in {
 a_1/9.35/9.35/geomCzero,b_0/10.59/10.59/geomCone,
 a_1/13.85/12.61/geomCzero,b_0/12.61/13.85/geomCone}{
  \node[endpoint,draw=\col] at (\xt,2.30) {};
  \node[endpoint,draw=\col] at (\xb,.58) {};
  \node[above=4pt,text=\col] at (\xt,2.30) {$\name$};
  \node[below=4pt,text=\col] at (\xb,.58) {$\name$};
}
\draw[-{Latex[length=2mm]},black!70] (11.04,1.44)--(12.14,1.44);
\node[font=\footnotesize,align=center] at (11.60,-.47)
 {Only the edge $a_1b_0$ is added.};
\end{tikzpicture}

\par\medskip
\begin{tikzpicture}[x=1cm,y=1cm,font=\small,
      layer/.style={ellipse,draw,minimum width=1.03cm,minimum height=.65cm,fill=white},
      probe/.style={rectangle,rounded corners=2pt,draw=black!65,fill=black!4,
                    minimum width=.77cm,minimum height=.5cm,font=\scriptsize},
      cut/.style={line width=.8pt},
      complete/.style={draw=black!60,line width=1.2pt},
      lab/.style={font=\scriptsize,below=5pt,inner sep=1pt}]
\definecolor{geomCzero}{HTML}{0072B2}
\definecolor{geomCone}{HTML}{D55E00}
\definecolor{geomCtwo}{HTML}{009E73}
\node[anchor=west,font=\small\bfseries] at (-.6,2.2)
  {(b) A paired transfer and its probes ($h=1$)};
\node[font=\scriptsize] at (2,1.82) {unweighted query graph};
\node[font=\scriptsize] at (10.9,1.82) {equivalent weighted layer model};

\node[layer,draw=geomCzero] (l0) at (0,.15) {$K_p$};
\node[font=\scriptsize,below=4pt] at (l0.south) {$C_0$};
\node[layer,draw=geomCone] (l1) at (2,.15) {$K_N$};
\node[font=\scriptsize,below=4pt] at (l1.south) {$C_1$};
\node[layer,draw=geomCtwo] (l2) at (4,.15) {$K_p$};
\node[font=\scriptsize,below=4pt] at (l2.south) {$C_2$};
\node[layer,draw=geomCzero] (r0) at (8.9,.15) {$\overline K_p$};
\node[font=\scriptsize,below=4pt] at (r0.south) {$C_0$};
\node[layer,draw=geomCone] (r1) at (10.9,.15) {$\overline K_N$};
\node[font=\scriptsize,below=4pt] at (r1.south) {$C_1$};
\node[layer,draw=geomCtwo] (r2) at (12.9,.15) {$\overline K_p$};
\node[font=\scriptsize,below=4pt] at (r2.south) {$C_2$};
\draw[cut] (l0)--node[lab]{$M$}(l1);
\draw[cut] (l1)--node[lab]{$H$}(l2);
\draw[cut,black] (r0)--node[lab]{$M$}(r1);
\draw[cut,black] (r1)--node[lab]{$H$}(r2);
\node[probe] (q0) at (0,1.15) {$Q_0:K_s$};
\draw[complete] (q0)--(l0);
\node[probe] (q1) at (2,1.15) {$Q_1:K_s$};
\draw[complete] (q1)--(l1);
\node[probe] (q2) at (4,1.15) {$Q_2:K_s$};
\draw[complete] (q2)--(l2);
\draw[-{Stealth[length=2mm]},line width=.9pt] (4.95,.5)--(7.82,.5);
\node[font=\scriptsize,align=center] at (6.4,1.12) {normalize, then\\interpolate to $s=-1$};
\node[font=\scriptsize,align=center] at (6.4,-.02) {divide by $((s-1)!!)^{3}$};
\end{tikzpicture}

\caption{A private clique probe for each layer.
Panel (a) shows the ordinary $N=2,h=1$ permutation representation
before boundary deletion, with $a_i,b_i,c_i$ in $C_0,C_1,C_2$.
Dashed segments are probes. The inset interchanges the adjacent upper
endpoints $a_1,b_0$, changing $U$ to $V_0$ by adding just the edge
$a_1b_0$. Opposite internal endpoint orders expand each probe to $K_s$.
Panel (b) uses the retained boundary layers of size $p$, suppressing
the corresponding cut restrictions. It shows the normalized polynomial identity: at $s=-1$,
each probe cancels the edges inside its own layer, while both cuts
remain unchanged. Probe-to-layer connections denote complete adjacency;
the right-hand side is an equivalent weighted-count expression.}
\label{fig:permutation-construction}
\end{figure}

\begin{lemma}\label{lem:permutation-chordality}
Every graph $G_s$ is chordal.
\end{lemma}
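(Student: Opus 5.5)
We will prove chordality by exhibiting a perfect elimination ordering and invoking~\eqref{eq:chordal-peo}. The structure of $G_s$ is a path of clique layers $C_0,\ldots,C_{2h}$ with edges only between consecutive layers, plus probe cliques $Q_a$ with $N(Q_a)\setminus Q_a=C_a$. The probes are the easy part: a vertex $x\in Q_a$ has neighborhood $(Q_a\setminus\{x\})\cup C_a$. Since $Q_a\cup C_a$ is a clique, every probe vertex is simplicial. We therefore eliminate all probe vertices first, in any order. Deleting simplicial vertices never destroys simpliciality of later vertices, so it remains to find a perfect elimination ordering of the layer graph $G_0$ (restricted to the retained boundary tracks). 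Induced subgraphs of chordal graphs are chordal, so the boundary deletions will be harmless once the full layer graph before deletion is handled; alternatively, we simply restrict the ordering.

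For $G_0$ we eliminate layer by layer from the left: first all of $C_0$, then $C_1$, and so on. Within a layer we use an order adapted to the cut to the next layer. When $C_a$ is eliminated, the later neighbors of $v\in C_a$ are the rest of $C_a$ still present (a clique) together with $N(v)\cap C_{a+1}$ (inside a clique). The later neighborhood is a clique exactly when every remaining $u\in C_a$ is adjacent to all of $N(v)\cap C_{a+1}$. It therefore suffices to order $C_a$ so that the neighborhoods in $C_{a+1}$ are nested increasingly along the elimination order. This requires the cut matrix from $C_a$ to $C_{a+1}$ to have rows whose supports form a chain under inclusion.

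Next we check this chain property for each allowed cut. For $U$, row $u$ has support $\{j\ge u\}$, nested decreasingly in $u$, so we eliminate in decreasing track order. For $L$, the supports $\{j\le u\}$ call for increasing order. For $V_i=U+e_{i+1,i}$, row $i+1$ becomes $\{i\}\cup\{j\ge i+1\}=\{j\ge i\}$, which equals row $i$. The family is still a chain, with a tie, so the same order works. For $Z_i=U-e_{i,i}$, row $i$ becomes $\{j\ge i+1\}$, equal to row $i+1$, again a chain. Transposes behave symmetrically: $V_i^{\mathsf T}$ makes row $i$ of $L$ equal to $\{j\le i+1\}$, and $Z_i^{\mathsf T}$ makes row $i$ equal to $\{j\le i-1\}$. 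Both remain chains. The last layer $C_{2h}$ has no later layer, so it is a clique and any order works.

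The main obstacle is verifying the chain property carefully for each perturbed matrix, which the check above does, and confirming that the earlier-layer vertices are already gone. The latter holds because elimination proceeds left to right and $C_{a-1}$ precedes $C_a$. We will also double-check the direction convention of the cut matrices against Lemma~\ref{lem:permutation-private-probes}, where rows index the earlier layer. As an alternative, we could verify the absence of induced cycles of length at least four directly: such a cycle would need two vertices in one layer or span consecutive layers. The elimination-ordering route is cleaner, and we will use it.
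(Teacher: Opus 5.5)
Your proposal is correct and follows essentially the same route as the paper: eliminate the simplicial probe vertices first, then sweep the clique layers from left to right, ordering each layer by increasing inclusion of its next-layer neighborhoods. This works because every allowed cut matrix, including the perturbations $V_i$, $Z_i$ and their transposes, has row supports forming a chain, and this property survives the boundary restrictions. Your explicit check of the six matrix types matches the paper's argument.
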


\begin{proof}
By~\eqref{eq:chordal-peo}, it suffices to exhibit a perfect
elimination ordering. First eliminate all probe
vertices. At any such elimination, the remaining neighbors of a vertex
in $Q_a$ consist of the remaining vertices of $Q_a$ and the retained
vertices of $C_a$. They form a clique, because both parts are cliques
and all edges between them are present.

It remains to eliminate the original layers. Every allowed cut matrix
has row neighborhoods ordered by inclusion. For $U$ these are suffixes;
$V_i$ and $Z_i$ merely make two consecutive row thresholds equal.
The transposed matrices likewise have nested row neighborhoods. These
inclusions are preserved by deleting boundary rows or columns.

In the leftmost remaining layer, eliminate vertices in increasing
inclusion order of their neighborhoods in the next layer. When a
vertex $v$ is eliminated, its remaining neighbors are the other
vertices of its own layer and its neighbors in the next layer. Each
of these two sets is a clique. Every remaining vertex $u$ of the same
layer has a next-layer neighborhood containing that of $v$, so all
edges between the two sets are also present. Thus $v$ is simplicial.
In formulas, if $C$ is the current layer, $D$ the next layer, and
$C'$ the vertices of $C$ remaining after $v$, then
\[
 N_D(v)\subseteq N_D(u)\quad(u\in C'),\qquad
 N_{\rm rem}(v)=C'\cup N_D(v),
\]
so every pair in $N_{\rm rem}(v)$ is adjacent.
Repeating this operation through all layers gives the required
ordering.
\end{proof}

\subsection{Cancellation and the oracle reduction}

For nonnegative even $s$, put
\begin{equation}\label{eq:permutation-probe-polynomial}
 \Psi(s)=\frac{\PM(G_s)}{((s-1)!!)^{2h+1}},
 \qquad (-1)!!=1.
\end{equation}
Lemma~\ref{lem:clique-probe} shows that these are evaluations of a
polynomial of degree at most $n_0/2$. In particular, a probe clique
matched to $2a$ prescribed original vertices contributes, after
normalization,
\[
 \frac{(s)_{2a}(s-2a-1)!!}{(s-1)!!}
 =\prod_{b=0}^{a-1}(s-2b)\qquad(s\ge2a,\ s\text{ even}).
\]
Here $(s)_{2a}$ is a falling factorial. At $s=-1$, the polynomial
on the right has value $(-1)^a(2a-1)!!$, which adds edges of weight $-1$ on the probe's
original neighborhood.

In the present construction that neighborhood is one entire layer.
Consequently every intralayer edge receives an additional weight
$-1$ and cancels, while no cut edge changes. The remaining weighted
graph is exactly the independent-layer transfer graph, with all
remaining edge weights equal to one. Therefore
\begin{equation}\label{eq:permutation-cancellation}
 \Psi(-1)=Z.
\end{equation}
There is no sign factor in this identity.

The values $s=0,2,\ldots,n_0$ give $n_0/2+1=2ph+1$ samples, sufficient
to recover~\eqref{eq:permutation-cancellation} by interpolation.
All $2h+1$ probe classes use the same parameter. Each query has at most
\begin{equation}\label{eq:permutation-query-size}
 n_0+(2h+1)n_0=8ph(h+1)
\end{equation}
vertices, and its permutation representation is constructed explicitly
by the orders above. Integer ranks in the two orders supply endpoint
coordinates of magnitude at most the query size.

\begin{proof}[Proof of Theorem~\ref{thm:chordal-permutation}]
The construction and interpolation evaluate the paired-transfer value
in time polynomial in $N+h$, using only simple unweighted chordal
permutation graphs. To check the bit complexity, a graph on $v$ vertices
has at most $(v-1)!!$ perfect matchings, whose bit length is
$O(v\log v)$. The normalizing double factorials and the rational
coefficients for interpolation at $-1$ have polynomial bit length
as well. Thus all arithmetic can be performed exactly in polynomial
time.

We have proved $\mathrm{PairEval}\le_T
\#\mathrm{PM}(\mathrm{ChordalPermutation})$.
Together with Theorem~\ref{thm:word-hard}, this proves
$\sharpP$-hardness. A perfect matching has a canonical edge-set encoding
that can be verified in polynomial time, giving membership in
$\sharpP$. The permutation representation is supplied explicitly
by the construction.
\end{proof}

Since every chordal permutation graph is a permutation graph,
$\#\mathrm{PM}(\mathrm{Permutation})$ is also $\sharpP$-complete.

\section{Exact counting in quasi-chain graphs}\label{sec:qchains}

We use the definition of quasi-chain graphs introduced by
Dyer and M\"uller~\cite{DM19}. A \emph{chain graph} is a bipartite
graph whose neighborhoods in either part are linearly ordered by
inclusion. Let $\mathsf{Chains}$ denote the class of graphs whose
connected components are chain graphs. For a vertex bipartition
$V(G)=L\mathbin{\dot\cup}R$, write $G[L:R]$ for the spanning
bipartite graph containing exactly the edges of $G$ between $L$ and $R$.
Then
\[
 G\in\mathsf{QChains}
 \quad\Longleftrightarrow\quad
 G[L:R]\in\mathsf{Chains}
 \text{ for every vertex bipartition }L,R.
\]
Thus the underlying bipartite class is closed under disjoint union;
it is not the class consisting of single chain graphs.

A graph on five vertices is a \emph{pre-$P_5$} if some bipartition
of its vertices has cut graph $P_5$. Equivalently, after labeling
the path $1,2,3,4,5$ in order, any subset of the edges
$13,15,35,24$ may be added, while $14$ and $25$ must remain absent.
Dyer and M\"uller~\cite[Section~3.4]{DM19} show that
$\mathsf{QChains}$ is precisely the class of graphs containing no
induced pre-$P_5$.

\begin{lemma}\label{lem:qchains-dh}
$\mathsf{QChains}\subsetneq\mathsf{DH}$.
\end{lemma}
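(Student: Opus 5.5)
The plan is to combine the forbidden-subgraph description of $\mathsf{QChains}$ quoted above (a graph is in $\mathsf{QChains}$ exactly when it has no induced pre-$P_5$) with the Bandelt--Mulder characterization~\cite{BandeltMulder86}: a graph is distance-hereditary if and only if it contains no induced house, hole (induced cycle of length at least five), domino, or gem. It then suffices to show that each of these four obstructions contains an induced pre-$P_5$. If so, a $\mathsf{QChains}$ graph, having no induced pre-$P_5$, contains none of the obstructions and is therefore in $\mathsf{DH}$.

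To find a pre-$P_5$ I will use the edge-list form stated before the lemma. I need five vertices ordered $1,\dots,5$ with $12,23,34,45$ edges and $14,25$ nonedges; all other pairs are arbitrary.
\begin{itemize}
\item \emph{Holes.} For a hole $C_n$, take five consecutive vertices in cyclic order. When $n\ge6$ this induces $P_5$ itself. When $n=5$ it is $P_5$ plus the allowed edge $15$.
\item \emph{House.} Label the house as the square $a,b,c,d$ (edges $ab,bc,cd,da$) with a roof vertex $e$ adjacent to $a,b$. The order $a,e,b,c,d$ works: the path edges are present, and $ac$ and $ed$ are nonedges.
\item \emph{Gem.} Label the gem as the path $a,b,c,d$ with a universal vertex $e$. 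The order $a,b,e,c,d$ works, since $ac$ and $bd$ are nonedges.
\item \emph{Domino.} Label the domino with top row $a_1a_2a_3$, bottom row $b_1b_2b_3$, and rungs $a_ib_i$. Delete $b_2$. The order $b_1,a_1,a_2,a_3,b_3$ has all path edges, and $b_1a_3$ and $a_1b_3$ are nonedges.
\end{itemize}
Each of these is an induced five-vertex subgraph, because an induced subgraph of an induced subgraph is induced. This completes the inclusion.

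For strictness, $P_5$ itself is a tree. Trees are distance-hereditary: in any connected induced subgraph, the unique path between two vertices is preserved. On the other hand, $P_5$ is trivially a pre-$P_5$: take the bipartition into alternate vertices of the path, whose cut graph is $P_5$. Hence $P_5\in\mathsf{DH}\setminus\mathsf{QChains}$.

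The only step needing care is the case analysis, in particular choosing the five-vertex subsets of the domino and house that avoid the forbidden pairs $14,25$. I expect this to be routine once the correct orderings above are checked against the edge lists.
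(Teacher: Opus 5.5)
Your proof is correct and follows essentially the paper's argument: both combine the pre-$P_5$ characterization of $\mathsf{QChains}$ with the Bandelt--Mulder obstructions, find an induced pre-$P_5$ in each obstruction, and use $P_5$ for strictness. The only cosmetic difference is that the paper builds $C_5$, the house, and the gem by adding optional edges inside $\{1,3,5\}$ to the path template, whereas you check explicit vertex orderings on labeled copies; your orderings for the house, gem, domino, and holes are all valid.
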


\begin{proof}
The forbidden induced subgraphs for distance-hereditary graphs,
as defined in~\eqref{eq:distance-hereditary-definition}, are the
house, gem, domino, and chordless cycles of length at least
five~\cite{BandeltMulder86}.
We verify that a graph in $\mathsf{QChains}$ excludes each of them.

Start with the path $1-2-3-4-5$. Adding the edge $15$ gives $C_5$.
Adding $13,15$ gives a house: its quadrangle is $1-3-4-5-1$ and
its roof is vertex $2$. Adding $13,15,35$ gives a gem: vertex $3$
is universal and the other vertices induce the path $2-1-5-4$.
All the added edges lie within $\{1,3,5\}$, so each of these
graphs, as well as $P_5$ itself, is a pre-$P_5$.

Every chordless cycle of length at least six contains an induced
$P_5$. To handle the domino, write its two quadrangles as
$a-b-c-d-a$ and $c-e-f-d-c$, sharing the edge $cd$.
The vertices $b,a,d,f,e$, in that order, induce a $P_5$.
Thus a graph containing no induced pre-$P_5$ excludes every
distance-hereditary obstruction. This proves the inclusion.
Finally, $P_5$ is a tree and hence is distance-hereditary, but it
does not belong to $\mathsf{QChains}$.
\end{proof}

The common pre-$P_5$ obstruction behind this argument is shown
in Figure~\ref{fig:qchains-obstructions}.
\begin{figure}[htbp]
\centering
\begin{tikzpicture}[x=1cm,y=1cm,font=\small,
  qp/.style={circle,draw=black,fill=blue!14,inner sep=0pt,minimum size=5mm,font=\scriptsize},
  qe/.style={qp,fill=orange!20},
  qg/.style={qp,fill=black!8,draw=black!45,text=black!55},
  path edge/.style={draw=blue!65!black,line width=1.1pt},
  added edge/.style={draw=orange!80!black,dashed,line width=.8pt},
  other edge/.style={draw=black!35,line width=.7pt}]
\path[use as bounding box] (-2.3,-5.55) rectangle (12.55,1.8);
% The template records precisely the six forced and four optional pairs.
\begin{scope}
  \node at (0,1.5) {Pre-$P_5$ template};
  \coordinate (a1) at (0,.85); \coordinate (a2) at (1,.15);
  \coordinate (a3) at (.62,-1); \coordinate (a4) at (-.62,-1);
  \coordinate (a5) at (-1,.15);
  \draw[path edge] (a1)--(a2)--(a3)--(a4)--(a5);
  \draw[added edge] (a1)--(a3) (a1)--(a5) (a3)--(a5) (a2)--(a4);
  \foreach \i in {1,3,5} \node[qp] at (a\i) {\i};
  \foreach \i in {2,4} \node[qe] at (a\i) {\i};
  \node[font=\scriptsize] at (0,-1.52) {$14,25$ must be absent};
\end{scope}
\begin{scope}[xshift=5.05cm]
  \node at (0,1.5) {House};
  \coordinate (b1) at (-.85,0); \coordinate (b2) at (0,.95);
  \coordinate (b3) at (.85,0); \coordinate (b4) at (.85,-1);
  \coordinate (b5) at (-.85,-1);
  \draw[path edge] (b1)--(b2)--(b3)--(b4)--(b5);
  \draw[added edge] (b1)--(b3) (b1)--(b5);
  \foreach \i in {1,3,5} \node[qp] at (b\i) {\i};
  \foreach \i in {2,4} \node[qe] at (b\i) {\i};
  \node[font=\scriptsize] at (0,-1.52) {Add $13,15$};
\end{scope}
\begin{scope}[xshift=10.1cm]
  \node at (0,1.5) {Gem};
  \coordinate (c1) at (-.45,-.8); \coordinate (c2) at (-1.35,-.8);
  \coordinate (c3) at (0,.95); \coordinate (c4) at (1.35,-.8);
  \coordinate (c5) at (.45,-.8);
  \draw[path edge] (c1)--(c2)--(c3)--(c4)--(c5);
  \draw[added edge] (c1)--(c3) (c1)--(c5) (c3)--(c5);
  \foreach \i in {1,3,5} \node[qp] at (c\i) {\i};
  \foreach \i in {2,4} \node[qe] at (c\i) {\i};
  \node[font=\scriptsize] at (0,-1.52) {Add $13,15,35$};
\end{scope}
\begin{scope}[yshift=-3.5cm]
  \node at (0,1.45) {$C_5$};
  \coordinate (d1) at (0,.8); \coordinate (d2) at (1,.1);
  \coordinate (d3) at (.62,-1.05); \coordinate (d4) at (-.62,-1.05);
  \coordinate (d5) at (-1,.1);
  \draw[path edge] (d1)--(d2)--(d3)--(d4)--(d5);
  \draw[added edge] (d1)--(d5);
  \foreach \i in {1,3,5} \node[qp] at (d\i) {\i};
  \foreach \i in {2,4} \node[qe] at (d\i) {\i};
  \node[font=\scriptsize] at (0,-1.57) {Add $15$};
\end{scope}
\begin{scope}[xshift=5.05cm,yshift=-3.5cm]
  \node at (0,1.45) {Domino};
  \coordinate (ea) at (-1,.65); \coordinate (eb) at (-1,-.75);
  \coordinate (ec) at (0,-.75); \coordinate (ed) at (0,.65);
  \coordinate (ee) at (1,-.75); \coordinate (ef) at (1,.65);
  \draw[other edge] (eb)--(ec)--(ed) (ec)--(ee);
  \draw[path edge] (eb)--(ea)--(ed)--(ef)--(ee);
  \node[qp] at (eb) {$b$}; \node[qe] at (ea) {$a$};
  \node[qp] at (ed) {$d$}; \node[qe] at (ef) {$f$};
  \node[qp] at (ee) {$e$}; \node[qg] at (ec) {$c$};
  \node[font=\scriptsize] at (0,-1.57) {Induced path $b-a-d-f-e$};
\end{scope}
\begin{scope}[xshift=10.1cm,yshift=-3.5cm]
  \node at (0,1.45) {Long holes: $C_6$ shown};
  \coordinate (f1) at (-1,.4); \coordinate (f2) at (0,.95);
  \coordinate (f3) at (1,.4); \coordinate (f4) at (1,-.7);
  \coordinate (f5) at (0,-1.25); \coordinate (f6) at (-1,-.7);
  \draw[other edge] (f5)--(f6)--(f1);
  \draw[path edge] (f1)--(f2)--(f3)--(f4)--(f5);
  \foreach \i in {1,3,5} \node[qp] at (f\i) {\i};
  \foreach \i in {2,4} \node[qe] at (f\i) {\i};
  \node[qg] at (f6) {6};
  \node[font=\scriptsize] at (0,-1.77) {Five consecutive vertices induce $P_5$};
\end{scope}
\end{tikzpicture}
\caption{Why quasi-chain graphs are distance-hereditary.
In the template, blue solid edges are mandatory and orange dashed
edges are optional; the vertex fills give the cut with parts
$\{1,3,5\}$ and $\{2,4\}$. In the house, gem, and $C_5$ panels,
every displayed edge is present, including the dashed ones, and
the same cut is $P_5$. The blue paths in the last two panels are
induced $P_5$'s; every chordless cycle of length at least six has
such a path. Thus every distance-hereditary obstruction contains
an induced pre-$P_5$.}
\label{fig:qchains-obstructions}
\end{figure}

Lemma~\ref{lem:qchains-dh} already gives a polynomial-time counting
algorithm through established clique-width results.
Golumbic and Rotics~\cite{GolumbicRotics00} show that every
distance-hereditary graph has clique-width at most three and that
a corresponding $3$-expression can be constructed in linear time.
Makowsky, Rotics, Averbouch, and Godlin~\cite{MRAG06} give a
polynomial-time algorithm for the matching polynomial on graphs
of any fixed clique-width. In its generating form
$\sum_k m_k(G)z^k$, where $m_k(G)$ counts matchings of size $k$,
the coefficient of $z^{|V(G)|/2}$ is $\PM(G)$ when $|V(G)|$ is even.
Curticapean and Marx~\cite[Theorem~1.3]{CurticapeanMarx16} give the
more specific bound $O(n^{k+1})$ for counting perfect matchings
from a $k$-expression. With $k=3$, their theorem already yields
an $O(n^4)$ bound for distance-hereditary graphs.
The direct algorithm below uses $O(n^2)$ arithmetic operations
on integers of $O(n\log n)$ bits.

\begin{theorem}\label{thm:qchains-fp}
For $G\in\mathsf{DH}$ with $n$ vertices and $m$ edges,
$\PM(G)$ can be computed using $O(n^2)$ arithmetic operations on
integers of $O(n\log n)$ bits, after $O(n+m)$ graph processing.
\end{theorem}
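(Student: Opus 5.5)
The plan is to use the pendant/twin characterization of distance-hereditary graphs due to Bandelt and Mulder~\cite{BandeltMulder86}. Every connected graph in $\mathsf{DH}$ with at least two vertices can be reduced to a single vertex by repeatedly deleting a pendant vertex, a false twin, or a true twin. Equivalently, it has a \emph{pruning sequence}, which can be computed in $O(n+m)$ time by the known linear-time recognition algorithms. Disconnected inputs are handled componentwise: $\PM$ is multiplicative over components. Read in reverse, the pruning sequence builds $G$ from one vertex by adding pendants and twins. I will recast it as a bottom-up merge tree of \emph{bags}. Each bag $X$ is an induced subgraph with a distinguished \emph{active} set $A(X)\subseteq X$. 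Every vertex of $X$ that has a neighbor outside $X$ is active, and all active vertices share the same neighborhood outside $X$.

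The dynamic-programming state of a bag is the vector $c_X(j)$, for $0\le j\le|A(X)|$. The entry $c_X(j)$ is the number of matchings of $G[X]$ that cover every inactive vertex and leave exactly $j$ active vertices unmatched. These vertices will later be matched outside $X$; because they are interchangeable externally, only their number matters. Leaves are single vertices with $c=(1,1)$. The merge operations mirror the decomposition.
\begin{enumerate}
\item \emph{False-twin union}: the active sets are joined with no edges between the bags. The new vector is the convolution $c(j)=\sum_{j_1+j_2=j}c_{X}(j_1)c_{Y}(j_2)$.
\item \emph{True-twin union}: the active sets are additionally completely joined. Choosing $t$ of the $j_1$ and $t$ of the $j_2$ unmatched vertices to pair across gives
\[
c(j)=\sum_{j_1,j_2,t}c_X(j_1)c_Y(j_2)\binom{j_1}{t}\binom{j_2}{t}t!,\qquad j=j_1+j_2-2t.
\]
\item \emph{Pendant-type attachment}: the active set of $Y$ becomes completely joined to that of $X$, and then the side of $X$ (or of $Y$) becomes inactive. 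Inactive vertices must now be matched, so the remaining free vertices are projected onto the surviving side in the same way, retaining only the states in which the inactivated side has no unmatched vertex.
\end{enumerate}
The answer is $c_{\text{root}}(0)$.

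For the operation count, a merge of bags of sizes $a,b$ has $O(ab)$ pairs $(j_1,j_2)$. A naive summation over $t$ would cost an extra factor, so this is where I expect the main obstacle. My first attempt is to use the identity $\binom{j_1}{t}\binom{j_2}{t}t!=\frac{j_1!\,j_2!}{(j_1-t)!\,(j_2-t)!\,t!}$. After rescaling by $c'_X(j)=c_X(j)/j!$ and indexing by $u=j_1-t$, $v=j_2-t$, the sum becomes
\[
c'(j)\,\propto\,\sum_{u+v+2t=j}\frac{1}{t!}\Bigl(\sum_{j_1\ge u}c_X(j_1)\tfrac{j_1!}{u!}\cdots\Bigr).
\]
I would instead avoid this by processing the join as one vertex at a time. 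That gives a recurrence $c(j)\leftarrow c(j)+(j+1)\cdot(\text{shifted }c)$ whose cost is $O(|\text{current}|)$ per added vertex of $Y$, hence $O(ab)$ per merge. Integer-only arithmetic follows from choosing this incremental form, which avoids division.

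Charging each merge cost $O(ab)$ to the $ab$ pairs of original vertices first brought together gives the standard $\sum ab\le\binom n2$ bound, that is, $O(n^2)$ arithmetic operations in total. All intermediate values count matchings of subgraphs, so they are at most $n!$ and have $O(n\log n)$ bits. A secondary check is needed that the pendant/twin tree really does yield bags satisfying the common-external-neighborhood invariant. For this I would use the split-decomposition view, in which the tree is rooted and each node's bag corresponds to a split side, and then verify the invariant inductively.
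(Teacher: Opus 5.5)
\textbf{The gap is the true-twin merge.} Your outline matches the paper's proof except at the step you flag yourself. The pruning sequence, the bags whose active vertices share their outside neighbourhood, the state counting uncovered active vertices, the three merge formulas and the charging bound $\sum ab\le\binom n2$ are all as in the paper. The problem is doing the true-twin merge in $O(ab)$ integer operations, which is the entire content of the $O(n^2)$ claim. Evaluating the triple sum directly costs $O(ab\min(a,b))$ per merge, hence $O(n^3)$ overall. Your one-vertex-at-a-time recurrence does not achieve $O(ab)$ correctly, for two reasons.
\begin{itemize}
\item You cannot add the actual vertices of $Y$, because $c_Y$ has forgotten the internal structure of $Y$.
\item Suppose instead you add $j$ abstract uncovered $Y$-vertices to $F$ one at a time with a single counter: each new vertex stays uncovered or is matched to a currently uncovered vertex, i.e.\ $P\mapsto xP+P'$. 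Then later vertices can be matched to earlier ones, although the merge creates no edges among uncovered vertices of $Y$. You compute $(x+\partial)^jF$ instead of $F\star x^j$. For example, with $F=x$ and $G=x^2$ the correct result is $x^3+2x$, but $(x+\partial)^2x=x^3+3x$.
\end{itemize}
Keeping a separate counter for the $Y$ side removes this error but loses the $O(ab)$ bound.

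\textbf{What the paper does instead.} It uses linearity plus a corrected three-term recurrence. Write $F\star G=\sum_j g_jQ_j$, where $Q_j=F\star x^j$ is the state of $X$ joined to $j$ pairwise nonadjacent new vertices. Then $Q_0=F$, $Q_1=xF+F'$, and $Q_{j+1}=xQ_j+Q_j'-jQ_{j-1}$; the last term removes exactly the matchings of the new vertex with one of the $j$ earlier ones. Two further details are then needed.
\begin{itemize}
\item Since $F\star G=G\star F$, you must iterate over the smaller bag, so the cost is $(b+1)(a+b+1)=O(ab)$ with $a\ge b$. If the iterated bag is the larger one, a merge costs $\Theta(b^2)$. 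On $K_n$ with a pruning order that repeatedly merges the growing bag into a singleton, the total is then $\Theta(n^3)$.
\item Because the recurrence subtracts, ``every intermediate value counts matchings'' is no longer automatic. The paper notes that $[x^k]Q_j$ counts matchings in an auxiliary graph on $a+j\le n$ vertices, which keeps all intermediate values at $n^{O(n)}$.
\end{itemize}

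\textbf{Smaller points.} Your leaf initialization is wrong. By your own definition a single active vertex has $c=(0,1)$, not $(1,1)$. With $(1,1)$ the final $c(0)$ counts all matchings rather than perfect ones.

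The invariant check also needs no split decomposition. The paper takes twins and pendants in the current representative graph $H$. It maintains that the edges between two bags are either all pairs of their active sets or none. It then checks directly that twin merges (take the union of active sets) and pendant merges (keep the neighbour's active set) preserve this.
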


\subsection{Pruning and the boundary of a bag}

Two vertices are \emph{twins} if they have the same neighborhood
outside the pair. They are \emph{true twins} when adjacent and
\emph{false twins} otherwise. A pendant vertex has degree one.
The pruning characterization of distance-hereditary graphs states
that every connected induced subgraph with at least two vertices
has a pendant vertex or a pair of twins~\cite{BandeltMulder86}.
Consequently, repeated pendant or twin deletions reduce each
connected component to one vertex.
A pruning sequence records each deleted vertex, its surviving
pendant neighbor or twin, and the type of the deletion.
Such a sequence can be constructed in $O(n+m)$ time and space
by Uehara and Uno~\cite[Theorem~21]{UeharaUno06}.
We process the sequence in deletion order and finalize the last
representative of each component as an isolated vertex.

The algorithm maintains an induced subgraph $H$ of $G$ on the
surviving representatives. The bag $X_x$ is the set of all original
vertices currently represented by $x$: it consists of $x$ and the
vertices in all bags previously absorbed into it.
Initially $H=G$ and $X_x=\{x\}$.
For a pendant or twin deletion of $v$, let $u$ be its surviving
neighbor or twin, respectively. Update
\begin{equation}\label{eq:qchains-bag-merge}
 X_u^{\mathrm{new}}=X_u^{\mathrm{old}}\cup X_v^{\mathrm{old}},
 \qquad H^{\mathrm{new}}=H^{\mathrm{old}}-v,
\end{equation}
leaving all other bags unchanged. The deleted representative's
entire bag is now represented by $u$, and its matching information
is incorporated into the state array of $u$. The original graph
$G$ remains the graph whose perfect matchings are counted.
The bags partition the original vertices not yet finalized;
an isolated representative's whole bag is finalized by the rule
below.

Each bag has a nonempty active set $T_x\subseteq X_x$, initially
$T_x=\{x\}$.
The structural invariant is that, for distinct representatives
$x,y$, the original edges between their bags are exactly
\begin{equation}\label{eq:qchains-structure}
 E_G(X_x,X_y)=
 \begin{cases}
   \{ab:a\in T_x,\ b\in T_y\},&xy\in E(H),\\
   \varnothing,&xy\notin E(H).
 \end{cases}
\end{equation}
In particular, vertices in $X_x\setminus T_x$ have no neighbors
outside $X_x$.

For a matching $M$ in $G[X_x]$, let $U_x(M)=X_x\setminus V(M)$
be its uncovered vertices. Store
\begin{equation}\label{eq:qchains-state}
 f_x(k)=\#\{M\text{ a matching in }G[X_x]:
           U_x(M)\subseteq T_x,\ |U_x(M)|=k\}.
\end{equation}
Uncovered active vertices may be matched outside the bag later.
Only their number must be recorded, because they all have the
same neighbors outside the bag. Initially $f_x(k)=\mathbf1_{k=1}$.

Figure~\ref{fig:qchains-merges} illustrates the boundary state and
the choices counted in the three updates below.
\begin{figure}[htbp]
\centering
\begin{tikzpicture}[x=1cm,y=1cm,font=\small,
  free/.style={circle,draw=blue!70!black,fill=white,line width=.8pt,inner sep=0pt,minimum size=2.8mm},
  used/.style={free,fill=blue!65!black},
  interior/.style={circle,draw=black!55,fill=black!35,inner sep=0pt,minimum size=2.8mm},
  bag/.style={draw=black!45,rounded corners=6pt},
  active/.style={draw=blue!65!black,dashed,rounded corners=5pt},
  possible/.style={draw=black!23,line width=.45pt},
  matching/.style={draw=blue!65!black,line width=1.15pt}]
\path[use as bounding box] (-.3,-5.55) rectangle (14.8,2.0);
% One bag explains the state before the three merges are illustrated.
\draw[bag] (0,0) rectangle (4,1.8);
\node[anchor=west] at (.1,1.53) {$X_x$};
\draw[active] (2.75,.2) rectangle (3.7,1.6);
\node[font=\scriptsize] at (3.24,1.38) {$T_x$};
\coordinate (a1) at (3.22,1.07); \coordinate (a2) at (3.22,.73);
\coordinate (a3) at (3.22,.39);
\coordinate (n1) at (.6,.52); \coordinate (n2) at (1.35,.52);
\coordinate (n3) at (2.15,.39);
\coordinate (out) at (5.25,.9);
\draw[possible] (a1)--(out) (a2)--(out) (a3)--(out);
\draw[matching] (n1)--(n2) (n3)--(a3);
\node[free] at (a1) {}; \node[free] at (a2) {};
\node[used] at (a3) {};
\foreach \i in {1,2,3} \node[interior] at (n\i) {};
\node[circle,draw=black!50,fill=black!8,inner sep=1.7pt,font=\scriptsize] at (out) {$z$};
\node[font=\scriptsize,align=center] at (5.23,.23) {outside\\the bag};
\node[font=\scriptsize,anchor=north] at (2,-.15) {Example: a matching counted by $f_x(2)$};
\node[anchor=west,align=left,text width=7.75cm,font=\small] at (6.5,.97)
 {The state $f_x(k)$ records $k$ uncovered active vertices.\\
 All vertices in $X_x\setminus T_x$ are covered.\\
 Any outside vertex sees all of $T_x$ or none of it.};
\draw[black!18] (0,-.65)--(14.55,-.65);
% The remaining panels show only the uncovered active vertices.
\foreach \off/\kind in {0/{False twins},5.1/{True twins},10.2/{Pendant $v$ in $H$}} {
  \begin{scope}[xshift=\off cm,yshift=-2.4cm]
    \node at (2.1,1.3) {\kind};
    \draw[bag] (.25,-.75) rectangle (1.65,.85);
    \draw[bag] (2.55,-.75) rectangle (3.95,.85);
    \node[font=\scriptsize] at (.95,.62) {$X_u$};
    \node[font=\scriptsize] at (3.25,.62) {$X_v$};
    \node[font=\scriptsize] at (.95,-1) {$i=3$};
    \node[font=\scriptsize] at (3.25,-1) {$j=2$};
  \end{scope}
}
\begin{scope}[yshift=-2.4cm]
  \foreach \y in {.32,-.08,-.48} \node[free] at (1.25,\y) {};
  \foreach \y in {.22,-.38} \node[free] at (2.95,\y) {};
  \node[font=\scriptsize,align=center] at (2.1,-1.48) {No cross-bag edges};
  \node at (2.1,-1.95) {$k=i+j=5$};
  \node[font=\scriptsize] at (2.1,-2.43) {New active set: $T_u\cup T_v$};
  \node[font=\scriptsize] at (2.1,-2.89) {Multiplicity: $1$};
\end{scope}
\begin{scope}[xshift=5.1cm,yshift=-2.4cm]
  \foreach \y in {.32,-.08,-.48} {
    \foreach \z in {.22,-.38} \draw[possible] (1.25,\y)--(2.95,\z);
  }
  \draw[matching] (1.25,.32)--(2.95,.22);
  \node[used] at (1.25,.32) {}; \node[used] at (2.95,.22) {};
  \foreach \y in {-.08,-.48} \node[free] at (1.25,\y) {};
  \node[free] at (2.95,-.38) {};
  \node[font=\scriptsize] at (2.1,-1.48) {Choose $r=1$ cross-bag edge};
  \node at (2.1,-1.95) {$k=i+j-2r=3$};
  \node[font=\scriptsize] at (2.1,-2.43) {New active set: $T_u\cup T_v$};
  \node[font=\scriptsize] at (2.1,-2.89) {Multiplicity: $\binom{i}{r}\binom{j}{r}r!$};
\end{scope}
\begin{scope}[xshift=10.2cm,yshift=-2.4cm]
  \foreach \y in {.32,-.08,-.48} {
    \foreach \z in {.22,-.38} \draw[possible] (1.25,\y)--(2.95,\z);
  }
  \draw[matching] (1.25,.32)--(2.95,.22) (1.25,-.48)--(2.95,-.38);
  \foreach \y in {.32,-.48} \node[used] at (1.25,\y) {};
  \foreach \y in {.22,-.38} \node[used] at (2.95,\y) {};
  \node[free] at (1.25,-.08) {};
  \node[font=\scriptsize] at (2.1,-1.48) {Cover all $j$ vertices on the $v$ side};
  \node at (2.1,-1.95) {$k=i-j=1$};
  \node[font=\scriptsize] at (2.1,-2.43) {New active set: $T_u$; $T_v$ is closed};
  \node[font=\scriptsize] at (2.1,-2.89) {Multiplicity: $(i)_j$};
\end{scope}
\end{tikzpicture}
\caption{The boundary state and the three bag merges.
Hollow circles are uncovered active vertices; blue solid edges
are chosen matching edges. The merge panels show only vertices
uncovered in the two old states, and gray lines show all possible
cross-bag edges. Active vertices share their neighborhood
outside their bag; they need not be twins inside it.
The pendant relation is in the current representative graph $H$.
In that case the whole $v$ side loses its external boundary and
must be covered during the merge.}
\label{fig:qchains-merges}
\end{figure}

\subsection{The three merge operations}

Each merge uses~\eqref{eq:qchains-bag-merge}; the three types
differ in their active-set and state-array updates. The formulas
below use the two old arrays; entries outside their ranges are zero.

If $u,v$ are false twins, their bags have no edges between them.
The new bag is $X_u\cup X_v$, its active set is $T_u\cup T_v$,
and its array is
\begin{equation}\label{eq:qchains-false}
 f_{\mathrm{new}}(k)
 =\sum_{i+j=k}f_u(i)f_v(j).
\end{equation}

If $u,v$ are true twins, the new active set is again
$T_u\cup T_v$. All edges between these two sets are present.
For old states $i,j$, adding $r$ cross-bag matching edges requires
choosing $r$ uncovered vertices on each side and a bijection
between the chosen sets. Therefore
\begin{equation}\label{eq:qchains-true}
 f_{\mathrm{new}}(k)
 =\sum_{\substack{i,j\ge0,\ 0\le r\le\min(i,j)\\
                   i+j-2r=k}}
   f_u(i)f_v(j)\binom{i}{r}\binom{j}{r}r!.
\end{equation}

Finally, suppose $v$ is pendant in $H$, with unique neighbor $u$.
The cross-bag edges form the same complete bipartite block, but
the new active set is only $T_u$. Vertices in $T_v$ have no
neighbors outside the merged bag. Hence every one of the $j$
uncovered vertices on the $v$ side must be matched to a distinct
uncovered vertex on the $u$ side. Writing
$(i)_j=i(i-1)\cdots(i-j+1)$, with $(i)_0=1$, gives
\begin{equation}\label{eq:qchains-pendant}
 f_{\mathrm{new}}(k)
 =\sum_{\substack{i\ge j\ge0\\i-j=k}}
   f_u(i)f_v(j)\binom{i}{j}j!
 =\sum_{\substack{i\ge j\ge0\\i-j=k}}
   f_u(i)f_v(j)(i)_j.
\end{equation}

If a representative $x$ is isolated in $H$, its bag has no
external edges. Multiply the running answer by $f_x(0)$ and
remove $x$. The running answer starts at one, so this convention
also handles the empty graph.

\begin{lemma}\label{lem:qchains-bag-invariant}
The updates~\eqref{eq:qchains-false}--\eqref{eq:qchains-pendant}
preserve~\eqref{eq:qchains-structure} and~\eqref{eq:qchains-state}.
At termination, let $\mathcal F$ be the set of finalized bags. Then
\[
 \PM(G)=\prod_{X_x\in\mathcal F}f_x(0).
\]
\end{lemma}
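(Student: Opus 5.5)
We argue by induction along the pruning sequence, checking that each merge rule preserves the structural invariant~\eqref{eq:qchains-structure} and the state semantics~\eqref{eq:qchains-state}, and then read off $\PM(G)$ from the finalized bags. Initially every bag is a singleton with $T_x=\{x\}$ and $H=G$, so~\eqref{eq:qchains-structure} holds trivially. The state is also correct: the only matching in $G[\{x\}]$ is empty, with one uncovered active vertex, so $f_x(k)=\ind{k=1}$.

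\emph{Structural invariant.} Consider a deletion of $v$ with survivor $u$, and a third representative $y$. For twins, $N_H(v)\setminus\{u\}=N_H(u)\setminus\{v\}$, so $y$ is adjacent in $H$ to both or neither. By~\eqref{eq:qchains-structure}, the edges from $X_u\cup X_v$ to $X_y$ are then exactly $(T_u\cup T_v)\times T_y$ or empty. This matches the new active set $T_u\cup T_v$ and adjacency in $H-v$. For a pendant $v$, the only $H$-neighbor of $v$ is $u$, so no $y\ne u$ sees $X_v$. The new active set $T_u$ therefore correctly describes all external edges, and vertices of $T_v$ become inactive with no outside neighbors. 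Bags not involved are unchanged, and adjacency among the remaining representatives is inherited from $H$.

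\emph{State invariant.} We establish a bijection between matchings of $G[X_u\cup X_v]$ and triples (matching in $G[X_u]$, matching in $G[X_v]$, set of cross-bag edges). By~\eqref{eq:qchains-structure}, the cross-bag edges are empty for false twins and $T_u\times T_v$ otherwise. Cross-bag matching edges must join vertices uncovered by the inner matchings, and those vertices lie in $T_u,T_v$ only if we require the final uncovered set to lie in the new active set.
\begin{itemize}
\item For \emph{false twins}, no cross edges exist and the uncovered counts add, which gives~\eqref{eq:qchains-false}.
\item For \emph{true twins}, choosing $r$ cross edges among $i$ and $j$ uncovered active vertices can be done in $\binom ir\binom jr r!$ ways, leaving $i+j-2r$ uncovered vertices, all in $T_u\cup T_v$; this gives~\eqref{eq:qchains-true}.
\item For a \emph{pendant}, the uncovered set must avoid $X_v$ because $T_v$ is no longer active. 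All $j$ uncovered vertices on the $v$ side must therefore be matched injectively into the $i$ uncovered vertices on the $u$ side, which gives~\eqref{eq:qchains-pendant}.
\end{itemize}
The main point requiring care is that the decomposition is unique. A matching of the merged bag restricts uniquely to the two sides and the cross edges. Conversely, the condition that uncovered vertices lie in the active set is exactly what the old state arrays record, because inner uncovered vertices outside $T_u$ or $T_v$ could never be covered by cross edges, which only join active vertices.

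\emph{Termination.} When a representative $x$ becomes isolated in $H$, \eqref{eq:qchains-structure} gives no edges from $X_x$ to any other bag, now or later. Removing further representatives only deletes them from $H$, and merges never add edges to $X_x$. The bags partition $V(G)$, and at the end every representative is finalized. No edges of $G$ join distinct finalized bags, so perfect matchings of $G$ factor as products of perfect matchings of the $G[X_x]$. A perfect matching of $G[X_x]$ is exactly a matching with empty uncovered set, counted by $f_x(0)$. Hence $\PM(G)=\prod_{X_x\in\mathcal F}f_x(0)$. We expect the only real obstacle to be the bookkeeping in the pendant case, specifically justifying that $T_v$ can be dropped. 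This follows from the invariant because $X_v$ sees only $X_u$ in $H$.
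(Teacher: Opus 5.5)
Your proposal is correct and follows essentially the same argument as the paper. In both, twin adjacency (or the pendant vertex's unique $H$-neighbor) preserves the block structure~\eqref{eq:qchains-structure}. A matching of the merged bag decomposes uniquely into its two restrictions plus cross-bag edges, and those edges can only use old active vertices. Isolated bags then contribute independent factors $f_x(0)$.

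Your extra remark that no edges join a finalized bag to bags formed later is a small detail the paper leaves implicit.
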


\begin{proof}
Both invariants hold initially. For a twin merge, the two
representatives have identical adjacency to every other
representative. Thus replacing their active sets by their union
preserves all cross-bag blocks. The block between the merged
bags is empty for false twins and complete between their active
sets for true twins.

For a pendant merge, $v$ has no neighbor other than $u$ in $H$.
Consequently $T_v$ has no neighbors outside $X_u\cup X_v$,
whereas $T_u$ retains exactly the external neighbors of the
merged bag. Taking $T_u$ as its active set therefore preserves
the structural invariant. In all three cases, the new
representative graph is exactly $H-v$.

A matching in the merged bag decomposes uniquely into its
restrictions to the two old bags and its cross-bag edges.
The nonactive vertices of either old bag cannot participate in
cross-bag edges, so both restrictions satisfy the old state
conditions. For false twins there are no cross-bag edges, giving
\eqref{eq:qchains-false}. For true twins the choice of endpoints
and their bijection gives precisely the factor in
\eqref{eq:qchains-true}. In the pendant case, all uncovered
vertices of $T_v$ must be covered by cross-bag edges, giving
\eqref{eq:qchains-pendant}. These decompositions are reversible
and unique, establishing the counting invariant.

An isolated representative has no edges between its bag and any
other bag, so $f_x(0)$ is the number of perfect matchings on this
closed set of vertices. Its contribution is independent of all
remaining bags. Multiplying these contributions completes the
count for $G$.
\end{proof}

\subsection{A quadratic arithmetic implementation}

Write $a=|X_u|$, $b=|X_v|$, and encode the old arrays by ordinary
generating polynomials
\[
 F(x)=\sum_i f_u(i)x^i,\qquad
 G(x)=\sum_j f_v(j)x^j,
 \qquad \deg F\le a,\quad \deg G\le b.
\]
For false twins, the new polynomial is $FG$, computable by direct
convolution in $O(ab)$ arithmetic operations. For a pendant
merge, enumerate the pairs $i\ge j$ in~\eqref{eq:qchains-pendant},
generating the factors within each row by
\[
 (i)_0=1,\qquad (i)_{j+1}=(i-j)(i)_j.
\]
This also takes $O(ab)$ arithmetic operations.

The true-twin sum~\eqref{eq:qchains-true} is the polynomial
\begin{equation}\label{eq:qchains-star}
 F\star G:=\sum_{r\ge0}\frac{F^{(r)}G^{(r)}}{r!},
 \qquad
 \frac{(i)_r(j)_r}{r!}=\binom ir\binom jr r!.
\end{equation}
The displayed identity identifies each summand with its
cross-bag matching multiplicity. The implementation below
evaluates this operation using integer arithmetic only.

For $j\ge0$, define
\begin{equation}\label{eq:qchains-Q-definition}
 Q_j(x)=F\star x^j
       =\sum_{r=0}^j\binom jr x^{j-r}F^{(r)}(x).
\end{equation}
Then $F\star G=\sum_j f_v(j)Q_j$.
\begin{lemma}\label{lem:qchains-Q-recurrence}
The polynomials $Q_j$ satisfy
\begin{equation}\label{eq:qchains-Q-recurrence}
 Q_0=F,\qquad Q_1=xF+F',\qquad
 Q_{j+1}=xQ_j+Q_j'-jQ_{j-1}\quad(j\ge1).
\end{equation}
For two bags of sizes $a,b\ge1$, $F\star G$ can be evaluated
in $O(ab)$ integer arithmetic operations.
\end{lemma}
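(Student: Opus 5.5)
The recurrence follows by differentiating the definition in \eqref{eq:qchains-Q-definition}; the complexity bound then comes from building the $Q_j$ one at a time and tracking their degrees. The cases $Q_0=F$ and $Q_1=xF+F'$ are immediate from the definition. For the general step, I would avoid manipulating binomial sums directly and instead use a characterization of $\star$ that makes the recurrence transparent. The operation $F\star x^j=\sum_r\binom jr x^{j-r}F^{(r)}$ is exactly $(x+\partial)^j F$, where $x$ acts by multiplication and $\partial=d/dx$. The reason is that $x$ and $\partial$ fail to commute, but since $[\partial,x]=1$, the normally ordered expansion of $(x+\partial)^j$ is not $\sum_r \binom jr x^{j-r}\partial^r$. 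So I would not rely on that identification. I would instead verify the three-term recurrence directly by computing $xQ_j+Q_j'$ from the definition and comparing coefficients.

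Concretely, differentiate the definition to get
\[
Q_j'=\sum_r\binom jr\bigl((j-r)x^{j-r-1}F^{(r)}+x^{j-r}F^{(r+1)}\bigr).
\]
Add $xQ_j=\sum_r\binom jr x^{j-r+1}F^{(r)}$ and collect the terms with the common monomial $x^{j+1-r}F^{(r)}$. Pascal's rule $\binom jr+\binom j{r-1}=\binom{j+1}r$ produces $Q_{j+1}$ from the terms $xQ_j$ and $\sum\binom jr x^{j-r}F^{(r+1)}$. The leftover is $\sum_r\binom jr(j-r)x^{j-r-1}F^{(r)}$. Since $\binom jr(j-r)=j\binom{j-1}r$, this leftover equals $jQ_{j-1}$. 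This yields $Q_{j+1}=xQ_j+Q_j'-jQ_{j-1}$. The only point needing care is the range of the indices at the endpoints $r=0$ and $r=j+1$, which is routine.

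For the cost, note that $\deg Q_j\le\deg F+j\le a+j$. I only need $Q_0,\ldots,Q_{b'}$ with $b'=\deg G\le b$. Each recurrence step forms $xQ_j$, $Q_j'$, and $jQ_{j-1}$ with $O(a+j)$ integer operations, and all coefficients stay integers. Accumulating $f_v(j)Q_j$ into the result costs the same per step. The total is therefore $\sum_{j\le b}O(a+j)=O(ab+b^2)$.

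The main obstacle is removing the $b^2$ term when $b\gg a$. I would handle it through the symmetry of $\star$, using \eqref{eq:qchains-star} with the roles of $F$ and $G$ exchanged, since $F^{(r)}G^{(r)}/r!$ is symmetric in the two factors. Thus I would always run the recurrence on the polynomial of larger degree and iterate $j$ only up to the smaller size. This gives the degree bound $\deg Q_j\le\max(a,b)+j\le 2\max(a,b)$ with $j\le\min(a,b)$, and hence $O(\min(a,b)\cdot\max(a,b))=O(ab)$ operations. Two refinements are needed. First, only coefficients of degree at most $a+b$ are ever needed. Second, I would confirm that the bag sizes $a,b\ge1$ bound the array degrees, since $f_x(k)=0$ for $k>|T_x|$.
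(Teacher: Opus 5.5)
Your proof is correct. The operation-count part matches the paper, and the recurrence is derived by a genuinely different route.

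For the recurrence you work directly with the finite sum. You differentiate $Q_j$, add $xQ_j$, merge the two shifted sums by Pascal's rule to obtain $Q_{j+1}$, and identify the remainder as $jQ_{j-1}$ via $\binom jr(j-r)=j\binom{j-1}r$. All of this checks out.

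The paper instead packages the $Q_j$ into the exponential generating function $\mathcal Q(x,t)=\sum_{j\ge0}Q_jt^j/j!=e^{xt}F(x+t)$, which is just Taylor's formula applied to the definition. It then observes $\partial_t\mathcal Q=(x+\partial_x-t)\mathcal Q$ and compares coefficients of $t^j/j!$. Your route is fully elementary and uses only two binomial identities. The paper's is shorter and shows where the $-jQ_{j-1}$ term comes from: it is the $-t$ correction, the same non-commutativity effect of $x+\partial$ that you noticed.

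For the operation count you do what the paper does. You use the symmetry of $F^{(r)}G^{(r)}/r!$ to run the recurrence on the polynomial from the larger bag for $\min(a,b)+1$ steps. Each step costs $O(a+b)$ integer operations with no divisions, giving $O(ab)$ since $\min(a,b)\ge1$.

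One presentational fix: delete the opening sentence asserting $F\star x^j=(x+\partial)^jF$. It is false; already $(x+\partial)^2F=Q_2+F$. You retract it at once and never use it, so the proof is unaffected, but as written the paragraph contradicts itself.
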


\begin{proof}
Using formal power series, equation~\eqref{eq:qchains-Q-definition}
gives
\[
 \mathcal Q(x,t):=\sum_{j\ge0}Q_j(x)\frac{t^j}{j!}
   =e^{xt}F(x+t),\qquad
 \partial_t\mathcal Q=(x+\partial_x-t)\mathcal Q.
\]
Indeed, $\partial_t(e^{xt}F(x+t))=x\mathcal Q+e^{xt}F'(x+t)$,
whereas $\partial_x\mathcal Q=t\mathcal Q+e^{xt}F'(x+t)$.
Comparing coefficients of $t^j/j!$ proves the recurrence,
including $Q_1=xF+F'$ from the constant coefficient.

Since $F\star G=G\star F$, choose the polynomial operands so
that $a\ge b$. This exchanges the two input arrays when needed,
without changing which graph representative survives the merge.
After this exchange, write $G=\sum_j g_jx^j$.
Compute $Q_0,\ldots,Q_b$ successively and accumulate
$\sum_{j=0}^b g_jQ_j$, with $g_j=0$ beyond its range.
The bounds
\[
 \deg Q_j\le a+j\le a+b,\qquad
 (b+1)(a+b+1)=O(ab)\quad(a\ge b\ge1)
\]
show that the recurrence and accumulation take $O(ab)$ operations.
Only the two most recent $Q$ arrays and the output accumulator
are needed. Differentiation multiplies coefficients by integers;
the recurrence uses no divisions.
\end{proof}

\begin{proof}[Proof of Theorem~\ref{thm:qchains-fp}]
For $n\le1$, return one if $n=0$ and zero otherwise. Assume $n\ge2$.
Construct a pruning sequence in $O(n+m)$ graph processing and
apply the bag updates in its order. Store each bag size and its
coefficient array; explicit active vertex sets are needed only
for the invariant, not for executing the updates. Each deletion
performs one merge or finalizes one isolated bag. The pruning
characterization guarantees termination, and
Lemma~\ref{lem:qchains-bag-invariant} gives the final value.

The merges form a binary forest with the original vertices as
leaves. At a merge with child bags of sizes $a,b$, charge its
$ab$ pairs having one endpoint in each child. Each pair is
charged at most once, at the first merge that places its
endpoints in the same bag. Consequently,
\begin{equation}\label{eq:qchains-pair-charge}
 \sum_{\text{merges}}ab
 =\sum_{X\in\mathcal F}\binom{|X|}{2}
 \le\binom n2.
\end{equation}
Every merge costs $O(ab)$ arithmetic operations by the direct
false-twin and pendant updates and
Lemma~\ref{lem:qchains-Q-recurrence}. Initialization and
finalization cost $O(n)$ operations. Thus the total is $O(n^2)$.

It remains to bound all intermediate integers, including those
in the recurrence with subtraction. Any graph on at most $n$
vertices has at most $(n+1)^n$ matchings: encode a matching by
assigning to each vertex its partner or an unmatched marker.
Hence this bounds every state entry. The coefficient of $x^k$
in $Q_j$ has the same type of counting interpretation: add $j$
independent new active vertices to the first bag, join each to
all its active vertices, require the old nonactive vertices to
be covered, and count matchings with $k$ uncovered active vertices.
Choosing $r$ cross-bag edges gives exactly
\eqref{eq:qchains-Q-definition}. This auxiliary graph has
$a+j\le a+b\le n$ vertices, so
\[
 0\le [x^k]Q_j\le(n+1)^n.
\]
The derivative and scalar multiplication in
\eqref{eq:qchains-Q-recurrence} multiply coefficients by at most
$n$. Thus the magnitudes of its intermediate sums and differences
are at most $(2n+1)(n+1)^n$. Multiplication by $g_j$ and
accumulation over at most $n+1$ indices still give
$n^{O(n)}$ bounds. The other updates use falling factorials
at most $n!$, products of state entries, and at most $(n+1)^2$
summands. The product of finalized contributions counts
matchings on disjoint sets of at most $n$ vertices. All these
integers therefore have $O(n\log n)$ bits, proving the stated
arithmetic bound and polynomial bit complexity.
\end{proof}

By Lemma~\ref{lem:qchains-dh}, this algorithm applies to every
graph in $\mathsf{QChains}$. Hence
$\#\mathrm{PM}(\mathsf{QChains})\in\mathrm{FP}$.

\FloatBarrier


\begin{thebibliography}{99}
\small
\setlength{\itemsep}{3pt}
\setlength{\parskip}{0pt}
\interlinepenalty=10000

\bibitem{AASV21}
Yeganeh Alimohammadi, Nima Anari, Kirankumar Shiragur, and Thuy-Duong Vuong.
\newblock Fractionally log-concave and sector-stable polynomials:
counting planar matchings and more.
\newblock In \emph{Proceedings of the 53rd Annual ACM SIGACT Symposium
on Theory of Computing (STOC 2021)}, pages 433--446. ACM, 2021.
\newblock \href{https://doi.org/10.1145/3406325.3451123}{doi:10.1145/3406325.3451123}.

\bibitem{ADRS23}
Yeganeh Alimohammadi, Persi Diaconis, Mohammad Roghani, and Amin Saberi.
\newblock Sequential importance sampling for estimating expectations over the space
of perfect matchings.
\newblock \emph{The Annals of Applied Probability}, 33(2):999--1033, 2023.
\newblock \href{https://doi.org/10.1214/22-AAP1834}{doi:10.1214/22-AAP1834}.

\bibitem{Backens21}
Miriam Backens.
\newblock A full dichotomy for Holant$^c$, inspired by quantum computation.
\newblock \emph{SIAM Journal on Computing}, 50(6):1739--1799, 2021.
\newblock \href{https://doi.org/10.1137/20M1311557}{doi:10.1137/20M1311557}.

\bibitem{BandeltMulder86}
Hans-J\"urgen Bandelt and Henry Martyn Mulder.
\newblock Distance-hereditary graphs.
\newblock \emph{Journal of Combinatorial Theory, Series B},
41(2):182--208, 1986.
\newblock \href{https://doi.org/10.1016/0095-8956(86)90043-2}{doi:10.1016/0095-8956(86)90043-2}.

\bibitem{Barvinok17}
Alexander Barvinok.
\newblock Approximating permanents and hafnians.
\newblock \emph{Discrete Analysis}, 2017:2, 34 pp., 2017.
\newblock \href{https://arxiv.org/abs/1601.07518}{arXiv:1601.07518}.

\bibitem{BSVV08}
Ivona Bez\'akov\'a, Daniel \v{S}tefankovi\v{c}, Vijay~V. Vazirani,
and Eric Vigoda.
\newblock Accelerating simulated annealing for the permanent and combinatorial
counting problems.
\newblock \emph{SIAM Journal on Computing}, 37(5):1429--1454, 2008.
\newblock \href{https://doi.org/10.1137/050644033}{doi:10.1137/050644033}.

\bibitem{CaiLuXia18}
Jin-Yi Cai, Pinyan Lu, and Mingji Xia.
\newblock Dichotomy for real Holant$^c$ problems.
\newblock In \emph{Proceedings of the Twenty-Ninth Annual ACM-SIAM
Symposium on Discrete Algorithms (SODA 2018)}, pages 1802--1821.
SIAM, 2018.
\newblock \href{https://doi.org/10.1137/1.9781611975031.118}{doi:10.1137/1.9781611975031.118}.

\bibitem{CorneilEtAl95}
Derek~G. Corneil, Hiryoung Kim, Sridhar Natarajan, Stephan Olariu,
and Alan~P. Sprague.
\newblock Simple linear time recognition of unit interval graphs.
\newblock \emph{Information Processing Letters}, 55(2):99--104, 1995.
\newblock \href{https://doi.org/10.1016/0020-0190(95)00046-F}{doi:10.1016/0020-0190(95)00046-F}.

\bibitem{CurticapeanMarx16}
Radu Curticapean and D\'aniel Marx.
\newblock Tight conditional lower bounds for counting perfect matchings
on graphs of bounded treewidth, cliquewidth, and genus.
\newblock In \emph{Proceedings of the Twenty-Seventh Annual ACM-SIAM
Symposium on Discrete Algorithms (SODA 2016)}, pages 1650--1669.
SIAM, 2016.
\newblock \href{https://doi.org/10.1137/1.9781611974331.ch113}{doi:10.1137/1.9781611974331.ch113}.

\bibitem{DGH01}
Persi Diaconis, Ronald Graham, and Susan~P. Holmes.
\newblock Statistical problems involving permutations with restricted positions.
\newblock In \emph{State of the Art in Probability and Statistics}, volume~36 of
\newblock \emph{IMS Lecture Notes--Monograph Series}, pages 195--222.
Institute of Mathematical Statistics, 2001.
\newblock \href{https://doi.org/10.1214/lnms/1215090070}{doi:10.1214/lnms/1215090070}.

\bibitem{DK21}
Persi Diaconis and Brett Kolesnik.
\newblock Randomized sequential importance sampling for estimating the number of
perfect matchings in bipartite graphs.
\newblock \emph{Advances in Applied Mathematics}, 131, Article~102247, 2021.
\newblock \href{https://doi.org/10.1016/j.aam.2021.102247}{doi:10.1016/j.aam.2021.102247}.

\bibitem{DJM17}
Martin Dyer, Mark Jerrum, and Haiko M\"uller.
\newblock On the switch Markov chain for perfect matchings.
\newblock \emph{Journal of the ACM}, 64(2), Article~12, 2017.
\newblock \href{https://doi.org/10.1145/2822322}{doi:10.1145/2822322}.

\bibitem{DM15}
Martin Dyer and Haiko M\"uller.
\newblock Graph classes and the switch Markov chain for matchings.
\newblock \emph{Annales de la Facult\'e des Sciences de Toulouse.
Math\'ematiques}, 24(4):885--933, 2015.
\newblock \href{https://doi.org/10.5802/afst.1469}{doi:10.5802/afst.1469}.

\bibitem{DM19}
Martin Dyer and Haiko M\"uller.
\newblock Counting perfect matchings and the switch chain.
\newblock \emph{SIAM Journal on Discrete Mathematics},
33(3):1146--1174, 2019.
\newblock \href{https://doi.org/10.1137/18M1172910}{doi:10.1137/18M1172910}.

\bibitem{DMQ19}
Martin Dyer and Haiko M\"uller.
\newblock Quasimonotone graphs.
\newblock \emph{Discrete Applied Mathematics}, 271:25--48, 2019.
\newblock \href{https://doi.org/10.1016/j.dam.2019.08.006}{doi:10.1016/j.dam.2019.08.006}.

\bibitem{ENO22}
Farzam Ebrahimnejad, Ansh Nagda, and Shayan Oveis Gharan.
\newblock Counting and sampling perfect matchings in regular expanding
non-bipartite graphs.
\newblock In \emph{13th Innovations in Theoretical Computer Science
Conference (ITCS 2022)}, volume 215 of \emph{LIPIcs},
pages 61:1--61:12. Schloss Dagstuhl, 2022.
\newblock \href{https://doi.org/10.4230/LIPIcs.ITCS.2022.61}{doi:10.4230/LIPIcs.ITCS.2022.61}.

\bibitem{ElMaaloulyWang22}
Nicolas El Maalouly and Yanheng Wang.
\newblock Counting perfect matchings in dense graphs is hard.
\newblock Preprint, arXiv:2210.15014, 2022.
\newblock \href{https://doi.org/10.48550/arXiv.2210.15014}{doi:10.48550/arXiv.2210.15014}.

\bibitem{FulkersonGross65}
D.~R. Fulkerson and O.~A. Gross.
\newblock Incidence matrices and interval graphs.
\newblock \emph{Pacific Journal of Mathematics}, 15(3):835--855, 1965.
\newblock \href{https://doi.org/10.2140/pjm.1965.15.835}{doi:10.2140/pjm.1965.15.835}.

\bibitem{GalluccioLoebl99}
Anna Galluccio and Martin Loebl.
\newblock On the theory of Pfaffian orientations. I. Perfect matchings
and permanents.
\newblock \emph{The Electronic Journal of Combinatorics},
6(1):R6, 1999.
\newblock \href{https://doi.org/10.37236/1438}{doi:10.37236/1438}.

\bibitem{GolumbicRotics00}
Martin Charles Golumbic and Udi Rotics.
\newblock On the clique-width of some perfect graph classes.
\newblock \emph{International Journal of Foundations of Computer Science},
11(3):423--443, 2000.
\newblock \href{https://doi.org/10.1142/S0129054100000260}{doi:10.1142/S0129054100000260}.

\bibitem{HuangLu16}
Sangxia Huang and Pinyan Lu.
\newblock A dichotomy for real weighted Holant problems.
\newblock \emph{Computational Complexity}, 25(1):255--304, 2016.
\newblock \href{https://doi.org/10.1007/s00037-015-0118-3}{doi:10.1007/s00037-015-0118-3}.

\bibitem{JS89}
Mark Jerrum and Alistair Sinclair.
\newblock Approximating the permanent.
\newblock \emph{SIAM Journal on Computing}, 18(6):1149--1178, 1989.
\newblock \href{https://doi.org/10.1137/0218077}{doi:10.1137/0218077}.

\bibitem{JSV04}
Mark Jerrum, Alistair Sinclair, and Eric Vigoda.
\newblock A polynomial-time approximation algorithm for the permanent of a matrix
with nonnegative entries.
\newblock \emph{Journal of the ACM}, 51(4):671--697, 2004.
\newblock \href{https://doi.org/10.1145/1008731.1008738}{doi:10.1145/1008731.1008738}.

\bibitem{JVV86}
Mark~R. Jerrum, Leslie~G. Valiant, and Vijay~V. Vazirani.
\newblock Random generation of combinatorial structures from a uniform distribution.
\newblock \emph{Theoretical Computer Science}, 43:169--188, 1986.
\newblock \href{https://doi.org/10.1016/0304-3975(86)90174-X}{doi:10.1016/0304-3975(86)90174-X}.

\bibitem{JozsaMiyake08}
Richard Jozsa and Akimasa Miyake.
\newblock Matchgates and classical simulation of quantum circuits.
\newblock \emph{Proceedings of the Royal Society A}, 464:3089--3106, 2008.
\newblock \href{https://doi.org/10.1098/rspa.2008.0189}{doi:10.1098/rspa.2008.0189}.

\bibitem{Jumadildayev25}
Medet Jumadildayev.
\newblock Duality relations of graph polynomials.
\newblock Preprint, arXiv:2512.15351, 2025.
\newblock \url{https://arxiv.org/abs/2512.15351}.

\bibitem{Kasteleyn61}
P.~W. Kasteleyn.
\newblock The statistics of dimers on a lattice: I. The number of
dimer arrangements on a quadratic lattice.
\newblock \emph{Physica}, 27(12):1209--1225, 1961.
\newblock \href{https://doi.org/10.1016/0031-8914(61)90063-5}{doi:10.1016/0031-8914(61)90063-5}.

\bibitem{KOU11}
Shuji Kijima, Yoshio Okamoto, and Takeaki Uno.
\newblock Dominating set counting in graph classes.
\newblock In Bin Fu and Ding-Zhu Du, editors,
\emph{Computing and Combinatorics (COCOON 2011)},
volume 6842 of \emph{Lecture Notes in Computer Science},
pages 13--24. Springer, 2011.
\newblock \href{https://doi.org/10.1007/978-3-642-22685-4_2}{doi:10.1007/978-3-642-22685-4\_2}.

\bibitem{Li26}
Baitian Li.
\newblock Counting perfect matchings and Hamiltonian cycles faster.
\newblock In \emph{53rd International Colloquium on Automata, Languages,
and Programming (ICALP 2026)}, volume 374 of
\emph{Leibniz International Proceedings in Informatics}, pages 138:1--138:16.
Schloss Dagstuhl--Leibniz-Zentrum f\"ur Informatik, 2026.
\newblock \href{https://doi.org/10.4230/LIPIcs.ICALP.2026.138}{doi:10.4230/LIPIcs.ICALP.2026.138}.

\bibitem{Mader67}
W. Mader.
\newblock Homomorphieeigenschaften und mittlere Kantendichte von Graphen.
\newblock \emph{Mathematische Annalen}, 174:265--268, 1967.
\newblock \href{https://doi.org/10.1007/BF01364272}{doi:10.1007/BF01364272}.

\bibitem{MRAG06}
J.~A. Makowsky, Udi Rotics, Ilya Averbouch, and Benny Godlin.
\newblock Computing graph polynomials on graphs of bounded clique-width.
\newblock In Fedor V. Fomin, editor,
\emph{Graph-Theoretic Concepts in Computer Science (WG 2006)},
volume 4271 of \emph{Lecture Notes in Computer Science},
pages 191--204. Springer, 2006.
\newblock \href{https://doi.org/10.1007/11917496_18}{doi:10.1007/11917496\_18}.

\bibitem{MarkovShi08}
Igor~L. Markov and Yaoyun Shi.
\newblock Simulating quantum computation by contracting tensor networks.
\newblock \emph{SIAM Journal on Computing}, 38(3):963--981, 2008.
\newblock \href{https://doi.org/10.1137/050644756}{doi:10.1137/050644756}.

\bibitem{OUU10}
Yoshio Okamoto, Ryuhei Uehara, and Takeaki Uno.
\newblock Counting the number of matchings in chordal and chordal
bipartite graph classes.
\newblock In Christophe Paul and Michel Habib, editors,
\emph{Graph-Theoretic Concepts in Computer Science (WG 2009)},
volume 5911 of \emph{Lecture Notes in Computer Science},
pages 296--307. Springer, 2010.
\newblock \href{https://doi.org/10.1007/978-3-642-11409-0_26}{doi:10.1007/978-3-642-11409-0\_26}.

\bibitem{PB83}
J.~Scott Provan and Michael~O. Ball.
\newblock The complexity of counting cuts and of computing the
probability that a graph is connected.
\newblock \emph{SIAM Journal on Computing}, 12(4):777--788, 1983.
\newblock \href{https://doi.org/10.1137/0212053}{doi:10.1137/0212053}.

\bibitem{SJ89}
Alistair Sinclair and Mark Jerrum.
\newblock Approximate counting, uniform generation and rapidly mixing
Markov chains.
\newblock \emph{Information and Computation}, 82(1):93--133, 1989.
\newblock \href{https://doi.org/10.1016/0890-5401(89)90067-9}{doi:10.1016/0890-5401(89)90067-9}.

\bibitem{SVW18}
Daniel \v{S}tefankovi\v{c}, Eric Vigoda, and John Wilmes.
\newblock On counting perfect matchings in general graphs.
\newblock In \emph{LATIN 2018: Theoretical Informatics}, volume 10807
of \emph{Lecture Notes in Computer Science}, pages 873--885.
Springer, 2018.
\newblock \href{https://arxiv.org/abs/1712.07504}{arXiv:1712.07504}.

\bibitem{StraubThieraufWagner16}
Simon Straub, Thomas Thierauf, and Fabian Wagner.
\newblock Counting the number of perfect matchings in $K_5$-free graphs.
\newblock \emph{Theory of Computing Systems}, 59(3):416--439, 2016.
\newblock \href{https://doi.org/10.1007/s00224-015-9645-1}{doi:10.1007/s00224-015-9645-1}.

\bibitem{TemperleyFisher61}
H.~N.~V. Temperley and Michael~E. Fisher.
\newblock Dimer problem in statistical mechanics---an exact result.
\newblock \emph{Philosophical Magazine}, 6(68):1061--1063, 1961.
\newblock \href{https://doi.org/10.1080/14786436108243366}{doi:10.1080/14786436108243366}.

\bibitem{Tesler00}
Glenn Tesler.
\newblock Matchings in graphs on non-orientable surfaces.
\newblock \emph{Journal of Combinatorial Theory, Series B},
78(2):198--231, 2000.
\newblock \href{https://doi.org/10.1006/jctb.1999.1941}{doi:10.1006/jctb.1999.1941}.

\bibitem{ThilikosWiederrecht24}
Dimitrios~M. Thilikos and Sebastian Wiederrecht.
\newblock Killing a vortex.
\newblock \emph{Journal of the ACM}, 71(4), Article~27, 2024.
\newblock \href{https://doi.org/10.1145/3664648}{doi:10.1145/3664648}.

\bibitem{UeharaUno06}
Ryuhei Uehara and Takeaki Uno.
\newblock Canonical tree representation of distance hereditary graphs
with applications.
\newblock \emph{IEICE Technical Report}, COMP2005-61, pages 31--36, 2006.
\newblock Full version, March 6, 2006:
\url{https://www.jaist.ac.jp/~uehara/pdf/dh.pdf}.

\bibitem{Valiant79}
Leslie~G. Valiant.
\newblock The complexity of computing the permanent.
\newblock \emph{Theoretical Computer Science}, 8(2):189--201, 1979.
\newblock \href{https://doi.org/10.1016/0304-3975(79)90044-6}{doi:10.1016/0304-3975(79)90044-6}.

\bibitem{Valiant02}
Leslie~G. Valiant.
\newblock Quantum circuits that can be simulated classically in polynomial time.
\newblock \emph{SIAM Journal on Computing}, 31(4):1229--1254, 2002.
\newblock \href{https://doi.org/10.1137/S0097539700377025}{doi:10.1137/S0097539700377025}.

\bibitem{Vazirani89}
Vijay~V. Vazirani.
\newblock NC algorithms for computing the number of perfect matchings
in $K_{3,3}$-free graphs and related problems.
\newblock \emph{Information and Computation}, 80(2):152--164, 1989.
\newblock \href{https://doi.org/10.1016/0890-5401(89)90017-5}{doi:10.1016/0890-5401(89)90017-5}.

\end{thebibliography}
\end{document}